  \theoremstyle{plain}
  \newtheorem{theorem}{Theorem}
  \newtheorem{lemma}[theorem]{Lemma}  
  \newtheorem{corollary}[theorem]{Corollary}
  \theoremstyle{definition}
  \newtheorem{definition}{Definition}
  \newtheorem{example}{Example}
\newcommand{\cO}{\mathcal{O}}
\newcommand{\Oh}{\mathcal{O}}
\def\dd{\mathinner{.\,.}}
\newcommand{\Occ}{|\text{Occ}|}
\newcommand{\OccSet}{\text{Occ}}
\newcommand{\Count}{\text{Count}}
\newcommand{\Tpref}{\mathcal{T}_{\text{pref}}}
\newcommand{\Tsuff}{\mathcal{T}_{\text{suff}}}
\newcommand{\Ipref}{\mathcal{I}_{\text{pref}}}
\newcommand{\Isuff}{\mathcal{I}_{\text{suff}}}
\newcommand{\diff}{\text{Diff}}
\newcommand{\INDEXING}{\textsc{$\ell$-Weighted indexing}\xspace}
\newcommand{\EFM}{\textsc{EFM}\xspace}
\newcommand{\SARS}{\textsc{SARS}\xspace}
\newcommand{\RS}{\textsc{RSSI}\xspace}
\newcommand{\Z}{\textsc{HUMAN}\xspace}
\newcommand{\OUTPUT}{|\text{Occ}|\xspace}
\newcommand{\WST}{\textsf{WST}\xspace}
\newcommand{\MWST}{\textsf{MWST}\xspace}
\newcommand{\MWSTG}{\textsf{MWST-G}\xspace}
\newcommand{\WSA}{\textsf{WSA}\xspace}
\newcommand{\MWSA}{\textsf{MWSA}\xspace}
\newcommand{\MWSAG}{\textsf{MWSA-G}\xspace}
\newcommand{\MWSTSE}{\textsf{MWST-SE}\xspace}
\title{Space-Efficient Indexes for Uncertain Strings}
\author[1]{Esteban Gabory}
\author[2]{Chang Liu}
\author[3]{Grigorios Loukides}
\author[1,4]{Solon P.\ Pissis}
\author[1]{Wiktor Zuba}
\affil[1]{CWI, Amsterdam, The Netherlands}
\affil[2]{Zhejiang University, Zhejiang, China}
\affil[3]{King's College London, London, UK}
\affil[4]{Vrije Universiteit, Amsterdam, The Netherlands}
\date{\today}
\begin{document}

\maketitle

\begin{abstract}
Strings in the real world are often encoded with some level of uncertainty,
for example, due to: unreliable data measurements; flexible sequence modeling; or noise introduced for privacy protection. 
In the \emph{character-level uncertainty model}, an \emph{uncertain string} 
$X$ of length $n$ on an alphabet $\Sigma$ is a sequence of $n$ probability distributions over $\Sigma$. Given an uncertain string $X$ and a weight threshold $\frac{1}{z}\in(0,1]$, we say that pattern $P$ occurs in $X$ at position $i$, if the product of probabilities of the letters of $P$ at positions $i,\ldots,i+|P|-1$ is at least $\frac{1}{z}$. While indexing standard strings for online pattern searches can be performed in linear time and space, 
indexing uncertain strings is much more challenging. 
Specifically, the state-of-the-art index for uncertain strings has $\cO(nz)$ size, requires $\cO(nz)$ time and $\cO(nz)$ space to be constructed, and answers pattern matching queries in the optimal $\cO(m+\OUTPUT)$ time, where $m$ is the length of $P$ and $\OUTPUT$ is the total number of occurrences of $P$ in $X$. For large $n$ and (moderate) $z$ values, this index is completely impractical to construct, which outweighs the benefit of the supported optimal pattern matching queries. We were thus motivated to design a space-efficient index at the expense of slower yet competitive pattern matching queries. We show that when we have at hand a lower bound $\ell$ on the length of the supported pattern queries, as is often the case in real-world applications, we can slash the index size \emph{and} the construction space roughly by $\ell$. 
In particular, we propose an index of $\cO(\frac{nz}{\ell}\log z)$ expected size, which can be constructed using $\cO(\frac{nz}{\ell}\log z)$ expected space, and  supports very fast pattern matching queries in expectation, for patterns of length $m\geq \ell$. 
We have implemented and evaluated several versions of our index. 
The best-performing version of our index is up to \emph{two orders of magnitude smaller} than the state of the art in terms of \emph{both} index size and construction space, while offering faster or very competitive query and construction times.  
\end{abstract}

\section{Introduction}

A large portion of the data feeding real-world database systems, including bioinformatics systems~\cite{DBLP:journals/nar/Rodriguez-TomeSCF96}, Enterprise Resource Planning (ERP) systems~\cite{DBLP:conf/edbt/0002RF14}, or Business Intelligence (BI) systems~\cite{DBLP:conf/sigmod/VogelsgesangHFK18}, is textual; namely, 
\emph{sequences} of letters over some alphabet (also known as \emph{strings}).
This happens because strings can easily encode data arising from different sources such as: sequences of nucleotides read by DNA sequencers (e.g., short or long DNA reads)~\cite{NGS}; natural language text generated by humans (e.g., description or comment fields)~\cite{DBLP:books/lib/JurafskyM09}; id's generated by software (e.g., URLs, email addresses, or IP addresses)~\cite{DBLP:journals/pvldb/Boncz0L20}; or discretized measurements generated by sensors (e.g., RSSI, EEG or EMG data)~\cite{DBLP:books/sp/mmsd2013,error_RSSI}. 

Given the ever increasing size of string data, 
it is crucial to represent them in a concise form but also to \emph{simultaneously} allow efficient pattern searches. This goal is formalized by the classic \emph{text indexing} problem~\cite{DBLP:books/daglib/0020103}: preprocess a string $T$ of length $n$ over an alphabet $\Sigma$ 
of size $\sigma$, known as the \emph{text}, into a data structure that supports pattern matching queries; i.e., report the set $\OccSet$ of all $\OUTPUT$ positions in $T$ where an occurrence of a string $P$, known as the \emph{pattern}, begins.

In text indexing we are interested in four efficiency measures~\cite{DBLP:journals/pvldb/AyadLP23,DBLP:conf/soda/KempaK23}: \textbf{(i)} How much space does the final index occupy for a string $T$ of length $n$ (the \emph{index size})? \textbf{(ii)} How fast can we answer a query $P$ of length $m$ (the \emph{query time})? \textbf{(iii)} How much working space do we need to construct the index (the \emph{construction space})? \textbf{(iv)} How fast can we construct the index (the \emph{construction time})? The classic \emph{suffix tree} index~\cite{DBLP:conf/focs/Weiner73} has size linear in $n$ and optimal query time  $\cO(m+\OUTPUT)$; the  construction time and space are also linear in $n$~\cite{DBLP:conf/focs/Farach97}. 

In the real world, strings are often encoded with some level of uncertainty; for example, due to: (i) imprecise, incomplete or unreliable data measurements, such as sensor measurements, RFID measurements or trajectory measurements~\cite{DBLP:series/ads/Aggarwal09}; (ii) deliberate flexible sequence modeling, such as the representation of a \emph{pangenome}, that is, a collection of closely-related genomes to be analyzed together~\cite{PanGenomeConsortium18}; or (iii) the existence of confidential information in a dataset which has been distorted deliberately for privacy protection~\cite{DBLP:conf/icde/Aggarwal08}.

While there are many practical solutions for text indexing~\cite{DBLP:journals/siamcomp/ManberM93,DBLP:journals/jacm/FerraginaM05,DBLP:journals/spe/GrabowskiR17,DBLP:journals/jacm/GagieNP20,DBLP:journals/pvldb/AyadLP23} and also for answering different types of queries on various types of uncertain data (see Section~\ref{sec:related}), \emph{practical indexing schemes} for uncertain strings are rather undeveloped. In response, our work makes an important step towards developing such practical space-efficient indexes.

\subsection{Our Data Model and Motivation}

We use the standard \emph{character-level uncertainty model}~\cite{DBLP:conf/sigmod/JestesLYY10}. An \emph{uncertain string} (or \emph{weighted string}) 
$X$ of length $n$ on an alphabet $\Sigma$ is a sequence of $n$ sets. Every $X[i]$, for all $i\in[1,n]$, is a set of $|\Sigma|$ ordered pairs $(\alpha,p_i(\alpha))$, where $\alpha\in \Sigma$ and $p_i(\alpha)$ is the probability that letter $\alpha$ occurs at position $i$.

\begin{example}\label{ex:weighted_string}
The table below shows a weighted string $X$, with $n=6$ and $\Sigma=\{\texttt{A},\texttt{B}\}$, represented as a $|\Sigma| \times n$ matrix.
$$\begin{array}{c|c|c|c|c|c|c}
     & 1 & 2 & 3 & 4 & 5 & 6 \\
\hline
    \texttt{A} & 1 & 1/2 & \textbf{3/4} & 4/5 & \textbf{1/2} & 1/4 \\
    \texttt{B} & 0 & 1/2 & 1/4 & \textbf{1/5} & 1/2 & 3/4
\end{array}$$    
\end{example}

The data model of~\cite{DBLP:conf/sigmod/JestesLYY10} has been employed by many works~\cite{DBLP:conf/cpm/AmirCIKZ06,DBLP:journals/bioinformatics/KorhonenMPRU09,DBLP:journals/tcbb/PizziRU11,DBLP:journals/pvldb/GeL11,DBLP:conf/sdm/LiBKP14,DBLP:conf/isaac/KociumakaPR16}.
In bioinformatics, for example, weighted strings are known as \emph{position weight matrices}~\cite{PWM}.
As in these works, we define the \emph{occurrence probability} of pattern $P=\texttt{ABA}$ at position $3$ in $X$ of \cref{ex:weighted_string} as $3/4\times 1/5\times 1/2 = 3/40$ (shown in bold). 

\noindent{\bf Weighted Indexing.}~The \emph{Weighted Indexing} problem is defined as follows~\cite{DBLP:journals/tcs/AmirCIKZ08}: 
Given a weighted string $X$ of length $n$ on an alphabet $\Sigma$ of size $\sigma$ and a weight threshold $\frac{1}{z} \in (0,1]$, preprocess $X$ into a \emph{compact} data structure (\emph{the index}) that supports \emph{efficient} pattern matching queries; i.e., report all positions in $X$ where $P$ occurs with probability at least $\frac{1}{z}$.

\noindent{\bf State of the Art.}~The indexing problem on weighted strings has attracted a lot of attention by the theory community~\cite{DBLP:journals/fuin/IliopoulosMPPTT06,DBLP:journals/tcs/AmirCIKZ08,DBLP:conf/edbt/BiswasPTS16,DBLP:conf/cpm/BartonKPR16,DBLP:conf/edbt/BiswasPTS16,DBLP:journals/iandc/BartonK0PR20,DBLP:journals/jea/Charalampopoulos20}, culminating in the following bounds~\cite{DBLP:conf/cpm/BartonKPR16,DBLP:journals/iandc/BartonK0PR20}:
there exists an index of $\cO(nz)$ size supporting optimal $\cO(m+\Occ)$-time queries for any pattern of length $m$; it can be constructed in $\cO(nz)$ time using $\cO(nz)$ space. We call this index the \emph{weighted suffix tree} (\WST).

\noindent{\bf Our Motivation.}~Although these bounds are very appealing from a theoretical perspective, from a practical perspective, $\cO(nz)$ size and construction space are \emph{prohibitive} for large-scale datasets. Say we have an input weighted string of length $n=10^9$ bytes, that $z=100$, and that the constant in $\cO(nz)$ is something small, like $20$ bytes. Then we need around 2TBs of RAM to store the index for an input of 1GB! \emph{We were thus motivated to seek space-query time trade-offs for indexing weighted strings.}
In particular, we seek a parameterized version of \emph{Weighted Indexing} for which we can have indexes of size smaller than $\cO(nz)$. Ideally, we would also like to construct these smaller indexes using less than $\cO(nz)$ space. A good such input parameter is a lower bound $\ell$ on the length $m$ of any queried pattern. It is arguably a reasonable assumption to know $\ell$ in advance.
For instance, in bioinformatics~\cite{Winnowmap2,Wenger2019,Logsdon2020}, the length of sequencing reads (patterns) ranges from a few hundreds to 30,000~\cite{Logsdon2020}. Even when at most $k$ errors must be accommodated for matching, at least one out of $k+1$ fragments must be matched exactly. In natural language processing, the queried patterns can also be long~\cite{TOIS17}. Examples of such patterns are queries in question answering systems~\cite{longqueries0}, \emph{description queries} in TREC datasets~\cite{longqueries1,longqueries2}, and  representative phrases in documents~\cite{representativephrases1}. Similarly, a query pattern can be long when it  encodes an entire document (e.g., a webpage in the context of deduplication~\cite{webpageduplicate}), or  machine-generated messages~\cite{machinegenerated}.

\subsection{Our Techniques and Results}

\begin{figure}[t]
     \centering
     \includegraphics[width=0.7\textwidth]{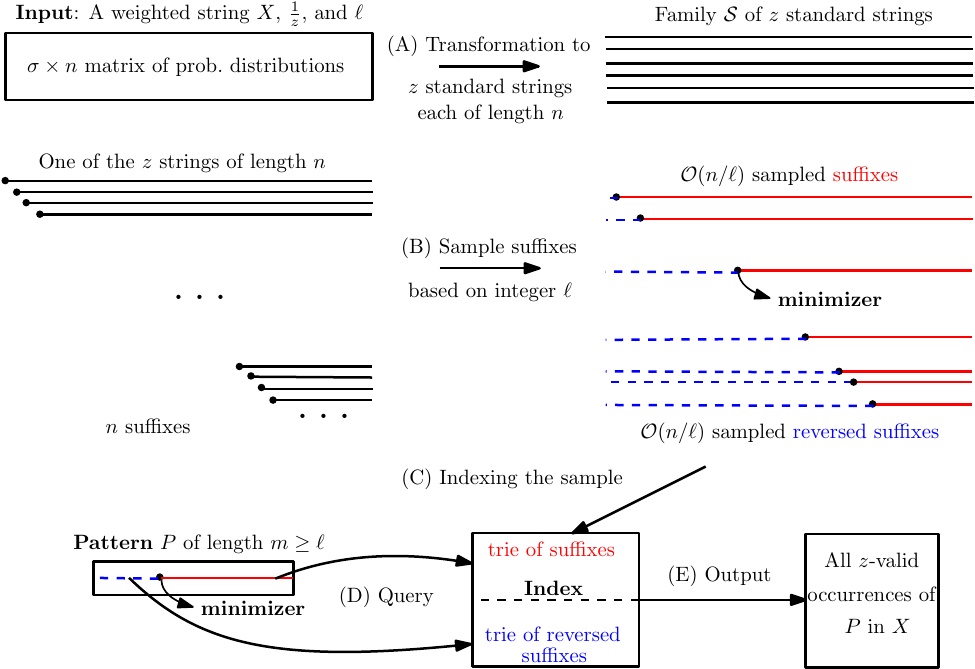}
     \caption{An informal overview of our techniques: Given a weighted string $X$ of length $n$ over an alphabet of size $\sigma$, a weight threshold $\frac{1}{z}$, and an integer $\ell$, we \textbf{(A)} use the algorithm from~\cite{DBLP:conf/cpm/BartonKPR16} to construct a family $\mathcal{S}$ of $z$  
     standard strings, each of length $n$. \textbf{(B)} For each such string, we consider all of its $n$ suffixes and sample them for the given integer $\ell$ using the minimizers mechanism~\cite{DBLP:journals/bioinformatics/RobertsHHMY04,DBLP:conf/sigmod/SchleimerWA03}. These suffixes imply a set of $\cO(n/\ell)$ suffixes and a set of $\cO(n/\ell)$ reversed suffixes in expectation. \textbf{(C)} We then index these suffixes in two suffix trees~\cite{DBLP:conf/focs/Weiner73}, which we link using a 2D grid, so as to answer pattern matching queries for patterns of length at least $\ell$. \textbf{(D)} When such a pattern $P$ of length $m$ is given, we find its leftmost minimizer, which implies a suffix and a reversed suffix of $P$, and query those using our index. Our index efficiently merges the partial results (i.e., occurrences of suffixes and reversed suffixes); and \textbf{(E)} outputs all $z$-valid occurrences of $P$ in $X$. The size of the resulting index is $\cO(\frac{nz}{\ell}\log z)$. The extra multiplicative $\log z$ factor comes from our representation of the edge labels in the suffix trees.}
     \label{fig:framework}
\end{figure}

Let us first recall $\WST$, the state-of-the-art index. In~\cite{DBLP:conf/cpm/BartonKPR16}, Barton et al. showed that, for any weight threshold $\frac{1}{z}$, a weighted string $X$ of length $n$ can be transformed into a family $\mathcal{S}$ of $z$ standard strings (each of length $n$), so that a pattern $P$ occurs in $X$ at position $i$ with probability $p$ only if $P$ occurs at position $i$ in $\lfloor p\cdot z \rfloor$ strings from $\mathcal{S}$. The authors have also shown how to index $\mathcal{S}$ using (a modified version of) suffix trees~\cite{DBLP:conf/focs/Weiner73} resulting in $\WST$: an index of total size $\cO(nz)$ supporting pattern matching queries in the optimal $\cO(m+\OUTPUT)$ time. A more space-efficient, array-based version was also presented in~\cite{DBLP:journals/jea/Charalampopoulos20}; it is known as the \emph{weighted suffix array} (\WSA).

Here we make the following three main contributions (see \cref{fig:framework} for an informal overview of our techniques):

\begin{enumerate}
    \item We show how to 
\underline{slash the size} of both \WST and \WSA roughly by $\ell$, while still supporting very fast queries in expectation for any pattern $P$ of length $m\geq \ell$, by 
combining the \emph{minimizers} sampling mechanism~\cite{DBLP:journals/bioinformatics/RobertsHHMY04,DBLP:conf/sigmod/SchleimerWA03}, \emph{suffix trees}~\cite{DBLP:conf/focs/Weiner73}, several combinatorial and probabilistic arguments, and a \emph{geometric data structure} (2D grid)~\cite{DBLP:conf/latin/MakinenN06}. The size of the resulting index is $\cO(\frac{nz}{\ell}\log z)$. The extra multiplicative $\log z$ factor comes from a new way to encode the edge labels in the suffix trees. The main novelty of our approach is the combination of minimizers and the new edge encoding allowing us to delete the $z$ strings after the construction, thus resulting in a significantly smaller index size for large $\ell$ values.

   \item Our technique still requires us to first construct the family $\mathcal{S}$ of the $z$ strings, which in any case gives an index with $\cO(nz)$ construction space. The challenge is how to construct the index \emph{without}
constructing the $z$ strings. To achieve this, we develop a fast, highly non-trivial, algorithm for constructing the \emph{same} index \emph{without generating $\mathcal{S}$ explicitly}. Our new algorithm samples an implicit representation of $\mathcal{S}$ using minimizers outputting the final index directly. The construction space of the resulting index \underline{matches its size}: $\cO(\frac{nz}{\ell}\log z)$. This is the most technically involved result of our paper.

    \item Following the different paradigms of suffix trees~\cite{DBLP:conf/focs/Weiner73} and suffix arrays~\cite{DBLP:journals/jacm/KarkkainenSB06} in the classic setting of standard strings, we have implemented tree and array-based versions of our index underlying Contributions 1 and 2. 
    The results show that our indexes are up to \emph{two orders of magnitude smaller} than the state of the art in terms of \emph{both} index size and construction space. For instance, for indexing $1,432$ bacterial samples, with $\ell=1024$ and $z=128$, which are reasonable in applications, our space-efficient index has size $640$MBs and needs only $772$MBs of memory to be constructed, while \WSA has size $7$GBs and needs $32$GBs of memory to be constructed and \WST has size $126$GBs and needs $241$GBs of memory to be constructed! Our results also show that our array-based indexes outperform the tree-based ones, offering \emph{very competitive query times and construction times} to those of the state of the art. 
\end{enumerate}

\subsection{Paper Organization}

In Section~\ref{sec:prel}, we provide the necessary background. 
In Section~\ref{sec:index}, we present our index.
In Section~\ref{sec:space-efficient-construction}, we present the space-efficient algorithm for constructing our index.
In Section~\ref{sec:fast_pm}, we present a more practical algorithm for querying our index. 
In Section~\ref{sec:related}, we discuss related work.
In Section~\ref{sec:exp}, we provide an extensive experimental evaluation of our algorithms.
We conclude in Section~\ref{sec:discussion} with a discussion on limitations and future work.

\section{Preliminaries and Problem Definition} \label{sec:prel}
\noindent{\bf Strings.}~An \emph{alphabet} $\Sigma$ of size $\sigma=|\Sigma|$ is a nonempty set of elements called \emph{letters}. By $\Sigma^k$ we denote the set of all length-$k$ strings over $\Sigma$. By $\varepsilon$ we denote the \emph{empty string} of length $0$.
For a \emph{string} $S=S[1] \cdots S[n]$ over $\Sigma$, by $n=|S|$ we denote its \emph{length}. 
The fragment $S[i\dd j]$ of $S$ is an \emph{occurrence} of the underlying \emph{substring} $P=S[i]\cdots S[j]$. 
We also say that $P$ occurs at \emph{position} $i$ in $S$. 
A {\em prefix} of $S$ is a substring of $S$ of the form $S[1\dd j]$ and a {\em suffix} of $S$ is a substring of $S$ of the form $S[i\dd n]$. Given a string $S=S[1]\cdots S[n]$, its \emph{reverse} is the string $S^r=S[n]\cdots S[1]$. For any two strings $S_1$ and $S_2$ of the same length, we define their \emph{Hamming distance} $d_H(S_1,S_2)$ as their total number of mismatching positions. 

\noindent{\bf Sampling.}~Given a fixed pair of positive integers $(\ell, k )$ s.t. $\ell \ge k$, we call a function 
$f:\Sigma^{\ell}\to [1, \ell-k+1]$ 
that selects the starting position of a length-$k$ fragment, for any string of length $\ell$, an $(\ell,k)$-\emph{local scheme}. 
We call the set $\mathcal{M}_{f}(S)=\{i+f(S[i\dd i+\ell-1])-1~|~1\le i \le |S|-\ell+1\}$, for an $(\ell,k)$-\emph{local scheme} $f$ on a string $S$, \emph{the set of selected indices}. An $(\ell,k)$-\emph{minimizer scheme} is an  $(\ell,k)$-local scheme that selects the position of the leftmost occurrence of the smallest length-$k$ substring, for a fixed $k$ and a fixed order on $\Sigma^k$. In that case, we call \emph{minimizers} the selected indices~\cite{DBLP:journals/bioinformatics/RobertsHHMY04,DBLP:conf/sigmod/SchleimerWA03}. 
The minimizer scheme can be based on a lexicographic order. 

\begin{example}
Let $S=\texttt{ABAABB}$, $\ell=4$, and $k=2$.
We obtain $\mathcal{M}_{f}(S)=\{3\}$ because $S[3\dd 4]=\texttt{AA}$ is the lexicographically smallest length-$2$ substring in all length-$4$ fragment of $S$.
\end{example}

The minimizer scheme can also be specified by a hash function, e.g., Karp-Rabin fingerprints~\cite{DBLP:journals/ibmrd/KarpR87}. 

\begin{definition}\label{def:density}
Let $f$ be an $(\ell,k)$-minimizer scheme. The \emph{specific density} of $f$ on $S$ is the value $|\mathcal{M}_{f}(S)|/|S|$. The \emph{density} of $f$ is the expected specific density on a sufficiently long random string (with letters chosen independently at random).
\end{definition}

\begin{lemma}[\cite{10.1093/bioinformatics/btaa472}]\label{lem:minimizer density}
The density of an $(\ell,k)$-minimizer scheme on alphabet $\Sigma$ with $k\ge \log_{|\Sigma|} \ell+c$ is $\cO(\frac{1}{\ell})$, for some $c=\cO(1)$.\end{lemma}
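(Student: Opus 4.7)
The plan is to use the classical \emph{charging argument} for minimizer density. Fix a sufficiently long uniformly random string $S$ of length $n$ over $\Sigma$ and write $w = \ell - k + 1$. For $i = 1, \ldots, n-\ell+1$, let $W_i = S[i \dd i+\ell-1]$ and let $m_i \in \{i, \ldots, i+w-1\}$ be the starting position of the leftmost smallest length-$k$ substring in $W_i$. Then $\mathcal{M}_{f}(S) = \{m_1, \ldots, m_{n-\ell+1}\}$, and its size is bounded by
\[
|\mathcal{M}_{f}(S)| \;\leq\; 1 + \sum_{i=1}^{n-\ell} \mathbf{1}[m_{i+1} \neq m_i].
\]
By linearity of expectation, it suffices to show $\Pr[m_{i+1} \neq m_i] = \cO(1/\ell)$.

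Next, observe that $m_{i+1} \neq m_i$ only in one of two situations: (a) the old minimizer falls out of the window, i.e., $m_i = i$; or (b) the newly entered $k$-mer at position $i+\ell-k+1$ is strictly smaller than every $k$-mer of $W_i$ (note that the leftmost-tiebreak rule is what rules out ``equality'' from producing a shift). To bound these events, I would use the condition $k \geq \log_{|\Sigma|} \ell + c$ to show that the $w+1$ $k$-mers at positions $i, i+1, \ldots, i+\ell-k+1$ are pairwise distinct with probability $1 - \cO(1/\ell)$. This follows from a union bound: each non-overlapping pair collides with probability $|\Sigma|^{-k}$, and each overlapping pair with offset $d < k$ forces the shared prefix/suffix to be periodic, contributing $|\Sigma|^{-(k-d)}$; summing and picking $c$ sufficiently large forces the total collision probability below $C/\ell$.

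Conditioned on distinctness of these $w+1$ $k$-mers, the rank of each position among them is uniform, by symmetry of the uniform distribution over $\Sigma^{\ell+1}$ under the appropriate reshuffling (or, equivalently, by replacing the lex order with a Karp--Rabin-like hash as allowed by the remark following Definition~\ref{def:density}). Hence each of events (a) and (b) has conditional probability at most $1/(w+1)$. Combining everything,
\[
\Pr[m_{i+1} \neq m_i] \;\leq\; \tfrac{2}{w+1} + \cO(1/\ell) \;=\; \cO(1/\ell),
\]
and therefore $\mathbb{E}[|\mathcal{M}_f(S)|] \leq 1 + (n-\ell)\cdot \cO(1/\ell) = \cO(n/\ell)$, yielding density $\cO(1/\ell)$.

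The main obstacle is the symmetry step: because consecutive $k$-mers in $S$ share up to $k-1$ letters, their values are not independent, so the ``uniform rank'' claim needs justification. The cleanest way around this is to use a uniformly random hash, as permitted by the paper's definition of minimizer schemes, which makes the rank of a $k$-mer truly independent of its letters; alternatively, one can appeal to the bound already in the cited reference~\cite{10.1093/bioinformatics/btaa472}, which carefully handles the overlapping-$k$-mer case under the same growth condition on $k$. Either way, the key quantitative ingredient is the $k \geq \log_{|\Sigma|}\ell + \cO(1)$ bound, which is exactly what kills the collision term and leaves a clean $\cO(1/\ell)$ density.
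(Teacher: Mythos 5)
The paper does not prove this lemma at all: it is imported verbatim from the cited reference, so the only thing to compare your sketch against is the standard analysis in the literature. Your skeleton is the right one (the charging argument, the non-decreasing sequence $m_1\le m_2\le\cdots$, and the decomposition of a change into ``old minimizer exits the window'' versus ``new $k$-mer is a strict minimum''), and the symmetry issue you flag is real but repairable exactly as you say, by taking the order on $\Sigma^k$ to be a uniformly random permutation independent of the string.

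The step that genuinely fails is the collision bound. First, your per-pair estimate for overlapping $k$-mers is too weak to be summable: $\sum_{d=1}^{k-1}|\Sigma|^{-(k-d)}=\Theta(1)$ per position, so the union bound over all pairs in a window yields $\Theta(w)$, not $\cO(1/\ell)$. The correct per-pair probability is exactly $|\Sigma|^{-k}$ even for overlapping pairs (equality of the $k$-mers at offset $d$ forces the length-$(k+d)$ fragment to have period $d$, leaving only $d$ free letters). But second, and more importantly, even with the exact bound the union bound gives $\binom{w+1}{2}|\Sigma|^{-k}=\Theta(\ell^2\cdot|\Sigma|^{-c}/\ell)=\Theta(\ell\,|\Sigma|^{-c})$, which \emph{grows} with $\ell$ for any fixed $c$: with $|\Sigma|^k=|\Sigma|^{c}\ell$ possible $k$-mer values and $\Theta(\ell)$ of them per window, the birthday paradox says a typical window \emph{does} contain duplicate $k$-mers in this regime. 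So ``all $w+1$ $k$-mers are distinct with probability $1-\cO(1/\ell)$'' is false under the hypothesis $k\ge\log_{|\Sigma|}\ell+c$; your distinctness-based route would only close for $k\ge 3\log_{|\Sigma|}\ell+\cO(1)$, which is precisely the $(3+\epsilon)\log_{|\Sigma|}(w+1)$ condition under which the cited reference obtains the sharp $2/(w+1)$ density. Establishing $\cO(1/\ell)$ under the weaker $\log_{|\Sigma|}\ell+\cO(1)$ hypothesis requires an argument that tolerates duplicated $k$-mers inside a window (bounding the probability that a \emph{specific} boundary position holds the unique strict minimum without conditioning on global distinctness), and that idea is missing from your sketch; falling back on the citation is legitimate, but the quantitative claim ``picking $c$ sufficiently large forces the total collision probability below $C/\ell$'' is incorrect as written.
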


\noindent{\bf Weighted Strings.}~A \emph{weighted string} $X$ of length $n$ on an alphabet $\Sigma$ is a sequence of $n$ sets $X[1],\ldots,X[n]$. 
Every $X[i]$, for all $1\le i \le n$, is a set of $|\Sigma|$ ordered pairs $(\alpha,p_i(\alpha))$, where $\alpha\in \Sigma$ is a letter and $p_i(\alpha)$ is the probability of having $\alpha$ at position $i$ of $X$. 
Formally, $X[i]=\{(\alpha,p_i(\alpha)) \mid \alpha \in \Sigma\}$, where for every $\alpha\in\Sigma$ we have $p_i(\alpha)\in [0,1]$, and $\sum_{\alpha\in\Sigma}p_i(\alpha)=1$.  
A letter $\alpha$ \emph{occurs} at position $i$ of a weighted string $X$ if and only if $p_i(\alpha)$, the \emph{occurrence probability} 
of $\alpha$ at position $i$, $p_i(\alpha)$, is greater than $0$. A string $U$ of length $m$ is a \emph{factor} of a weighted string $X$ if and only if it occurs at some starting position $i$ with \emph{occurrence probability} $\mathbb{P}(X[i\dd i+m-1]=U)=\Pi_{j=1}^{m}p_{j+i-1}(U[j])>0$. 
Given a \emph{weight threshold} $1/z$ $\in (0,1]$, we say that factor $U$ is $z$-\emph{solid} (or $z$-\emph{valid}) or equivalently that factor $U$ has a $z$-solid occurrence in $X$ at some position $i$, if $\mathbb{P}(X[i\dd i+m-1]=U)\geq 1/z$. When the context is clear we may simply say \emph{solid} (or \emph{valid}). We say that factor $U$ is \emph{(right-)maximal} at position $i$ of $X$ if $U$ has a solid occurrence at position $i$ of $X$ and no string $U'=U\alpha$, for any $\alpha\in\Sigma$, has a solid occurrence at position $i$ of $X$.
For a weighted string $X$, a pattern $P$, and a weight threshold $1/z\in(0,1]$, $\OccSet_{1/z}(P,X)$ is the set of starting positions of valid occurrences of $P$ in $X$. A \emph{property} $\Pi$ of a string $S$ is a hereditary collection of integer intervals contained in $[1, n]$.\footnote{A collection that contains all the subintervals of its elements.} For simplicity, we represent every property $\Pi$ with an array $\pi[1\dd |S|]$ such that the longest interval $I\in \Pi$ starting at position $i$ is $[i, \pi[i]]$. Observe that $\pi$ can be an arbitrary array satisfying $\pi[i]\in[i-1, n]$, and $\pi[1]\le \pi[2]\le \cdots \le \pi[n]$ (where $\pi[i]=i-1$ means that $i$ is not contained in any interval $I\in\Pi$). For a string $P$, by $\OccSet_{\pi}(P,S)$ we denote the set of occurrences of $P$ in $S$ such that $i+|P|-1\le \pi[i]$. 

\begin{example}
Consider the string-property pair $(S_2,\pi_2)$ in Table~\ref{tab:sec2}. The pattern $P=\texttt{AAA}$ occurs at position $i=3$ because $i+|P|-1 = 3 + 3 - 1 \le \pi_2[3]=5$.  
\end{example}

Let us consider an indexed family $\mathcal{S}=(S_j,\pi_j)_{j=1}^k$ of strings $S_j$ with properties $\pi_j$. For a string $P$ and an index $i$, by $\Count_S(P,i)=|\{j\in[1,\dd,k]\mid i\in\OccSet_{\pi_j}(P,S_j)\}|$ we denote the total number of occurrences of $P$ at position $i$ in the strings $S_1,\ldots,S_k$ of $\mathcal{S}$ that respect the properties. 
We say that an indexed family $\mathcal{S}=(S_j,\pi_j)_{j=1}^{\lfloor z \rfloor}$ is a \emph{$z$-estimation} of a weighted string $X$ of length $n$ if and only if, for every string $P$ and position $i\in[1, n]$, $\Count_S(P,i)= \lfloor\mathbb{P}(X[i\dd i+|P|-1]=P)\cdot z\rfloor$. 
The following result has been shown by Barton et al.:

\begin{theorem}[\cite{DBLP:journals/iandc/BartonK0PR20}]\label{theorem:z-estimation}
For any weighted string $X$ of length $n$ and any weight threshold $\frac{1}{z}$, $X$ has a $z$-estimation of size $\cO(nz)$ constructible in $\cO(nz)$ time using $\cO(nz)$ space.
\end{theorem}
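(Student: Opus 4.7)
The plan is to realize the weighted string $X$ as $\lfloor z\rfloor$ plain strings $S_1,\dots,S_{\lfloor z\rfloor}$ together with properties $\pi_1,\dots,\pi_{\lfloor z\rfloor}$, where $\pi_j[i]$ encodes the length of the longest $z$-solid factor starting at position $i$ in the realization $S_j$. For each starting position $i\in[1,n]$ I would consider the trie $T_i$ of all $z$-solid factors beginning at $i$, and route $\lfloor z\rfloor$ tokens through $T_i$ so that every node $U\in T_i$ receives exactly $c(U)=\lfloor\mathbb{P}(U)\cdot z\rfloor$ tokens, with $c(\varepsilon)=\lfloor z\rfloor$. A token reaching node $U_j^{(i)}$ contributes $S_j[i\dd i+|U_j^{(i)}|-1]=U_j^{(i)}$ and $\pi_j[i]=i+|U_j^{(i)}|-1$. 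Under this routing, $\Count_S(P,i)$ equals the number of tokens passing through node $P$, which is $c(P)=\lfloor\mathbb{P}(P)\cdot z\rfloor$ whenever $P$ is solid; when $P$ is non-solid, no $S_j$ can have $P$ as a prefix of its solid factor at $i$ (otherwise $P$ would be a prefix of a solid factor and hence solid itself), so the count is automatically $0=\lfloor\mathbb{P}(P)\cdot z\rfloor$.

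The per-position feasibility of this routing rests on the flooring inequality
\[
c(U)=\Bigl\lfloor\sum_{\alpha\in\Sigma}\mathbb{P}(U\alpha)\cdot z\Bigr\rfloor
\;\ge\;\sum_{\alpha\in\Sigma}\lfloor\mathbb{P}(U\alpha)\cdot z\rfloor
\;=\;\sum_{\alpha:\,U\alpha\in T_i}c(U\alpha),
\]
which guarantees that the surplus $d(U):=c(U)-\sum_{\alpha}c(U\alpha)\ge 0$ can be absorbed by having $d(U)$ tokens terminate at $U$. Starting from the root with $\lfloor z\rfloor$ tokens, a greedy depth-first dispatch sending $c(U\alpha)$ tokens into each solid child and stopping the surplus at $U$ populates $T_i$ exactly.

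The main obstacle is cross-position consistency: the letters $S_j[i{+}1],\dots$ written because of the route at position $i$ must agree with the letters $S_j$ would write because of the route chosen at a later starting position. I would handle this by a left-to-right sweep that splits the $\lfloor z\rfloor$ tokens at every position into \emph{committed} (still inside a factor chosen earlier) and \emph{free}. Committed tokens are forced to continue along their previous route, and the node they currently occupy in $T_i$ is automatically a member of $T_i$ because any suffix of a solid factor is itself solid ($\mathbb{P}(U[2\dd])=\mathbb{P}(U)/p_i(U[1])\ge\mathbb{P}(U)\ge 1/z$). Free tokens may either initiate a new route in $T_i$ or remain inactive at $i$ by setting $\pi_j[i]=i-1$, which preserves the non-decreasing requirement on $\pi_j$. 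The invariant that I would carry through the sweep is that, at every position $i$ and every node $V\in T_i$, the number of committed tokens passing through $V$ is at most $c(V)$, so that the residual demand $c(V)-(\text{committed through }V)$ can be filled by the free tokens using the same depth-first dispatch; the inductive step is a reapplication of the flooring inequality to the edge that a committed token traverses between positions $i-1$ and $i$.

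For complexity, each token's state (its origin, remaining depth in its current route, and running probability) fits in $\cO(1)$ words, so the global state across the $\lfloor z\rfloor$ tokens is $\cO(z)$. Advancing one position updates the $z$ states, emits one letter per token, and determines the $\pi_j[i]$ values in $\cO(z)$ time; over $n$ positions this totals $\cO(nz)$ time. The working space is the $\cO(z)$ state plus the $\cO(nz)$ output consisting of $\lfloor z\rfloor$ strings and properties of length $n$, matching the stated $\cO(nz)$ bounds for construction time, construction space, and the size of the resulting $z$-estimation.
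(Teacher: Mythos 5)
The paper does not actually prove this theorem: it is imported as a black box from Barton et al.~\cite{DBLP:journals/iandc/BartonK0PR20}, so your argument has to stand entirely on its own. Your overall strategy is a reasonable reconstruction of that construction: routing $\lfloor z\rfloor$ tokens through the per-position tries $T_i$ so that node $U$ carries $c(U)=\lfloor\mathbb{P}(U)\cdot z\rfloor$ tokens, using the flooring inequality for per-position feasibility, and propagating commitments across positions via heredity. Your invariant that at most $c_i(V)$ committed tokens are forced through any $V\in T_i$ is correct; it follows from $\sum_{\alpha}\lfloor p_{i-1}(\alpha)\,\mathbb{P}(V)\,z\rfloor\le\lfloor\mathbb{P}(V)\,z\rfloor$ together with the exactness of the counts realized at position $i-1$.

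There is, however, a genuine gap in the dispatch step: it is false that the residual demand $c(V)-(\text{committed through }V)$ can be filled by the \emph{free} tokens alone. A committed token whose forced route ends exactly at $V$ may be \emph{required} to be extended strictly below $V$. Take $z=4$, $p_{i-1}(a)=p_{i-1}(b)=\frac{1}{2}$, a factor $V=X[i\dd e]$ with $\mathbb{P}(X[i\dd e]=V)=\frac{1}{2}$, and $p_{e+1}(a)=p_{e+1}(b)=\frac{1}{2}$. At position $i-1$ you get $c_{i-1}(aV)=c_{i-1}(bV)=1$ and $c_{i-1}(aV\alpha)=c_{i-1}(bV\alpha)=0$, so exactness forces one token to terminate exactly at $aV$ and one exactly at $bV$; at position $i$ both arrive committed to exactly node $V$. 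But $c_i(V)=2$ while $c_i(Va)=c_i(Vb)=1$: the residual at $V$ is $0$, so no free token may enter $V$'s subtree without overfilling $V$, yet each child still needs one token. The only consistent completion extends the two committed tokens, one into $Va$ and one into $Vb$. The fix is to treat a committed token as forced to reach \emph{at least} its forced node, and to run the top-down dispatch at each $V$ as follows: send the tokens forced into each child $V\alpha$ there (at most $c_i(V\alpha)$ of them, by your invariant), top each child up to exactly $c_i(V\alpha)$ using the flexible tokens present at $V$ (enough exist since $c_i(V)\ge\sum_{\alpha}c_i(V\alpha)$), and stop the remainder at $V$. With that correction the induction closes and your $\cO(nz)$ time/space accounting goes through.
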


\begin{example}\label{ex:running}
For $\frac{1}{z}=\frac{1}{4}$, the weighted string $X$ in \cref{ex:weighted_string} admits the $4$-estimation $\mathcal{S}$ in Table~\ref{tab:sec2}, given by Theorem~\ref{theorem:z-estimation}.  
\begin{figure}[t]
    \begin{minipage}{.44\linewidth}\footnotesize
    \captionof{table}{A $4$-estimation $\mathcal{S}$ of $X$ from \cref{ex:weighted_string}.}\label{tab:sec2}
\begin{tabular}{c|cccccc} \footnotesize
$i$ & $1$ & $2$ & $3$ & $4$ & $5$ & $6$ \\
\hline
$S_1$ & $\texttt{A}$ & $\texttt{A}$ & $\texttt{A}$ & $\texttt{A}$ & $\texttt{A}$ & $\texttt{A}$ \\
$\pi_1$ & $2$ & $2$ & $3$ & $4$ & $5$ & $6$\\
\hline
$S_2$ & $\underline{\texttt{A}}$ & $\underline{\texttt{A}}$ & $\underline{\texttt{A}}$ & $\underline{\texttt{A}}$ & $\texttt{A}$ & $\texttt{B}$  \\
$\pi_2$ & 4 & 4 & 5 & 6 & 6 & 6\\
\hline
$S_3$ & $\underline{\texttt{A}}$ & $\texttt{B}$ & $\underline{\texttt{A}}$ & $\underline{\texttt{A}}$ & $\texttt{B}$ & $\texttt{B}$  \\
$\pi_3$ & 4 & 4 & 5 & 6 & 6 & 6\\
\hline
$S_4$ & $\texttt{A}$ & $\texttt{B}$ & $\texttt{B}$ & $\texttt{B}$ & $\texttt{B}$ & $\texttt{B}$  \\
$\pi_4$ & $2$ & $2$ & $3$ & $3$ & $5$ & $6$
\end{tabular}
 \end{minipage}%
    \hspace{+0.9cm}
    \begin{minipage}{.41\linewidth}
    \includegraphics[scale=0.53]{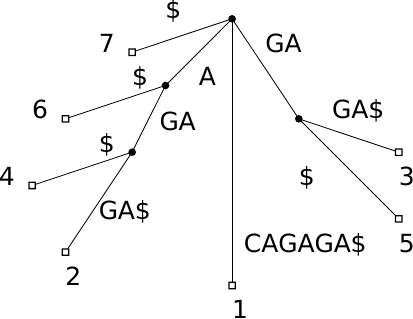}
    \captionof{figure}{Suffix tree of $S=\texttt{CAGAGA\$}$.}\label{fig:ST}
    \end{minipage}
\end{figure}

For pattern $P=\texttt{AB}$ and $S_3$, we have that $\OccSet_{\pi_3}(P,S_3)=\{1,4\}$ because $P$ occurs at position $1$, with $1+|P|-1\leq \pi_3[1]=4$, and at position $4$, with $4+|P|-1\leq \pi_3[4]=6$.

For pattern $P=\texttt{AB}$ and $i=1$, we have that
$\mathbb{P}(X[i\dd i+|P|-1]=P)=1\cdot 1/2=1/2$ and so $P$ occurs in
$\Count_S(P,i)=\lfloor\mathbb{P}(X[i\dd i+|P|-1]=P)\cdot z\rfloor=\lfloor (1/2)\cdot 4\rfloor =2$
strings of the $4$-estimation at position $1$ (namely, strings $S_3$ and $S_4$).

We construct the set of (lexicographic) minimizers that respect the property, for $\ell=3$ and $k=2$, for every $S_j$, $j\in [1,4]$, from $\mathcal{S}$; namely for $S_j\in \mathcal{S}$ we compute $f(S_j[i\dd i+2])$ if and only if $i+2 \le \pi_j[i]$. We underline the positions of the minimizers. Note that we have selected no minimizer in $S_1$ or $S_4$ as they have no solid factor of length $3$.
\end{example}

\noindent{\bf Problem Definition.}~In our \textsc{$\ell$-Weighted Indexing} problem, we are given a weighted string $X$ of length $n$ over an alphabet $\Sigma$, a weight threshold $\frac{1}{z}\in(0,1]$, and an integer $\ell>0$, and we are asked to preprocess them in a \emph{compact} data structure (the index) to support the following queries \emph{efficiently}: For any string $P$ of length $m\ge\ell$, report all elements of $\OccSet_{\frac{1}{z}}(P,X)$. Other than the \emph{index size} and the \emph{query time}, we seek to minimize the \emph{construction time} and the \emph{construction space}.

\noindent{\bf Suffix Tree.}~The classic indexing solution for standard strings is the suffix tree~\cite{DBLP:conf/focs/Weiner73}.
Given a set $\mathcal{F}$ of strings, the \emph{compacted trie}
of these strings is the trie obtained by compressing each path of nodes of degree one in the trie of the strings in $\mathcal{F}$, which takes $\cO(|\mathcal{F}|)$ space~\cite{DBLP:journals/jacm/Morrison68}. Each edge in the compacted trie has a label represented as a fragment of a string in $\mathcal{F}$. The \emph{suffix tree} $\textsf{ST}(S)$ is the compacted trie of the suffixes of string $S$. Assuming $S$ ends with a unique terminating letter $\$$, every suffix $S[i\dd |S|]$ is represented by a leaf decorated by index $i$; see Fig.~\ref{fig:ST} for an example. The suffix tree occupies $\cO(|S|)$ space and it can be constructed in $\cO(|S|)$ time~\cite{DBLP:conf/focs/Weiner73,DBLP:conf/focs/Farach97}. It supports queries in $\cO(m+\Occ)$ time. The \emph{suffix array} of $S$ is the list of leaf nodes read using a standard in-order DFS traversal on $\textsf{ST}(S)$. As this can be done in $\cO(|S|)$ time, the suffix array of $S$ can also be computed in $\cO(|S|)$ time.

We next summarize the crucial notation of our paper.

\vspace{+1mm}
\begin{center}
{\footnotesize
\begin{tabular}{|l|l|}\hline
       $\Sigma$ & Alphabet \\ 
    $\sigma$ & Size of the alphabet \\ 
    $X$ & Weighted string \\
    $n$ & Length of $X$\\
    $1/z$& Weight threshold\\
    $P$ & Pattern \\
    $m$ & Length of $P$\\ 
    $\ell$& Lower bound on the length $m$ of $P$\\
    $p_i$ & Probability~distribution at $X[i]$\\
    $X[i]$ & $i$th set $\{(\alpha,p_i(\alpha))~|~\alpha\in\Sigma\}$ of $X$\\
    $\mathbb{P}(X[i\dd j]=P)$ & Probability $\Pi_{h=0}^{j-i}p_{i+h}(P[h+1])$\\
    $\mathcal{M}_X$& Minimizers for solid factors of $X$ \\
    $k$ &Length of minimizer substrings\\
    $H_X$& Heavy string of $X$\\
    $\mathcal{S}$& Property array $(S_j,\pi_j)_{j=1}^z$\\
    $\pi_i[j]=h$& $S_i[j\dd h]$ is a solid factor\\
    $\OUTPUT$ & Number of occurrences in the output\\
    $\OccSet_{1/z}(P,X)$ & The $z$-solid occurrences of $P$ in $X$\\
    $\Tsuff$ & Forward minimizer solid factor tree\\
    $\Tpref$ & Backward minimizer solid factor tree\\   
    $\Isuff(P)$& Leaf nodes in $\Tsuff$ after reading $P$\\
    $\Ipref(P)$& Leaf nodes in $\Tpref$ after reading $P$\\
    $\mathcal{N}(P)$ & Query rectangle for $P$ in the 2D grid\\
    \hline 
\end{tabular}
}
\end{center}

\section{The New Index: Minimizer-based \WST}\label{sec:index}

In this section, we describe our index for solving \INDEXING. We assume random access to $X$ (we can discard $X$ at the end of the construction). To simplify the analysis we assume $X$ is over an alphabet of size $\sigma=\cO(1)$.

\vspace{+1mm}
\noindent{\bf Main Idea.}~We start the index construction by building a $z$-estimation of $X$, whose total size is $\Theta(nz)$ (\cref{theorem:z-estimation}). We then use minimizers sampling to select $\cO(\frac{nz}{\ell})$ positions (\cref{lem:minimizer density}) of the $z$-estimation,
where $\ell$ is a predetermined lower bound on the length of the supported queries. Next, we construct two suffix trees, called \emph{minimizer solid factor trees}: (1) the compacted trie \emph{of all suffixes} of the solid factors in the $z$-estimation starting at the minimizer positions, and (2) the compacted trie \emph{of all the reversed prefixes} of the solid factors in the $z$-estimation ending at the minimizer positions.  To reduce the size of both trees, we discard the $z$-estimation using a combinatorial observation (\cref{lem:heavy-string,cor:extra info}) that allows us to store only $\cO(\log z)$ information to label a suffix tree edge.
This concludes the construction of the minimizer solid factor trees (\cref{lem:minimizer tree size}). After that, we pair up the leaf nodes corresponding to the same minimizer position, from one of these trees to the other, using  a 2D grid for \emph{range reporting}~\cite{DBLP:conf/compgeom/ChanLP11} (\cref{lem:checking rectangles,lem:2demptiness}). This results in an index of expected total size $\cO(n+\frac{nz}{\ell}\log z)$. Finally, we show how to query the index efficiently by using a probabilistic argument on the number of expected points returned by the 2D grid (\cref{lem: probabilistic bound}). We thus arrive to Contribution 1 (\cref{the:2d-structure}).

\vspace{+1mm}
\noindent{\bf Minimizer Solid Factor Trees.}~Let us fix a weighted string $X$ of length $n$ over an alphabet $\Sigma$ and a weight threshold $\frac{1}{z}$. We first define a \emph{forward solid factor tree} (resp.~\emph{backward solid factor tree}) for $X$ as the suffix tree for the set of maximal solid factors (resp.~the set of reversed solid factors) in $X$. By Theorem~\ref{theorem:z-estimation}, we know that each such solid factor appears in a $z$-estimation of size $\cO(nz)$, and therefore both solid factor trees have size $\cO(nz)$ as well. This argument also gives a method to construct the solid factor trees~\cite{DBLP:journals/iandc/BartonK0PR20}. 

We adapt the solid factor trees to make them more space-efficient for \INDEXING by employing minimizer schemes. Let us fix $\ell$, $k$ and an $(\ell,k)$-minimizer scheme $f$. We can assume throughout, from Lemma~\ref{lem:minimizer density}, that $\ell$ and $k$ are chosen so that the density of $f$ (see Def.~\ref{def:density}) is $\cO(\frac{1}{\ell})$. 
We then construct a $z$-estimation $\mathcal{S}=(S_j,\pi_{j})_{j=1}^{\lfloor z\rfloor}$ of $X$ using Theorem~\ref{theorem:z-estimation} and compute the set $\mathcal{M}_X$ of minimizers from $\mathcal{S}$ respecting the property; namely for $S_j\in \mathcal{S}$ we compute $f(S_j[i\dd i+\ell -1])$ if and only if $i+\ell-1 \le \pi_j[i]$. 

We represent each minimizer in $\mathcal{M}_X$ by a pair $(i,j)$, where $i$ is the minimizer position in the string $S_j\in \mathcal{S}$. In the following, we consider $\mathcal{M}_X$ fixed with $|\mathcal{M}_X|= \cO(\frac{nz}{\ell})$, as 
by Lemma~\ref{lem:minimizer density} there are in expectation $\cO(\frac{nz}{\ell})$ minimizers in $\mathcal{S}$.

Based on $\mathcal{M}_X$, we define a \emph{minimizer} forward (resp.~backward) solid factor tree as a compacted trie containing suffixes of solid factors (resp.~of reversed solid factors) \emph{starting} at position $i$ from a string $S_j\in\mathcal{S}$ with $(i,j)\in\mathcal{M}_X$. Each leaf has a minimizer label $(i,j)\in \mathcal{M}_X$ associated to the corresponding suffix. If one same suffix corresponds to several such labels (it occurs at several minimizers from $\mathcal{S}$), we add one copy of the leaf for each such label. Since $|\mathcal{M}_X|=\cO(\frac{nz}{\ell})$, the minimizer solid factor trees contain $\cO(\frac{nz}{\ell})$ leaves, and therefore nodes. 

Still the size of the $z$-estimation $\mathcal{S}$ is, by definition, always $\Theta(nz)$, which makes the total size of the index $\cO(\frac{nz}{\ell})+\Theta(nz)=\Theta(nz)$. We avoid this by employing the following crucial combinatorial observation on \emph{heavy strings}~\cite{DBLP:journals/mst/KociumakaPR19}:

\begin{definition}\label{def:heavy}
For any weighted string $X$, we call a \emph{heavy string} $H_X$ of $X$ a string such that $H_X[i]$ is the letter having a largest probability in $X[i]$ (ties are broken arbitrarily). 
\end{definition}
\begin{example}
Let $X$ be the weighted string of \cref{ex:weighted_string}; a heavy string of $X$ is $H_X=\texttt{ABAAAB}$ (the tie at position $2$ is broken for $\texttt{B}$ and the tie  at position $5$ is broken for $\texttt{A}$).
\end{example}

\begin{lemma}[\cite{DBLP:journals/mst/KociumakaPR19}]\label{lem:heavy-string}
Let $H_X$ be a heavy string of $X$. For a weight threshold $\frac{1}{z}$ and any $z$-solid factor $U$ starting at position $i$ and ending at position $j$ of $X$, 
$d_H(U,H_X[i\dd j])\le \log_2 z$ holds. 
\end{lemma}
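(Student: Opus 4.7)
The plan is to exploit the defining property of the heavy string — that $H_X[i]$ has maximum probability at position $i$ — to upper-bound the per-letter probability at every mismatch, and then compare this with the lower bound coming from $z$-solidity.

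First, I would establish the following pointwise inequality: for every position $h \in [i, j]$ at which $U[h-i+1] \ne H_X[h]$, the probability $p_h(U[h-i+1])$ is at most $1/2$. This is because $H_X[h]$ is, by \cref{def:heavy}, a letter of maximum probability at position $h$, so $p_h(U[h-i+1]) \le p_h(H_X[h])$. Since both terms appear in the sum $\sum_{\alpha\in\Sigma} p_h(\alpha) = 1$ and correspond to distinct letters, we get $2\,p_h(U[h-i+1]) \le p_h(U[h-i+1]) + p_h(H_X[h]) \le 1$, hence $p_h(U[h-i+1]) \le \tfrac{1}{2}$.

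Next, I would bound the occurrence probability of $U$ at position $i$. Let $d = d_H(U, H_X[i\dd j])$. Since each matching position contributes a factor at most $1$, and each of the $d$ mismatching positions contributes a factor at most $1/2$ by the previous step, we obtain
\[
\mathbb{P}(X[i\dd j] = U) \;=\; \prod_{h=0}^{j-i} p_{i+h}(U[h+1]) \;\le\; \left(\tfrac{1}{2}\right)^{d}.
\]
Combining this with the assumption that $U$ is $z$-solid at position $i$, i.e.\ $\mathbb{P}(X[i\dd j] = U) \ge 1/z$, yields $(1/2)^d \ge 1/z$, and taking logarithms gives $d \le \log_2 z$, as claimed.

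There is no real obstacle here; the only subtle point is making sure that the inequality $p_h(U[h-i+1]) \le 1/2$ at mismatch positions is justified properly from the definition of $H_X$ and the fact that the probabilities at a position sum to $1$. Everything else is a short telescoping argument using multiplicativity of the occurrence probability and the $z$-solidity threshold.
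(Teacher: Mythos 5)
Your proof is correct and is precisely the standard argument for this cited result (the paper itself only references~\cite{DBLP:journals/mst/KociumakaPR19} and gives no proof): each mismatch against the heavy letter forces a per-position probability of at most $\tfrac{1}{2}$, so $d$ mismatches force the occurrence probability below $2^{-d}$, which combined with the $z$-solidity bound $\ge \tfrac{1}{z}$ gives $d \le \log_2 z$. The key step bounding $p_h(U[h-i+1]) \le \tfrac{1}{2}$ via maximality of $H_X[h]$ and the fact that the probabilities at a position sum to $1$ is exactly right.
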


\begin{example}
Let $X$ be the weighted string of \cref{ex:weighted_string}; with $H_X=\texttt{ABAAAB}$. If $z=4$, $\log_2 z=2$, so no solid factor has more than $2$ mismatches with $H_X$ at its occurrence position. The string $\texttt{AABB}$ is not valid at position $1$ as it has $3$ mismatches with $H_X$; indeed, its probability is $\frac{1}{40}<\frac{1}{4}$. The string $\texttt{AAAA}$ is valid at position $1$ with probability $\frac{3}{10}>\frac{1}{4}$, and has only one mismatch with $H_X$. The condition is however not sufficient: $\texttt{ABAB}$ is not valid at position $1$ (its probability is $\frac{3}{40}<\frac{1}{4}$), even if it has only one mismatch with $H_X$.
\end{example}

We directly get the following result, which allows us to avoid storing the $z$-estimation $\mathcal{S}$ explicitly.

\begin{corollary}\label{cor:extra info}
Every solid factor of a weighted string $X$ for a weight threshold $\frac{1}{z}$ can be characterized by an interval of the heavy string $H_X$ plus the information of at most $\log_2 z$ single mismatches. The minimizer solid factor tree can be implemented as a compacted trie whose edges store only that information, which takes $\cO(\log z)$ extra space per edge. 
\end{corollary}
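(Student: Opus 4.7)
The plan is to derive the corollary directly from Lemma~\ref{lem:heavy-string}, which is doing almost all of the work. For the first claim, take any $z$-solid factor $U$ occurring at position $i$ of $X$ and ending at position $j$. By Lemma~\ref{lem:heavy-string}, $d_H(U, H_X[i\dd j]) \le \log_2 z$, so $U$ and $H_X[i\dd j]$ agree except at a set $M_U$ of at most $\log_2 z$ positions. Hence $U$ is uniquely specified by (a) the interval $(i,j)$ on $H_X$ and (b) the set of pairs $\{(p, U[p-i+1]) : p \in M_U\}$ recording each mismatching position together with the actual letter of $U$ there. This is the combinatorial content of the first sentence.

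Next I would move to the tree side. Each edge of a minimizer solid factor tree is labeled by a contiguous substring of a suffix of some solid factor of $X$, hence by a substring $S_j[a\dd b]$ that is itself a solid factor (of the $z$-estimation string $S_j$), and as such occurs at positions $[a,b]$ of $X$. Applying the encoding above to the enclosing solid factor and then restricting the mismatch list to those positions that fall in $[a,b]$, I get an encoding of the edge label consisting of: the two endpoints $a,b$ (two machine words), plus a sub-list of at most $\log_2 z$ mismatch pairs, each a (position, letter) pair fitting in $\cO(1)$ machine words under the standard word-RAM model. This yields $\cO(\log z)$ extra space per edge on top of the two interval endpoints, which is what the corollary claims. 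Storing these annotations together with $H_X$ removes any need to keep the entire $z$-estimation $\mathcal{S}$ around after the tree is built.

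The main thing I would want to be careful about, though it is not really an obstacle, is that the representation is functional, i.e.\ that a single character query on an edge label costs what the later sections of the paper assume. If the mismatch list is kept sorted by position (at an added factor of $\cO(\log z)$ in access time, or $\cO(1)$ after hashing by position), then the $t$-th letter of the edge label is either $H_X[a+t-1]$ or the letter given by the mismatch entry at that position, so random access is supported in the claimed budget. The symmetric argument for the \emph{backward} minimizer solid factor tree requires only noting that a reversed $z$-solid factor has the same $\log_2 z$-mismatch relationship to the correspondingly reversed interval of $H_X$, so the same encoding scheme applies verbatim with $H_X$ replaced by its reverse where needed.
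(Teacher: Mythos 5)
Your proposal is correct and follows the same route as the paper, which states the corollary as an immediate consequence of Lemma~\ref{lem:heavy-string} without further elaboration; your version simply fills in the details (restricting the mismatch list to the edge's interval, noting that a substring of a solid factor is solid, and handling random access and the backward tree). No gaps.
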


We apply Corollary~\ref{cor:extra info} to obtain Lemma~\ref{lem:minimizer tree size}. Based on this lemma, we construct the minimizer solid factor trees for $X$.

\begin{lemma}\label{lem:minimizer tree size}
The (forward and backward) minimizer solid factor trees can be constructed in $\cO(nz)$ time using $\cO(nz)$ space. Each tree has $\cO(\frac{nz}{\ell})$ expected nodes and its expected total size is $\cO(\frac{nz}{\ell}\log z)$. \footnote{We claim $\cO(nz)$ time and space during our construction because if $\log z > \ell$, we can abort the construction and resort to $\cO(nz)$ size.}
\end{lemma}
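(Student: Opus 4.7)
The plan is to build both minimizer solid factor trees by first producing the full (non-minimizer) solid factor tree from \cite{DBLP:journals/iandc/BartonK0PR20} and then pruning and recoding it. First, I would apply Theorem~\ref{theorem:z-estimation} to materialize the $z$-estimation $\mathcal{S}=(S_j,\pi_j)_{j=1}^{\lfloor z\rfloor}$ in $\cO(nz)$ time and $\cO(nz)$ space, and compute the heavy string $H_X$ in $\cO(n)$ time directly from $X$. Second, for every $S_j$ I would run a sliding-window minimum over its length-$k$ substrings to collect, in $\cO(n)$ time, the minimizer positions that respect the property, i.e. those $i$ with $i+\ell-1\le \pi_j[i]$; summed over $j\in[1,\lfloor z\rfloor]$ this yields $\mathcal{M}_X$ in $\cO(nz)$ total time. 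Third, I would build the uncompressed forward solid factor tree of $\mathcal{S}$ in $\cO(nz)$ time and space via \cite{DBLP:journals/iandc/BartonK0PR20}, mark the leaves whose labels lie in $\mathcal{M}_X$, discard the unmarked leaves, and path-compress the remainder. The backward tree is built symmetrically after reversing each $S_j$ and adjusting the property arrays accordingly.

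For the node count, Lemma~\ref{lem:minimizer density} (with $k\ge\log_{|\Sigma|}\ell+c$) gives $\cO(n/\ell)$ expected minimizers per $S_j$, hence $\mathbb{E}[|\mathcal{M}_X|]=\cO(nz/\ell)$; since a compacted trie has at most as many internal nodes as leaves, each minimizer tree has $\cO(nz/\ell)$ expected nodes. For the edge labels, I would invoke Corollary~\ref{cor:extra info} while $\mathcal{S}$ is still available: for every surviving edge I pick one descendant leaf $(i,j)$ as a witness, read off the interval $[a,b]\subseteq[i,\pi_j[i]]$ that spells the edge in $S_j$, and replace the label by the pair (interval $[a,b]$, list of positions in $[a,b]$ where $S_j$ disagrees with $H_X$). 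By Lemma~\ref{lem:heavy-string}, the whole factor $S_j[i\dd \pi_j[i]]$ has at most $\log_2 z$ mismatches with $H_X[i\dd \pi_j[i]]$, so the same bound applies to any sub-interval, each edge fits in $\cO(\log z)$ space, and the expected total size of each tree is $\cO(\tfrac{nz}{\ell}\log z)$. Once the recoding is finished, $\mathcal{S}$ and the uncompressed tree are discarded.

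The main obstacle is keeping the edge recoding inside the $\cO(nz)$ time budget, because a long edge could in principle be scanned once per descendant leaf. I would sidestep this by performing the recoding during the path-compression itself: each maximal chain of degree-one nodes in the pruned tree is traversed exactly once, and the mismatches are read off from a single scan of the corresponding factor of the chosen witness $S_j$ while the uncompressed tree is still in memory. The total scan length is bounded by the size of the uncompressed tree, hence $\cO(nz)$. The footnote case $\log z>\ell$ is handled by aborting and returning the uncompressed solid factor tree itself, whose $\cO(nz)$ size is consistent with the stated bound $\cO(\tfrac{nz}{\ell}\log z)$. Combining these three steps yields the claimed $\cO(nz)$ construction time and space together with the two expected size bounds.
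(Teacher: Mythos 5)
Your proposal is correct and follows essentially the same route as the paper: build the $z$-estimation via Theorem~\ref{theorem:z-estimation}, extract the property-respecting minimizers in linear time per string, construct the compacted trie of the selected suffixes in $\cO(nz)$ time and space, and bound the node count by Lemma~\ref{lem:minimizer density} and the edge-label size by Corollary~\ref{cor:extra info}. The only (immaterial) difference is that you build the full solid factor tree and then prune and recompress it, whereas the paper invokes a linear-time construction of the compacted trie of an arbitrary collection of substrings directly; your extra care about charging the edge recoding to the path-compression pass is a sound way to stay within the $\cO(nz)$ budget.
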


\begin{proof}
We apply \cref{theorem:z-estimation} to construct a $z$-estimation for $X$ in $\cO(nz)$ time and space.
The set of minimizers of any string can be computed in linear time~\cite{DBLP:conf/esa/LoukidesP21}. Thus, computing $\mathcal{M}_X$ for the $z$-estimation takes $\cO(nz)$ time.
The compacted trie of any collection of substrings of a string can be constructed in linear time in the length of the string~\cite{DBLP:journals/jea/Charalampopoulos20,DBLP:conf/cpm/KasaiLAAP01}, and thus the minimizer solid factor trees can be constructed in $\cO(nz)$ time using $\cO(nz)$ space.
The number of nodes and the total size of the trees follow from \cref{lem:minimizer density} and \cref{cor:extra info}. 
\end{proof}

\noindent{\bf Exploiting 2D Range Reporting.}~We explain how to employ a geometric data structure to pair up the leaf nodes corresponding to the same minimizer position from one of the minimizer solid factor tree that we have constructed above to the other.

Let us write $\Tsuff$ (resp.~$\Tpref$) for the forward (resp.~backward) minimizer solid factor tree. We fix an order on the leaves of both $\Tpref$ and $\Tsuff$, such that for any node in one of the trees, the set of its descendant leaves forms an interval. This is possible, for example,  by sorting the strings corresponding to the leaves in  lexicographical order. Via this ordering, we can consider a pair of leaves from $\Tsuff$ and $\Tpref$ as a point of a 2D data structure, which we call \emph{the grid}.

We start by some definitions: (1) Given a string $P$, we denote by $\Isuff(P)$ (resp.~$\Ipref(P)$) the (possibly empty) interval of leaves in the subtree obtained by reading $P$ in $\Tsuff$ (resp.~$\Tpref$). (2) We denote by $\mathcal{N}$ the set of all those points corresponding to pairs of leaves from $\Tsuff$ and $\Tpref$ with identical minimizer labels. Each point in $\mathcal{N}$ corresponds to a given minimizer $(i,j)\in\mathcal{M}_X$, and a pair of maximal solid factors in $X$ that can be read from $i$, both right-wise and left-wise. 
(3) Given a pair $P_1$, $P_2$ of strings, we denote by $\mathcal{N}(P_1,P_2)$ the intersection of the set $\mathcal{N}$ with the rectangle $\Isuff(P_1)\times\Ipref(P_2)$. (4) Given a string $P$ of length $m\ge \ell$, such that $f(P[1\dd \ell])=\mu$, we denote $\mathcal{N}(P[\mu\dd m],(P[1\dd \mu])^r)$ by $\mathcal{N}(P)$. 

\begin{lemma}\label{lem:checking rectangles}
For any pattern $P$ of length $m$, with $n\ge m\ge \ell$, if $P$ is a solid factor in $X$, then $\mathcal{N}(P)$ is nonempty. In particular, if $P$ has a valid occurrence in $X$ starting at position $k$ then $\mathcal{N}(P)$ contains at least one point having label $(k-1+f(P[1\dd \ell]),j)$ for some $j\in [1, \lfloor z \rfloor]$.
\end{lemma}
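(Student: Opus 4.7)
The plan is to exhibit an explicit point of $\mathcal{N}(P)$ coming directly from the assumed valid occurrence of $P$ at position $k$. Set $\mu := f(P[1\dd\ell])$, which is well defined because $m \ge \ell$. I aim to show that there exists $j \in [1, \lfloor z \rfloor]$ such that the pair of leaves of $\Tsuff$ and $\Tpref$ labelled $(k-1+\mu, j)$ exists and lies inside the rectangle $\Isuff(P[\mu\dd m]) \times \Ipref((P[1\dd\mu])^r)$.

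The first step is to produce the witness string from the $z$-estimation. Since $\mathbb{P}(X[k\dd k+m-1] = P) \ge 1/z$, \cref{theorem:z-estimation} yields $\Count_S(P, k) \ge 1$, so some $S_j$ satisfies $S_j[k\dd k+m-1] = P$ and $k+m-1 \le \pi_j[k]$. Because $m \ge \ell$, the length-$\ell$ window starting at $k$ also respects the property, so $f(S_j[k\dd k+\ell-1])$ is computed during the construction of $\mathcal{M}_X$; locality of the minimizer scheme then gives $f(S_j[k\dd k+\ell-1]) = f(P[1\dd\ell]) = \mu$, and hence $(k-1+\mu, j) \in \mathcal{M}_X$.

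The second step is to verify that this minimizer label puts a leaf into each of the two target subtrees. Monotonicity of $\pi_j$ gives $\pi_j[k-1+\mu] \ge \pi_j[k] \ge k+m-1$, so the maximal solid factor of $S_j$ starting at $k-1+\mu$ extends at least to position $k+m-1$ and agrees with $P[\mu\dd m]$ on that range; thus its leaf in $\Tsuff$ lies in $\Isuff(P[\mu\dd m])$. A symmetric argument, using that $S_j[k\dd k-1+\mu] = P[1\dd\mu]$ and that this left portion is itself a solid factor of $S_j$ ending at the minimizer position, places the matching leaf of $\Tpref$ inside $\Ipref((P[1\dd\mu])^r)$.

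Combining these observations, the pair of leaves labelled $(k-1+\mu, j)$ is by construction a point of $\mathcal{N}$, and it lies in $\Isuff(P[\mu\dd m])\times\Ipref((P[1\dd\mu])^r) = \mathcal{N}(P)$, proving the claim. The step I expect to require the most care is the backward alignment — ensuring that the minimizer label $(k-1+\mu, j)$ genuinely corresponds to a leaf of $\Tpref$ whose associated reversed solid factor has $(P[1\dd\mu])^r$ as a prefix — but this reduces to noting that the full occurrence $S_j[k\dd k+m-1] = P$ is solid, so its left portion ending at $k-1+\mu$ is solid as well, and reading it in reverse from $k-1+\mu$ yields exactly $(P[1\dd\mu])^r$.
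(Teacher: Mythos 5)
Your proof is correct and follows essentially the same route as the paper's: extract a witness string $S_j$ from the $z$-estimation, observe that the minimizer of the length-$\ell$ window at position $k$ is $(k-1+\mu,j)\in\mathcal{M}_X$, and check that the corresponding leaves of $\Tsuff$ and $\Tpref$ fall in $\Isuff(P[\mu\dd m])$ and $\Ipref((P[1\dd\mu])^r)$ respectively. You merely spell out two steps the paper leaves implicit (the floor computation giving $\Count_S(P,k)\ge 1$ and the monotonicity of $\pi_j$ ensuring the forward leaf's string extends past $k+m-1$), which is fine.
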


\begin{proof}
Let $P$ be such a pattern, which is a solid factor in $X$ at position $k$. By definition of a $z$-estimation, we know that $P$ occurs at position $k$ in some $S_j\in \mathcal{S}$. The minimizer computed for position $k$ of $S$ is $i=k-1+\mu$ with $\mu=f(P[1\dd \ell])$, since $S_j[k\dd k+\ell -1]=P[1\dd \ell]$. Therefore, the tree $\Tsuff$ (resp.~$\Tpref$) contains a leaf $k_1$ (resp.~$k_2$) corresponding to the longest substring of $S_j$ starting at position $i$ respecting the property $\Pi$, which starts with $P[\mu\dd  m]$ (resp.~the longest reversed substring of $S_j$ ending at position $i$ respecting the property, which starts with $P[1\dd \mu]^r$). Those leaf nodes both have a label $(i,j)=(k-1+\mu,j)$, hence the corresponding point is in $\mathcal{N}(P)$, which proves the result.
\end{proof}

Our index (i.e., $\Tsuff$, $\Tpref$, and the grid)
solves \INDEXING by answering 2D range reporting queries~\cite{DBLP:conf/compgeom/ChanLP11}. In the \emph{2D range reporting} problem, we are given a set $\mathcal{N}$ of $N$ points to be preprocessed, so that when one gives an axis-aligned rectangle as a query, we report the subset $\mathcal{K}$ of $\mathcal{N}$ such that point $p\in \mathcal{K}$ if and only if the rectangle encloses $p$. 
We consider a special case of this problem which suffices for our purposes. In particular, we use the following result.

\begin{lemma}[Section~2 of~\cite{DBLP:conf/latin/MakinenN06}]\label{lem:2demptiness}
Let $\mathcal{N}$ be a set of $N$ points coming from
pairing two permutations of $[1, N]$.
With $\cO(N)$ construction time, we can answer 2D range reporting queries in $\cO((1+k)\log N)$ time using $\cO(N)$ space, where $k$ is the number of points from $\mathcal{N}$ enclosed by the query rectangle.
\end{lemma}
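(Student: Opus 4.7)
The plan is to reduce the problem to wavelet-tree queries on a single permutation. I sort the points of $\mathcal{N}$ by $x$-coordinate and let $\pi(i)$ be the $y$-coordinate of the $i$-th point; since both coordinates are permutations of $[1,N]$, the sorted $x$-coordinates are exactly $1,\ldots,N$, so $\pi$ alone encodes $\mathcal{N}$, and a rectangle query $[x_1,x_2]\times[y_1,y_2]$ becomes ``report every $i\in[x_1,x_2]$ with $\pi(i)\in[y_1,y_2]$''.

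Next I would build a balanced wavelet tree on $\pi$: a binary tree on the $y$-alphabet $[1,N]$ in which each internal node $v$ representing a $y$-range $[a_v,b_v]$ stores a bit vector flagging, for the subsequence of positions whose $y$-value lies in $[a_v,b_v]$, whether that value falls in the left or right half. Equipping each bit vector with constant-time rank support gives $\cO(N\log N)$ bits, hence $\cO(N)$ machine words on a $\Theta(\log N)$-bit word RAM. Queries are answered by a top-down recursion that maintains, at each visited node $v$, the sub-interval $[x^{-}_v,x^{+}_v]$ of $v$'s subsequence-positions originating from $[x_1,x_2]$; this sub-interval is obtained from its parent's via a constant number of rank operations on the parent's bit vector. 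A node is explored only when its $y$-range intersects $[y_1,y_2]$ and the current $x$-sub-interval is non-empty; reaching a leaf triggers one output. Charging the work to (i) the at most two ``boundary'' root-to-leaf paths that straddle $y_1$ and $y_2$, each of length $\cO(\log N)$, plus (ii) one root-to-leaf path of length $\cO(\log N)$ per reported point, yields the claimed $\cO((1+k)\log N)$ bound.

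The remaining ingredient is the $\cO(N)$-time construction. Built level by level, each level requires redistributing the positions according to one bit of their $y$-value, which costs $\cO(N\log N)$ bit operations naively. To reach the $\cO(N)$ word-RAM bound I would first radix-sort the $y$-values (possible in $\cO(N)$ time because they form a permutation of $[1,N]$) and then produce each level's bit vector together with its rank index using word-parallel packing, so that a level costs $\cO(N/\log N)$ word operations; summing over the $\cO(\log N)$ levels gives $\cO(N)$. The main obstacle I anticipate is precisely this last step: making each per-level scan truly $\cO(N/\log N)$ demands a compact layout of the intermediate data so that the word-level primitives (shifts, masks, popcounts) apply uniformly across levels, which is the technical content of the Mäkinen--Navarro construction. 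Once this word-RAM bookkeeping is in place, the query analysis reduces to the standard wavelet-tree argument sketched above.
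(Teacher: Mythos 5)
The paper offers no proof of this lemma; it is quoted verbatim from Mäkinen--Navarro, and your reconstruction follows exactly the route of that source: view the point set as a single permutation $\pi$, build a balanced wavelet tree over $\pi$, and answer a rectangle query by the standard top-down recursion. Your space bound ($\cO(N\log N)$ bits, i.e.\ $\cO(N)$ words) and your query-time accounting are correct: every visited node either straddles $y_1$ or $y_2$ (at most two per level, hence $\cO(\log N)$ such nodes) or has its $y$-range inside $[y_1,y_2]$ and a nonempty $x$-subinterval, in which case it lies on the root-to-leaf path of some reported point; with $\cO(1)$ rank operations per node this gives $\cO((1+k)\log N)$.

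The genuine gap is the $\cO(N)$ construction time, and it is larger than you suggest. Writing out the bit vectors is indeed only $\cO(N/\log N)$ words per level, but \emph{computing} them requires the stable partition of the $N$ positions by one bit of $\pi$ at every level, and each position's destination needs $\Theta(\log N)$ bits to describe; no amount of shift/mask/popcount packing is known to compress this to $\cO(N/\log N)$ word operations per level. The best known bit-parallel wavelet-tree constructions for an alphabet of size $N$ run in $\Theta(N\log N/\sqrt{\log N})=\Theta(N\sqrt{\log N})$ time, not $\cO(N)$, and the original Mäkinen--Navarro analysis in fact only claims an $\cO(N\log N)$-time construction. So as written, your final step would fail. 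The honest fixes are either to settle for $\cO(N\log N)$ construction --- which is all the cited source establishes and which is harmless for Theorem~\ref{the:2d-structure}, whose overall construction budget is $\cO(nz)$ with $N=\cO(nz/\ell)$ --- or to supply a genuinely different construction idea; ``word-parallel packing'' alone does not carry the claim.
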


Note that, even if each occurrence of a pattern can be detected with 2D range reporting queries, the converse is not true: if a pattern $U$ has a minimizer at position $\mu$ and both $U[1\dd \mu]$ and $U[\mu\dd |U|]$ are solid factors occurring at respective positions $k$ and $k+\mu-1$, then a corresponding point will be detected with the 2D range reporting queries, \emph{even} if $U$ is not a solid factor itself. In that case, $U$ is by definition a substring of some $S_j\in \mathcal{S}$, but does \emph{not}  respect the property. We can simply compute all the points by 2D range reporting, and check naively for false positives by comparing the pattern with $X$ at the positions corresponding to these points. Conversely, one can have several points corresponding to a single occurrence, if the pattern appears in multiple strings in $\mathcal{S}$ at a same position, which could also increase the running time. To control the number of such additional checks (both for false positives and duplicate ones), we give a bound on the expected number of occurrences of a given pattern in the $z$-estimation $\mathcal{S}$:
\begin{figure}[t]
    \centering
    \includegraphics[width=\textwidth]{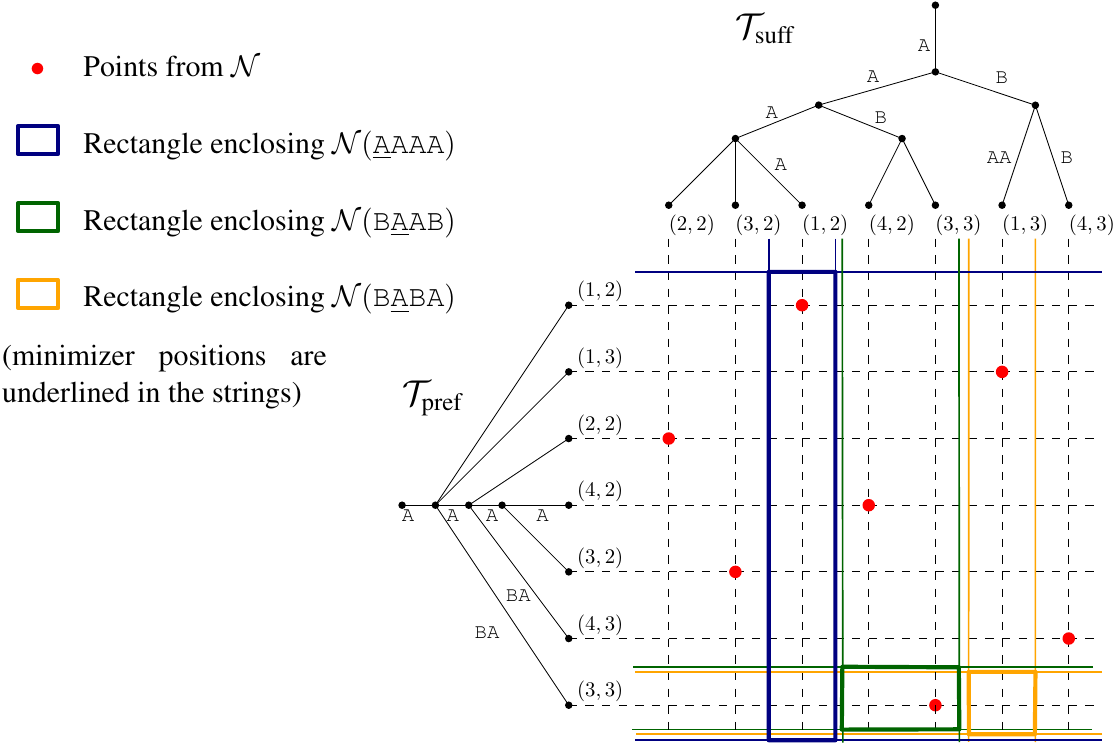}
    \caption{Our minimizer-based index for the weighted string from \cref{ex:weighted_string}, $\frac{1}{z}=\frac{1}{4}$, and the minimizers from Example~\ref{ex:2d-structure}. $\Tsuff$ is the forward minimizer solid factor tree and $\Tpref$ is the backward one.     Edges without labels are constructed for readability and mean that the parent and the children nodes correspond to the same string. Each leaf node representing the minimizer position $i$ in string $S_j$ is decorated with $(i,j)$.
    }\label{fig:2d-structure}   
\end{figure}

\begin{lemma}\label{lem: probabilistic bound}
For any string $P$ chosen uniformly at random from $\Sigma^m$, there are $\cO(nz/\sigma^m)$ points expected in $\mathcal{N}(P)$.
\end{lemma}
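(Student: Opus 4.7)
The plan is to prove this via a double-counting argument over all length-$m$ strings. Writing
$$\sum_{Q \in \Sigma^m} |\mathcal{N}(Q)| \;=\; \sum_{(i,j)\in \mathcal{M}_X} \bigl|\{Q \in \Sigma^m : (i,j) \in \mathcal{N}(Q)\}\bigr|$$
and then dividing by $\sigma^m$ recovers $E[|\mathcal{N}(P)|]$ for a uniformly random $P \in \Sigma^m$. The task thus reduces to bounding, for each fixed $(i,j) \in \mathcal{M}_X$, the number of patterns $Q$ that can place the point $(i,j)$ into $\mathcal{N}(Q)$.

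The key step is to show that this per-point count is at most $\ell$. Fixing $(i,j)$, the leaf for $(i,j)$ in $\Tsuff$ corresponds to one specific string $F$ (the maximal solid factor starting at $i$ in $S_j$), and the leaf for $(i,j)$ in $\Tpref$ corresponds to the reverse of one specific string $B$ (the maximal solid factor ending at $i$ in $S_j$). By definition of $\mathcal{N}(Q)$, having $(i,j) \in \mathcal{N}(Q)$ means that, for $\mu := f(Q[1\dd \ell])$, the string $Q[\mu \dd m]$ is a prefix of $F$ and $Q[1\dd \mu]^r$ is a prefix of $B^r$, equivalently $Q[1\dd \mu]$ is a suffix of $B$. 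Once $\mu$ is fixed, $Q$ is therefore completely determined by concatenating a specific length-$\mu$ suffix of $B$ with a specific length-$(m-\mu+1)$ prefix of $F$; the overlap at index $\mu$ is consistent since $F[1] = B[|B|] = S_j[i]$. Since $\mu$ can only take values in $\{1,\ldots,\ell\}$, at most $\ell$ distinct $Q$'s arise from any single $(i,j)$.

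Combining this per-point bound with $|\mathcal{M}_X| = \cO(nz/\ell)$ (in expectation, by \cref{lem:minimizer density}) yields
$$\sum_{Q \in \Sigma^m} |\mathcal{N}(Q)| \;\le\; \ell \cdot |\mathcal{M}_X| \;=\; \cO(nz),$$
and averaging over the $\sigma^m$ choices for $P$ gives $E[|\mathcal{N}(P)|] = \cO(nz/\sigma^m)$, as required.

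The main obstacle will be nailing down the uniqueness-per-$\mu$ step: this requires carefully unpacking what membership in the subtree intervals $\Isuff(\cdot)$ and $\Ipref(\cdot)$ imposes on $Q$ at the leaf $(i,j)$, and verifying that the minimizer constraint $\mu = f(Q[1\dd \ell])$ is automatically compatible rather than producing a contradiction; if a given $\mu$ turns out to be incompatible, the corresponding $Q$ simply contributes nothing, which only strengthens the bound. Everything else is routine.
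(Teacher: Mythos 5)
Your proof is correct, and the averaging step (bound $\sum_{Q}|\mathcal{N}(Q)|$ by $\cO(nz)$, then divide by $\sigma^m$) coincides with the paper's; but the way you bound the double sum is genuinely different. The paper maps each point of $\mathcal{N}(Q)$ injectively to a length-$m$ substring occurrence of $Q$ in the $z$-estimation $\mathcal{S}$ (a position--string pair $(k,j)$, property ignored), and since $\mathcal{S}$ has at most $(n-m+1)\lfloor z\rfloor\le nz$ such occurrences in total and each determines $Q$ uniquely, it gets $\sum_{Q}|\mathcal{N}(Q)|\le nz$ \emph{deterministically}, with no reference to minimizers at all. You instead apply Fubini over the grid points and prove a per-point fiber bound: for fixed $(i,j)$, the pattern $Q$ is determined by $\mu=f(Q[1\dd\ell])$ (it must equal a fixed suffix of $B$ glued to a fixed prefix of $F$), so at most $\ell$ patterns charge any one point, giving $\sum_{Q}|\mathcal{N}(Q)|\le \ell\,|\mathcal{M}_X|$. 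Your fiber argument is sound (the map $Q\mapsto f(Q[1\dd\ell])$ is injective on the fiber, and incompatible values of $\mu$ only shrink it), and it buys a cleaner structural picture of the trade-off: $\cO(nz/\ell)$ points times multiplicity $\ell$. What it costs is robustness: your final $\cO(nz)$ depends on $|\mathcal{M}_X|=\cO(nz/\ell)$, which holds only in expectation (or under the paper's standing assumption that $\mathcal{M}_X$ is fixed with that cardinality); if the minimizer density degenerates, your bound inflates to $\ell\cdot|\mathcal{M}_X|$ while the paper's $nz$ bound is unconditional. Under the paper's stated conventions both arguments deliver the lemma.
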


\begin{proof}
$\Tsuff$ and $\Tpref$ are constructed from a $z$-estimation $\mathcal{S}$, therefore each point in $\mathcal{N}(P)$ corresponds to an occurrence of $P$ in $\mathcal{S}$ (it might not respect property $\Pi$ however).
Since $\mathcal{S}$ has $(n-m+1)\lfloor z \rfloor\le nz$ substrings of length $m$, we have $\sum_{P\in\Sigma^m}|\mathcal{N}(P)|\le nz$, and hence if $P$ is chosen uniformly at random we obtain no more than $\frac{nz}{\sigma^m}$ points in expectation.
\end{proof}

\noindent{\bf Main Result.}~We arrive at the main result of the section:

\begin{theorem}\label{the:2d-structure}
Let $X$ be a weighted string of length $n$ over an alphabet of size $\sigma$, $\frac{1}{z}$ be a weight threshold, 
and $\ell>0$ be an integer. After $\cO(nz)$ construction time and using $\cO(nz)$ construction space, we can construct an index of $\cO(n+\frac{nz}{\ell}\log z)$ expected size answering \INDEXING queries of length $m\geq \ell$ in $\cO(m+(1+\frac{nz}{\sigma^m})\log \frac{nz}{\ell})$ expected time.
\end{theorem}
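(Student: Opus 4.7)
My plan is to combine the four components built up in the section—the heavy string $H_X$, the two minimizer solid factor trees $\Tsuff$ and $\Tpref$, and the 2D grid of minimizer pairs—into a single index, and then account separately for construction cost, size, and query time.

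For the construction I would first invoke \cref{theorem:z-estimation} to obtain a $z$-estimation $\mathcal{S}=(S_j,\pi_j)_{j=1}^{\lfloor z\rfloor}$ of $X$ in $\cO(nz)$ time and space, and compute $H_X$ from $X$ in $\cO(n)$ time. Scanning $\mathcal{S}$ under its properties yields the minimizer set $\mathcal{M}_X$ with $|\mathcal{M}_X|=\cO(\tfrac{nz}{\ell})$ in expectation by \cref{lem:minimizer density}. Then \cref{lem:minimizer tree size} produces $\Tsuff$ and $\Tpref$ in $\cO(nz)$ time and space, with each edge re-encoded as an interval of $H_X$ together with at most $\log_2 z$ mismatch positions, as justified by \cref{cor:extra info}. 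I would lexicographically sort the leaves of each tree so that every subtree corresponds to a contiguous leaf range, then feed the $\cO(\tfrac{nz}{\ell})$ pairs of leaves sharing a common minimizer label to \cref{lem:2demptiness}, building the grid in an extra $\cO(\tfrac{nz}{\ell})$ time. At this point $\mathcal{S}$ can be discarded. Adding sizes gives $\cO(n)$ for $H_X$, $\cO(\tfrac{nz}{\ell}\log z)$ expected for each tree, and $\cO(\tfrac{nz}{\ell})$ for the grid, matching the advertised total of $\cO(n+\tfrac{nz}{\ell}\log z)$.

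To answer a query on a pattern $P$ of length $m\ge\ell$, I would compute $\mu=f(P[1\dd\ell])$ in $\cO(\ell)$ time, then descend $\Tsuff$ spelling $P[\mu\dd m]$ and $\Tpref$ spelling $(P[1\dd\mu])^r$ to recover $\Isuff(P[\mu\dd m])$ and $\Ipref((P[1\dd\mu])^r)$, which together delimit the rectangle $\mathcal{N}(P)$. Querying \cref{lem:2demptiness} on this rectangle returns $k=|\mathcal{N}(P)|$ candidate labels in $\cO((1+k)\log\tfrac{nz}{\ell})$ time. By \cref{lem:checking rectangles} every valid occurrence of $P$ in $X$ is represented among the returned candidates, and by \cref{lem: probabilistic bound} $\mathbb{E}[k]=\cO(\tfrac{nz}{\sigma^m})$. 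Each returned label $(i,j)$ determines a candidate start position $i-\mu+1$; I would verify it against $X$ and deduplicate via a small hash table, emitting only the genuinely $z$-solid positions. Combining the $\cO(m)$ navigation cost with the grid-query cost gives the stated expected time $\cO(m+(1+\tfrac{nz}{\sigma^m})\log\tfrac{nz}{\ell})$.

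The step I expect to be the main obstacle is the tree traversal under the implicit edge encoding: each edge is materialised only through $H_X$ and a sorted list of up to $\log_2 z$ mismatches, so a naive character-at-offset lookup would add a $\log z$ factor to the $\cO(m)$ navigation. I would sidestep this by descending each edge with a running pointer into its mismatch list, keeping each character comparison amortized $\cO(1)$; the same representation also has to support candidate verification cheaply enough that the per-candidate cost does not dominate the grid-query term, which is the delicate accounting that ultimately justifies the query-time bound.
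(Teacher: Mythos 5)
Your overall architecture is exactly the paper's: build the $z$-estimation and $H_X$, extract $\mathcal{M}_X$, build $\Tsuff$ and $\Tpref$ with the heavy-string edge encoding, order the leaves, feed the matching minimizer labels into the grid of \cref{lem:2demptiness}, and at query time spell the two halves of $P$ around its leftmost minimizer and range-report. The size and construction accounting, and the use of \cref{lem:checking rectangles} for completeness and \cref{lem: probabilistic bound} for the expected candidate count, all match the paper.

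The genuine gap is in the verification step, and you half-identify it yourself at the end without resolving it. You propose to take each returned label $(i,j)$ and ``verify it against $X$,'' which means recomputing $\mathbb{P}(X[i-\mu+1\dd i-\mu+m]=P)$ from scratch, i.e., $\cO(m)$ work per candidate. With $\mathbb{E}[|\mathcal{N}(P)|]=\cO(nz/\sigma^m)$ this gives an expected verification cost of $\cO(m\cdot nz/\sigma^m)$, which is \emph{not} dominated by the claimed $\cO\bigl((1+\frac{nz}{\sigma^m})\log\frac{nz}{\ell}\bigr)$ term whenever $m>\log\frac{nz}{\ell}$ --- and since $m\ge\ell$ while $\log\frac{nz}{\ell}$ is typically far smaller than $\ell$, this breaks the stated query bound. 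The paper's fix is a concrete piece of stored data you are missing: for every minimizer in $\mathcal{M}_X$ it additionally records the $\log_2 z$ closest mismatching positions (against $H_X$) to the left and to the right of that minimizer. By \cref{lem:heavy-string} a solid factor can deviate from $H_X$ in at most $\log_2 z$ positions, so together with precomputed prefix products of the heavy-letter probabilities this lets each candidate be accepted or rejected in $\cO(\log z)$ time, giving total verification cost $\cO\bigl((\log z+\log\frac{nz}{\ell})(1+|\mathcal{N}(P)|)\bigr)=\cO\bigl(\log\frac{nz}{\ell}(1+|\mathcal{N}(P)|)\bigr)$ as required (and this extra data fits in the $\cO(\frac{nz}{\ell}\log z)$ size budget). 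Your concern about navigating edges under the implicit $H_X$-plus-mismatch encoding is legitimate but correctly dispatched by your running-pointer argument; it is the per-candidate check, not the descent, that needs the additional idea.
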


\begin{proof}
We first construct the minimizer solid factor trees of $X$ in $\cO(nz)$ time and space; the trees have size $\cO(\frac{nz}{\ell}\log z)$ (\cref{lem:minimizer tree size}). We preprocess the pairs of leaves for 2D range reporting queries in $\cO(\frac{nz}{\ell})$ time (\cref{lem:2demptiness}). When a pattern $P$ of length $m\ge \ell$ is given, we compute its leftmost minimizer $\mu$ in $\cO(\ell)$ time~\cite{DBLP:conf/esa/LoukidesP21}, compute the sides of the rectangle in $\cO(m)$ time by spelling $P[\mu\dd m]$ in  $\Tsuff$ and $(P[1\dd \mu])^r$ in $\Tpref$, and answer a 2D range reporting query in $\cO(\log \frac{nz}{\ell}(1+|\mathcal{N}(P)|))$ time (\cref{lem:2demptiness}). 
Finally, we must check, for every output point $(i,j)$, for a valid occurrence around the $i$th minimizer of the $j$th string of the $z$-estimation. To do this efficiently (i.e., in $\cO(\log z)$ time per point) \emph{without storing the $z$-estimation of $X$},
we store only the $\log_2 z$ closest mismatching positions to the left and to the right of every minimizer in $\mathcal{M}_X$ (\cref{lem:heavy-string}).
The total verification time is thus $\cO((\log z + \log(nz/\ell)) (|\mathcal{N}(P)|)+1)=\cO(\log \frac{nz}{\ell} (|\mathcal{N}(P)|+1))$. By \cref{lem: probabilistic bound}, we know that in expectation we have $|\mathcal{N}(P)|= \frac{nz}{|\Sigma|^m}$. We obtain an expected query time of $\cO(m+(1+\frac{nz}{|\Sigma|^m})\log \frac{nz}{\ell})$. The total size is $\cO(n+\frac{nz}{\ell}\log z)$, to store $H_X$ plus the index. 
\end{proof}

\begin{example}\label{ex:2d-structure}
Let $X$ be the weighted string from 
\cref{ex:weighted_string} and $\frac{1}{z}=\frac{1}{4}$.
The construction of our index is detailed in \cref{fig:2d-structure}, and query rectangles $\mathcal{N}(P)$ (resp.~$\mathcal{N}(P')$ and $\mathcal{N}(P'')$) are constructed for patterns $P=\underline{\texttt{A}}\texttt{AAA}$ (resp.~$P'=\texttt{B\underline{A}AB}$ and $P''=\texttt{B\underline{A}BA}$) whose minimizer positions are underlined. The blue rectangle $\mathcal{N}(P)$ contains exactly one point which corresponds to a substring \texttt{AAAA} in $S_2$ that respects the property. This substring corresponds to an occurrence of $P$ at position $1$ with probability $1\cdot \frac{1}{2}\cdot\frac{3}{4}\cdot\frac{4}{5}=0.3 >\frac{1}{4}$ in $X$. The green rectangle $\mathcal{N}(P')$ contains one point, which corresponds to an occurrence of $P'$ in $S_3$. However, this occurrence does not respect the property $\Pi$ (because $i+|P'|-1=2+4-1=5 > \pi_3[2]=4$) and therefore is a false positive in $X$ (it occurs with probability $0.15<\frac{1}{4}$). Finally, the orange rectangle $\mathcal{N}(P'')$ does not contain any point, because the pattern does not occur in the $z$-estimation. 
 \end{example}

\section{Space-efficient Construction of the Index}\label{sec:space-efficient-construction}

Recall that to construct the index in Section~\ref{sec:index}, we first construct a $z$-estimation, which temporarily takes $\Theta(nz)$ space during construction. 
In this section, we improve the space required for the construction of the index by designing a space-efficient algorithm for constructing a minimizer solid factor tree with only a moderate increase in the construction time.

\vspace{+1mm}
\noindent{\bf Main Idea.}~We start the index construction by \emph{simulating} the construction of an \emph{extended solid factor tree}. 
This is a trie of the solid factors of $X$ extended by $H_X$, the heavy string of $X$.
In particular, we maintain the subtree induced by the solid factors that \emph{start at minimizer positions} but discard the nodes that we do not need upon returning to their parents. We achieve this via traversing the full tree in a DFS order. Thus, even though the full tree size is $\cO(nz)$ (\cref{lem:classical tree size}), at any moment, we store \emph{only} the current leaf-to-root path plus the actual output. Therefore,  we use only $\cO(n + \frac{nz}{\ell}\log z)$ expected space at a cost of $\cO(nz\log \ell)$ time (\cref{lem:main}). We next reverse this tree (the solid factors are read from leaf to root there, while in the minimizer solid factor tree those are read from root to leaf), in $\cO(\frac{nz}{\ell}\log \frac{nz}{\ell}\log z)$ expected time, using $\cO(\frac{nz}{\ell}\log z)$ expected space (\cref{the:space_efficient}). \emph{Our approach thus trades construction time for less space}: in total, it adds up to $\cO(nz\log\ell +\frac{nz}{\ell}\log \frac{nz}{\ell}\log z)$ expected construction time (instead of $\cO(nz)$) but $\cO(n+\frac{nz}{\ell}\log z)$ expected construction space (instead of $\cO(nz)$). We thus arrive to Contribution 2.

\vspace{+1mm}
\noindent{\bf Key Concepts.}~The string $U\cdot H_X[j+1\dd n]$ (resp.~$(H_X[1\dd i-1]\cdot U)^r$) is called the \emph{right extension of the solid factor $U$} (resp.~\emph{left extension of the solid factor $U$}), if $U$ is a solid factor of $X$ starting at position $i$ and ending at position $j\ge i-1$. 

For such a $U$, we define a \emph{forward} \emph{extended solid factor tree}  of $X$ as a trie of all the reversals of $U\cdot H_X[j+1\dd n]$, and a \emph{backward} \emph{extended solid factor tree} of $X$ as a trie 
of all $H_X[1\dd i-1]\cdot U$. These trees can be constructed by looking only at the extensions of \emph{maximal} solid factors, namely those that cannot be extended into a longer solid factor~\cite{DBLP:conf/cpm/BartonKPR16}. We make use of the following lemma to bound the size of the trees:

\begin{lemma}[\cite{DBLP:conf/cpm/BartonKPR16}]\label{lem:classical tree size}
The extended solid factor trees have $\cO(nz)$ nodes.
\end{lemma}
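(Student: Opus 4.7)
My plan is to bound the node count of each extended solid factor tree by the number of maximal solid factor occurrences in $X$, combining Theorem~\ref{theorem:z-estimation} with the compacted-trie size bound recalled in Section~\ref{sec:prel}.

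First, I would invoke Theorem~\ref{theorem:z-estimation} to obtain a $z$-estimation $\mathcal{S}=(S_j,\pi_j)_{j=1}^{\lfloor z\rfloor}$ of $X$. By the defining property of $\mathcal{S}$, the right-maximal $z$-solid factor starting at position $i$ in $S_j$ is exactly $S_j[i\dd\pi_j[i]]$ (whenever $\pi_j[i]\ge i$), and every right-maximal solid factor occurrence in $X$ arises in this way. Ranging $j$ over $[1,\lfloor z\rfloor]$ and $i$ over $[1,n]$ therefore upper-bounds the total number of right-maximal solid factor occurrences by $n\lfloor z\rfloor=\cO(nz)$. Since each such occurrence contributes exactly one right extension $U\cdot H_X[j+1\dd n]$, the forward extended solid factor tree is a compacted trie on at most $\cO(nz)$ (reversed) input strings.

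Next, I would apply the compacted-trie size bound from Section~\ref{sec:prel}: a compacted trie of $N$ strings has $\cO(N)$ nodes, because each internal node has degree at least two and so their number is at most $N-1$. Plugging in $N=\cO(nz)$ yields $\cO(nz)$ nodes for the forward tree. A fully symmetric counting argument, using left-maximal solid factor occurrences and the left extensions $(H_X[1\dd i-1]\cdot U)^r$, gives the same $\cO(nz)$ bound for the backward tree.

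The main step I expect to require care is reconciling the ``trie of all reversals of $U\cdot H_X[j+1\dd n]$'' in the definition with the ``extensions of \emph{maximal} solid factors'' stressed in the remark preceding the lemma: a priori, a non-maximal solid factor's extension need not coincide with nor be a prefix of any maximal one's extension, since the switch from $S_m$ to $H_X$ happens at different positions. I would handle this by appealing to the construction of~\cite{DBLP:conf/cpm/BartonKPR16}, showing that every branching point induced by a non-maximal extension is already forced by a maximal one, so that restricting to maximal factors loses no leaves (equivalently, no nodes) of the compacted trie. Modulo this technical but well-understood reduction, the proof reduces to the two clean counting steps above.
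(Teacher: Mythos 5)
First, a point of comparison: the paper does not prove this statement at all --- it is imported verbatim from~\cite{DBLP:conf/cpm/BartonKPR16} --- so there is no in-paper argument to measure yours against; I can only assess your proposal on its own terms.

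On those terms, there is a genuine gap, and it is not primarily the one you flag at the end. Your counting bounds the number of \emph{input strings} of the tree (via the at most $n\lfloor z\rfloor$ right-maximal solid factor occurrences) and then invokes the bound ``a \emph{compacted} trie on $N$ strings has $\cO(N)$ nodes.'' But the extended solid factor trees are defined in \cref{sec:space-efficient-construction} as plain, uncompacted tries --- one node per distinct prefix of the reversed extensions --- and that is exactly the object for which the paper needs the lemma: in the complexity analysis preceding \cref{lem:main}, every node of this tree is created by a separate letter-by-letter \textsc{Down} call, so the $\cO(nz)$ bound must hold for the uncompacted node count. For an uncompacted trie, ``$\cO(nz)$ input strings'' only yields $\cO(n\cdot nz)$ nodes, since unary paths are not contracted. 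What actually has to be shown is that, for each starting position $i$, the number of distinct strings $V\cdot H_X[j+1\dd n]$ with $V$ a solid factor at $[i,j]$ --- equivalently, the number of solid factors starting at $i$ that end at a mismatch with $H_X$, since every unary heavy run collapses onto one such representative --- is $\cO(z)$. This is a genuinely different counting statement from ``at most $\lfloor z\rfloor$ right-maximal solid factors per position'': the factors ending at a mismatch do not form an antichain (one can be a proper prefix of another), so the total-probability argument does not apply to them directly, and stratifying by the number of mismatches (at most $\log_2 z$ by \cref{lem:heavy-string}) only yields $\cO(z\log z)$ per position. Closing the gap requires exploiting that every non-heavy letter has probability at most $\nicefrac{1}{2}$ together with an integrality/potential argument on $\lfloor p(V)\cdot z\rfloor$ down the trie of solid factors. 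Your final paragraph correctly senses that the reduction to maximal factors is delicate --- indeed, the extension of a non-right-maximal solid factor that cannot be extended by the heavy letter is a brand-new string, not a suffix of any maximal factor's extension --- but deferring both that issue and the compaction issue to ``the construction of~\cite{DBLP:conf/cpm/BartonKPR16}'' leaves essentially all of the quantitative content of the lemma unproved.
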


We next describe our algorithm (see \cref{fig:pseudocode} for pseudocode).

\begin{figure}[ht]
\begin{minipage}{0.46\textwidth}
\vspace{-0.9cm}
\begin{algorithm}[H]\small 
\caption{Construct-$\mathcal{T}$
}\label{alg:minimizer tree}
\begin{algorithmic}[1]
\State{{\bf Global variables:} $j=n$, $p=1.0$, string $S=\varepsilon$, set $\diff=\emptyset$, set $\text{Minimizers}=\emptyset$.}
\State{create a node $root$}
\State{run Augment-$\mathcal{T}(n+1,root)$}
\State{{\bf return} $root$}\Comment{The minimizer extended solid factor tree} 
\end{algorithmic}
\end{algorithm}
\end{minipage}\hfill{}
\begin{minipage}{0.48\textwidth}
\begin{algorithm}[H]\small 
\caption{DOWN$(i,u,\alpha)$}\label{alg:down}
\begin{algorithmic}[1]
\State {$S\gets \alpha S$}
\If{$\alpha\neq H_X[i]$}{~add $(i,\alpha)$ to $\diff$}
\EndIf
\If{$|S|\ge\ell$ and $p\cdot PP_H[i-1+\ell]/PP_H[j]\ge \frac{1}{z}$}
    \Comment{$S[1\dd \ell]$ is solid}
    \State $\text{Minimizers}\gets\text{Minimizers}\cup\{i+f(S[1\dd\ell])-1\}$
\EndIf
\State{add a node $v$ as a child of $u$}
\State{run Augment-$\mathcal{T}(i,v)$}
\end{algorithmic}
\end{algorithm}
\end{minipage}\hfill{}
\begin{minipage}[b]{0.46\textwidth}
\vspace{-0.9cm}
\begin{algorithm}[H]\small 
\caption{Augment-$\mathcal{T}(i,u)$}\label{alg:augument}
\begin{algorithmic}[1]
\For{$\alpha\in\Sigma$}
\If{$p=1$ and $\alpha=H_X[i-1]$}\Comment{If $U$ is empty} 
    \State{$j\gets j-1$}
    \State{run DOWN$(i-1,u,\alpha)$}
    \State{$j\gets j+1$}
\ElsIf{$p\cdot p_{i-1}[\alpha]\ge\frac{1}{z}$}
    \State{$p\gets p\cdot p_{i-1}[\alpha]$}
    \State{run DOWN$(i-1,u,\alpha)$}
\EndIf
\EndFor
\If{$i<n+1$}{~run UP$(i,u)$}\EndIf
\end{algorithmic}
\end{algorithm}
\end{minipage}\hfill{}
\begin{minipage}[b]{0.48\textwidth}
\begin{algorithm}[H]\small 
\caption{UP$(i,u)$:}\label{alg:up}
\begin{algorithmic}[1]
\If{$i\in\text{Minimizers}$}
\State{remove $i$ from $\text{Minimizers}$}
\State{set label of $u$ to $(i,\diff)$}
\ElsIf{$u$ has at most one child}{~merge $u$ with $\textsc{parent}(u)$}
\EndIf
\State{$p\gets \min(1,p\cdot p_{i}[S[1]]^{-1}$)}
\If{$(i,S[1])\in \diff$}{~remove $(i,S[1])$ from $\diff$}
\EndIf    
\State{remove the first letter from $S$} 
\end{algorithmic}
\end{algorithm}
\end{minipage}
\caption{The space-efficient algorithm for constructing the minimizer extended solid factor tree of a weighted string $X$.}\label{fig:pseudocode}
\end{figure}

\noindent{\bf Construction.}~We start by constructing the minimizer versions of the extended solid factor trees -- that is for the solid factors trimmed to their parts starting (resp.~ending) at the position of their minimizer. In particular, we show how to construct the forward extended solid factor tree (see Fig.~\ref{fig:MWST-SE-example}) -- the backward one can be constructed by simply doing the same operations on the reversed string, except that the minimizers will be computed on the reversed substrings.

\noindent\textbf{Initialization.}~We construct the tree with a DFS traversal of the full (non-minimizer) extended solid factor tree, starting from the root, which corresponds to $\varepsilon$, the empty string. Each node corresponds to the right extension of a solid factor $U$ of $X$ starting at a position $i$ (recall that $U$ can be empty, in which case its right extension is $H_X[i\dd n]$).
We assume that $H_X$ and $PP_H$ are computed before running the main algorithm and can be accessed just like $X$ and $n=|X|$.

\noindent\textbf{First Visit to a Node.}~When a node $u$ that  corresponds to a solid factor $U$ starting at position $i$ of $X$ and ending at position $j$ is created, we keep a pair of labels ($i$, $\diff$), where $\diff$  is the sequence (list) of mismatches between $U$ and $H_X[i\dd j]$.
By Lemma~\ref{lem:heavy-string}, the label of a given node has size $\cO(\log z)$. Note, also, that for any ancestor of a node $u$ its list of mismatches will be a suffix of $\diff$.

A single node and equivalently such a pair of labels can still represent multiple solid factors (for different values of $j$ -- if the suffix of the solid factor matches the heavy string); henceforth, by $U$ we mean the shortest such solid factor: $j$ is the largest element of $\diff$ or $j=i-1$ if $\diff$ is empty.

Additionally, given a node $u$ representing $S=U\cdot H_X[j+1\dd n]$, we check if the longest solid prefix of $S$ has length at least $\ell$; we check this in $\cO(1)$ time using value $p$ -- a global variable that denotes the probability of the current node -- that is the weight of $U$ and the precomputed array $PP_H$ of prefix products of $H_X$ for the heavy part. If this is the case, we ask for the minimizer $\mu$ of this solid factor, and mark the $(\mu-1)$th ancestor of $u$ as a minimizer node. Such minimizer can be found in $\cO(1)$ time  using a heap data structure~\cite{DBLP:books/daglib/0023376}, which stores information about the length-$k$ substrings of the length-$\ell$ prefix of $S=U\cdot H_X[j+1\dd n]$ and is updated in $\cO(\log \ell)$ time in each step of the traversal.

\noindent\textbf{Stepping Down to a Child Node.}~If $U$ is empty, then the node $u$ corresponds to a string $H_X[i\dd n]$. In this case, $p$ is not updated when creating its child $v$ corresponding to $H_X[i-1\dd n]$. This way, we ensure by induction that $p=1$ at the creation of each such node, and only for such a node, so that this can be checked in $\cO(1)$ time. 
We now assume that the created child does not correspond to $H_X[i-1\dd n]$. 
To create a child $v$ of the node $u$, corresponding to the right extension of the string $\alpha\cdot U$ for some letter $\alpha\in \Sigma$, one needs to check if $\alpha\cdot U$ is valid by computing its probability ($U$ is nonempty, or the letter $\alpha$ is different from $H_X[i-1]$). This is done using $p$, which we multiply by $p_{i-1}[\alpha]$. In any case, the labels of $v$ are computed from the labels of $u$ by decreasing the starting position and potentially adding a new label to $\diff$.

\noindent\textbf{Returning to the Parent Node.}~In the DFS we traverse the full extended solid factor tree, but we are only interested in the strings that start in those minimizer nodes. Thus, we remove the nodes which correspond to letters of the solid factors that appear before the position of the first minimizer and recursively compactify the tree during the traversal.

After all descendants of $u$ are created, we keep $u$ explicit if it is a minimizer node or if it has more than one (not removed) child. Otherwise, the node $u$ is made implicit by merging it with its parent. Finally, upon returning to the parent of $u$ we update $p$ by dividing it by $p_{i}[U[1]]$ (if $p<1$) and the list of differences by removing position $i$ if $i\in\diff$.

\begin{figure}[t]
     \begin{subfigure}[b]{0.22\textwidth}
     \centering
          \includegraphics[scale=0.35,trim={5mm 0 0 0.1cm},clip]{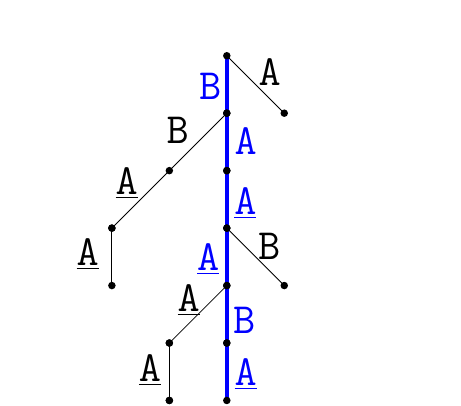}
             \caption{}
             \end{subfigure}\hspace{+2mm}
     \begin{subfigure}[b]{0.22\textwidth}
     \centering
          \includegraphics[scale=0.35,trim={5mm 0 0 0.1cm},clip]{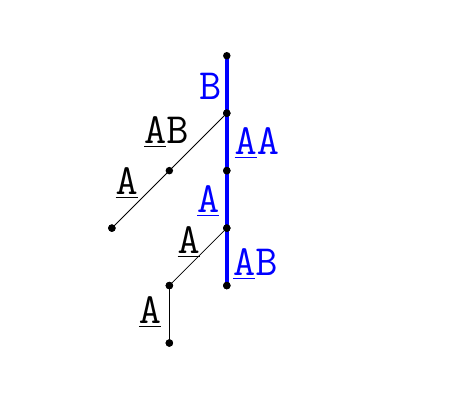}
          \caption{}
         \end{subfigure}\hspace{+2mm}
     \begin{subfigure}[b]{0.22\textwidth}
     \centering
    \includegraphics[scale=0.4,trim={2cm 0.6cm 1cm 0.1cm},clip]{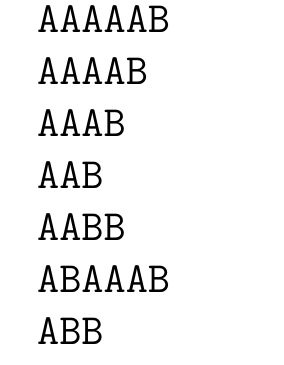}
        \caption{}
         \end{subfigure}
     \hspace{+5mm}
     \begin{subfigure}[b]{0.22\textwidth}
         \centering
          \includegraphics[scale=0.4,trim={0cm 1.7cm 0cm 0.1cm},clip]{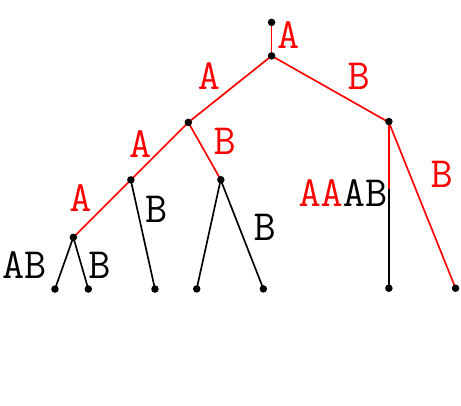}
          \caption{}
         \end{subfigure}
        \caption{(a) The forward extended solid factor tree with $X$ from \cref{ex:weighted_string}. 
        The blue edges correspond to the heavy string $H_X=\texttt{ABAAAB}$ (reversed). The minimizer positions are underlined. (b) The minimizer extended solid factor tree. The edges without any minimizer descendant nodes are pruned and the non-minimizer nodes are made implicit. (c) The lexicographically sorted strings corresponding to each path from a minimizer node to a root. (d) The  minimizer solid factor tree constructed by Theorem~\ref{the:space_efficient}. It contains the forward (top) tree from \cref{fig:2d-structure} as a red subtree. The edge with no  label is added in the figure to stress that $\texttt{AAB}$ also has a corresponding leaf. In the algorithm, we simply make the corresponding internal node explicit and treat it as a leaf node.}
        \label{fig:MWST-SE-example}
\end{figure}

\noindent\textbf{Complexity Analysis.}~By Lemma~\ref{lem:minimizer tree size} the final tree has $\cO(\frac{nz}{\ell})$ nodes in expectation. As for the construction space, observe that while a node $u$ is being processed, only the path between $u$ and the root is uncompacted, and contains at most $n$ nodes. All the other global variables also have size $\cO(n)$, therefore the total expected work space needed is $\Oh(\frac{nz}{\ell}\log z +n)$.
As for the construction time, note that the set of created nodes is exactly the set of nodes from the original extended solid factor tree (namely, without minimizers), which has size $\cO(nz)$ by Lemma~\ref{lem:classical tree size}. 
Reading $X$ and creating $H_X$ costs $n\sigma=\cO(n)$ time~\cite{DBLP:journals/iandc/Charalampopoulos19}; after that the total number of probabilities read is bounded by the number of tree nodes as for a single position we can read those in the order of non-increasing probabilities.

During the construction, all the operations cost $\cO(1)$ time with the exception of updating the minimizer heap which takes $\cO(\log \ell)$ time and storing a copy of the list of differences for each minimizer node in $\cO(\log z)$ time. 
Note that the last type of the operation does not influence the 
worst-case running time as we can abandon the computation upon learning that the total size of those lists reaches $nz$ -- in which case the classic (non-minimizer) data structure is more efficient. We have thus proved the following lemma.

\begin{lemma}\label{lem:main}
For any weighted string of length $n$, any weight threshold $\frac{1}{z}$, and any integer $\ell>0$, we can construct a representation of the minimizer extended solid factor trees in $\Oh(nz\log \ell)$ time using $\Oh(n+\frac{nz\log z}{\ell})$ expected space.
\end{lemma}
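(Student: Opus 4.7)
My plan has three parts: correctness, space, and time. For correctness, I would establish three invariants that hold whenever Augment-$\mathcal{T}(i,u)$ is entered, where $u$ represents the right extension of a solid factor $U=S[|S|]\cdots S[1]$ starting at position $i$ and ending at position $j$: (i) the global variable $p$ equals $\mathbb{P}(X[i\dd j]=U)$, with the convention $p=1$ precisely when $U$ lies entirely on the heavy spine (empty weighted part); (ii) $\diff$ is exactly the list of positions at which $U$ disagrees with $H_X[i\dd j]$, whose size is at most $\log_2 z$ by \cref{lem:heavy-string}; and (iii) the auxiliary heap stores the length-$k$ fragments inside the current length-$\ell$ prefix of $U\cdot H_X[j+1\dd n]$. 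These invariants are preserved because DOWN appends one letter to $S$ and updates $p$ and $\diff$ accordingly, while UP undoes exactly those edits; hence every node of the full extended solid factor tree is visited exactly once, and the minimizer marked on the first visit is the correct ancestor dictated by $f$. The compactification step in UP only merges degree-one non-minimizer nodes, which matches the definition of the compacted trie restricted to minimizer leaves.

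For the space bound I would argue that at any instant the algorithm stores: (a) the root-to-current-node path of the full extended tree, having at most $n$ nodes; (b) the global $\diff$, $S$, heap, and $\text{Minimizers}$ set, each of size $\cO(n)$; (c) the auxiliary arrays $H_X$ and $PP_H$, of size $\cO(n)$; and (d) the already-compactified minimizer subtree built so far, which by \cref{lem:minimizer tree size} has $\cO(nz/\ell)$ nodes in expectation, each carrying a frozen copy of the mismatch list of size $\cO(\log z)$. Summing yields the claimed $\cO(n+\frac{nz\log z}{\ell})$ expected working space.

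For the time bound I would charge the work to DOWN/UP transitions. Their total number equals twice the number of nodes of the full extended solid factor tree, which is $\cO(nz)$ by \cref{lem:classical tree size}. Each transition performs $\cO(1)$ pointer and arithmetic updates, at most one insertion/deletion in $\diff$, and one heap update in $\cO(\log \ell)$ time; the minimizer query at each first visit is then $\cO(1)$. The only extra cost is copying $\diff$ into a new minimizer node label, which costs $\cO(\log z)$ per minimizer and happens $\cO(nz/\ell)$ times, contributing $\cO(\frac{nz}{\ell}\log z)$, absorbed by $\cO(nz\log\ell)$. Reading $X$ and computing $H_X$ and $PP_H$ cost $\cO(n)$. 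Finally, I would invoke the footnote safeguard: if during the DFS the cumulative label size ever threatens to exceed $nz$, we abort and fall back to the classical $\cO(nz)$-time construction, so the worst-case bound holds unconditionally.

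The step I expect to be the main obstacle is pinning down the heavy-spine convention in the maintenance of $p$: Augment-$\mathcal{T}$ skips the multiplication by $p_{i-1}[H_X[i-1]]$ precisely along the heavy spine to keep $p=1$ there, and I must verify that this convention remains consistent with the constant-time solidity test $p\cdot PP_H[i-1+\ell]/PP_H[j]\ge\frac{1}{z}$ after every DOWN/UP pair, including the transition where the DFS first leaves the heavy spine and starts accumulating weighted probabilities. All other ingredients are routine applications of the lemmas on minimizer density, heavy-string distance, and the size of the classical extended solid factor tree.
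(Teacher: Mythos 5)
Your proposal is correct and follows essentially the same route as the paper: the space bound comes from the $\cO(n)$ root-to-leaf path plus $\cO(n)$ globals plus the $\cO(\frac{nz}{\ell}\log z)$ expected output, and the time bound charges $\cO(\log\ell)$ heap work to each of the $\cO(nz)$ nodes of the full extended tree (Lemma~\ref{lem:classical tree size}), with the $\cO(\log z)$ label-copying cost controlled by the same abort safeguard the paper's footnote uses. Your explicit invariants for $p$, $\diff$, and the heap, including the heavy-spine convention $p=1$, are exactly what the paper establishes (by induction) in its description of the DOWN/UP steps.
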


\noindent{\bf Main Result.}~The minimizer solid factors tree can be constructed from the minimizer extended solid factor tree in $\cO(\frac{nz}{\ell}\log \frac{nz}{\ell} \log z)$ expected time and $\cO(\frac{nz}{\ell}\log z)$ expected space (as proven below), hence we arrive at the main result of the section:

\begin{theorem}\label{the:space_efficient}
For any weighted string $X$ of length $n$, any weight threshold $\frac{1}{z}$, and any integer $\ell>0$, we can construct the minimizer solid factor tree in expected time $\cO(nz\log \ell+\frac{nz}{\ell}\log \frac{nz}{\ell} \log z)$ and space $\cO(n+\frac{nz}{\ell} \log z)$.
\end{theorem}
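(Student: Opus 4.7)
My plan is to build directly on Lemma \ref{lem:main}, which already delivers the minimizer \emph{extended} solid factor tree within the stated time and space budget. What remains is to convert this extended tree---in which every root-to-minimizer path is padded by a suffix of the heavy string $H_X$---into the minimizer solid factor tree, where each such path corresponds exactly to a solid factor $U$ (and nothing more). Conceptually, this amounts to ``trimming'' each extension at the correct depth and recompacting, and the challenge is to do so without ever materializing the factors as plain strings.

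First, I would traverse the minimizer extended solid factor tree produced by \cref{lem:main} and enumerate, at each minimizer node, a compact descriptor $(\mu, D)$ of the underlying solid factor $U$: $\mu$ is the starting position of $U$ in $X$, and $D$ is the sorted list of mismatch positions between $U$ and $H_X[\mu \dd j]$, where $j = \max D$ if $D \neq \emptyset$ and $j=\mu-1$ otherwise. By \cref{lem:heavy-string}, $|D| = \cO(\log z)$, so each descriptor occupies $\cO(\log z)$ words; over all $\cO(\tfrac{nz}{\ell})$ minimizer labels (in expectation, by \cref{lem:minimizer density}), the total size is $\cO(\tfrac{nz}{\ell}\log z)$, matching our space budget.

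Second, I would construct the minimizer solid factor tree as the compacted trie of these $N = \cO(\tfrac{nz}{\ell})$ descriptors via the classical sort-plus-LCP pipeline. The essential subroutine is a lexicographic comparator of two descriptors $(\mu_1, D_1)$ and $(\mu_2, D_2)$ running in $\cO(\log z)$ time. Using an LCE oracle on $H_X$ (preprocessable in $\cO(n)$ time and space, well within our budget), a comparison proceeds by a merge-like walk over $D_1 \cup D_2$: between consecutive mismatch events the two factors coincide with fragments of $H_X$ shifted by $\mu_2 - \mu_1$, so one LCE query on $H_X$ jumps to the next ``interesting'' position, at which either a diff resolves a mismatch or a length boundary ($j_1$ or $j_2$) decides the order. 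Since there are $\cO(\log z)$ mismatches in total, each comparison---and analogously each LCP computation---terminates in $\cO(\log z)$ steps. Sorting $N$ descriptors then costs $\cO(N \log N \cdot \log z) = \cO(\tfrac{nz}{\ell}\log\tfrac{nz}{\ell}\log z)$, and feeding the sorted sequence into the standard LCP-based compacted-trie builder adds only $\cO(N\log z)$ on top.

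Adding the two phases, the construction time is $\cO(nz\log\ell) + \cO(\tfrac{nz}{\ell}\log\tfrac{nz}{\ell}\log z)$ and the peak working space is $\cO(n + \tfrac{nz}{\ell}\log z)$, as claimed. The main obstacle I expect is engineering the $\cO(\log z)$ comparator correctly: one must merge $D_1$ and $D_2$ on the fly, translate positions between the two factors via the offset $\mu_2 - \mu_1$, interleave these with LCE jumps on $H_X$, and handle the boundary cases where one factor ends strictly before the other (read off directly from $\max D_i$ or $\mu_i - 1$). Once this primitive is in place, every other step---enumeration of minimizer descriptors from Lemma \ref{lem:main}, sorting, and compacted-trie assembly---is routine and respects the claimed bounds.
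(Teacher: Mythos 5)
Your proposal is correct and follows essentially the same route as the paper: take the minimizer extended solid factor tree from Lemma~\ref{lem:main}, represent each minimizer string by its start position plus its $\cO(\log z)$ diff list against $H_X$, sort these $\cO(\frac{nz}{\ell})$ strings lexicographically using an $\cO(\log z)$-time comparator built on an LCP/LCE structure for $H_X$, and assemble the compacted trie from the sorted order via consecutive LCPs. The only (immaterial) divergence is that you trim the heavy-string extensions, whereas the paper deliberately keeps them and relies on query-time verification instead.
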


\begin{proof}
As stated above it remains to show how to construct the minimizer solid factor tree from the minimizer extended solid factor tree.
We achieve that by reversing the tree, that is, by creating the trie of all the strings from the minimizer extended solid factor tree read from leaf to root (corresponding to strings $U\cdot H_X[j+1\dd n]$).
Note that for two such strings we can find their longest common prefix (LCP), and hence also compare them in $\cO(\log z)$ time with a use of an LCP data structure for $H_X$~\cite{DBLP:journals/tcs/LandauV86} (comparison of $\cO(\log z)$ intervals of $H_X$ and $\cO(\log z)$ differences).

We first sort those strings in lexicographic order. Since there are in expectation $\cO(\frac{nz}{\ell})$ of them, and a single comparison takes $\cO(\log z)$ time, this takes $\cO(\frac{nz}{\ell}\log \frac{nz}{\ell} \log z)$ time in total using any optimal comparison-based sorting algorithm~\cite{DBLP:books/daglib/0023376}. Now we construct the compacted trie of those strings node by node in the order of a DFS.
Each edge will be labeled with an interval of $H_X$ and a list of at most $\log_2 z$ differences.

We start from creating a single edge from root to a leaf representing the first string.
Now we iterate over all remaining strings in lexicographic order -- we first compute the length of the LCP of this string, and the previous one, next starting from the leaf representing the previous string we move up the tree node by node to find its ancestor at depth equal to the length of this LCP. If this node turns out to be an implicit one, then we make it explicit by dividing the edge (and hence also the interval of $H_X$ and the list of differences). We finish by creating a new child of the reached node -- this child becomes the leaf representing the new string.

Unlike the construction from~\cite{DBLP:conf/cpm/BartonKPR16} we do not need to trim the $H_X$ parts after constructing the tree, as in our query algorithm we must verify the weight for each match anyway.  

\end{proof}

\section{Practically Fast Querying Without a Grid}\label{sec:fast_pm}

In this section, we describe a simple and fast querying algorithm
that does not require the grid to be constructed on top of the trees. This querying algorithm has worse guarantee than that in Theorem~\ref{the:2d-structure} but performs much better in practice, due to its simplicity, as we show 
in the experimental evaluation.

Like in the previous construction let $\mu=f(P[1\dd \ell])$.
Without loss of generality we assume that $\mu\le \frac{m}{2}$ (otherwise we swap the roles of the parts of $P$ and of trees $\Tsuff$ and $\Tpref$). Let $u$ be the node reached by reading $P[\mu\dd m]$ in $\Tsuff$. We can separately check each leaf in the subtree of $u$ as a potential candidate in $\cO(m)$ time: 
we can do this assuming we have random access to $X$.
This time we cannot use the $\frac{nz}{\sigma^m}$ bound on the expected number of candidates from Lemma~\ref{lem: probabilistic bound}. However, $P[\mu\dd m]$ has length at least $m/2$, and hence the expected number of candidates can still be bounded by $\sum_{k=\lceil m/2\rceil}^{m}\frac{nz}{\sigma^k}\le 2\frac{nz}{\sigma^{m/2}}$ using a similar argument. 
Thus we can answer a query in $\cO(m\cdot (1+\frac{nz}{\sigma^{m/2}}))$ expected time. 

\begin{example}\label{ex:fast_querying}
For the weighted string from \cref{ex:weighted_string} and pattern $P'=\texttt{B\underline{A}AB}$ the grid based construction finds a single candidate $(3,3)$ to check (see \cref{fig:2d-structure} and \cref{ex:2d-structure}). In case of the simpler querying algorithm we only locate one  part of the pattern in one of the trees, in this case $|\texttt{B}\underline{\texttt{A}}|<|\underline{\texttt{A}}\texttt{AB}|$, hence we locate $\texttt{AAB}$ in $\Tsuff$. We find two leaves with labels $(3,3)$ and $(4,2)$,  that is candidate occurrences starting at positions $2=3-|\texttt{B}\underline{\texttt{A}}|+1$ and  $3=4-|\texttt{B}\underline{\texttt{A}}|+1$.  Each of those can be checked naively to get the occurrence probabilities  $3/20$ and $3/40$ respectively;  each probability is smaller than $1/4$.
\end{example}

\section{Related Work}\label{sec:related}

The \emph{Weighted Indexing} problem was introduced by the work of Iliopoulos et al.~\cite{DBLP:journals/fuin/IliopoulosMPPTT06}; the authors gave a tree-based index supporting $\cO(m+\OUTPUT)$-time queries. The construction time, space, and size of the index was, however, $\cO(n\sigma^{z\log z})$. Their index is essentially a compacted trie of all the solid factors of $X$. Amir et al.~\cite{DBLP:conf/cpm/AmirCIKZ06} then reduced the Weighted Indexing problem to the so-called \emph{Property Indexing} problem in a standard string of length $\cO(nz^2\log z)$. For the latter problem, Amir et al. proposed a super-linear-time construction and $\cO(m+\OUTPUT)$-time queries. Later, it was shown that Property Indexing can be performed in linear time~\cite{DBLP:journals/iandc/BartonK0PR20}. This directly gives a solution to the Weighted Indexing problem with construction time, space, and size $\cO(nz^2\log z)$, preserving the optimal query time. 
The state-of-the-art indexes for Weighted Indexing are the weighted suffix tree (\WST)~\cite{DBLP:journals/iandc/BartonK0PR20,DBLP:conf/cpm/BartonKPR16} and the weighted suffix array (\WSA)~\cite{DBLP:journals/jea/Charalampopoulos20}; see the Introduction for more details. To conclude, \emph{there are currently no practical indexing schemes for Weighted Indexing mainly due to their prohibitive size and space requirements}.

There is substantial work on practical indexing schemes for probabilistic/uncertain data; e.g., for range queries~\cite{tao1,DBLP:journals/vldb/ChenGZJCZ17,tao2,rangereport,range3}, top-$k$ queries~\cite{topk1,topk2,topk3,topk4}, nearest neighbor queries~\cite{talg,tkde_nn,edbt_nn}, sql-like queries~\cite{DBLP:conf/sigmod/QiJSP10}, inference queries~\cite{DBLP:conf/sigmod/KanagalD09}, and probabilistic equality threshold queries~\cite{petq}. These indexes were developed for different uncertainty data models, such as tuple uncertainty and attribute uncertainty~\cite{DBLP:conf/icde/SinghMSPHNC08}. Under tuple uncertainty, the presence of a tuple in a relation is probabilistic, while under attribute uncertainty a tuple is certainly present in a database but one or more of its attributes are not known with certainty. Several indexes are built on R-trees or inverted indexes~(e.g.,~\cite{petq,DBLP:conf/ssd/DaiYMTV05}), while others are built on R$^*$-trees~\cite{tao1}. There are also specialized indexes, e.g., for  
probabilistic XML queries~\cite{xml_index} or uncertain graphs~\cite{graph1,graph2}. 
Our work differs substantially from the above in the supported data model and query type. 

\section{Experimental Evaluation}\label{sec:exp}

\subsection{Data and Setup}

\noindent {\bf Data.}~We used three real weighted strings which model variations found in the DNA ($\sigma=4$) of different samples of the same species.  The chromosomal or genomic location of a gene or any other genetic element is called a \emph{locus} and alternative DNA sequences at a locus are called \emph{alleles}. Allele frequency, or gene frequency, is the relative frequency of an allele at a particular locus in a population, expressed as a fraction or percentage.
Thus, alleles have a natural representation as  weighted strings: we model the probability $p_i(\alpha)$ in these strings as the relative frequency of letter $\alpha$ at position $i$ among the different samples. 

We next describe the datasets we used (see also Table~\ref{table:data}):
\begin{table}[t]
\caption{Characteristics of the real datasets we used.}
\label{table:data}
\centering
\resizebox{0.68\textwidth}{!}{%
\begin{tabular}{|c|| c |c | c| c|}
\hline \multirow{2}{*}{\bf{Dataset}}   & {\bf \# of}~   & {\bf Length} & {\bf $\Delta$}~ &  {\bf Size of $z$-estimation} \\ 
& {\bf samples} & $n$ & {\bf as percentage of $n$} & {\bf for the default $z$ (MBs)} \\
\hline \hline
\bf \SARS & $1,181$ & $29,903$ & $3.6\%$ & $31$ \\ \hline
\bf \EFM & $1,432$ & $2,955,294$ & $6\%$ & $378$ \\ \hline
\bf \Z & $2,504$ & $35,194,566$ & $3.2\%$ & $282$ \\ \hline  
\bf \RS & N/A & $6,053,462$ & $100\%$ & $96.9$\\\hline 
\end{tabular}}
\vspace{+2mm}
\end{table}

\begin{itemize}
\item \SARS: The full genome of \emph{SARS-CoV-2} (isolate Wuhan-Hu-1)~\cite{footnote5} 
combined with a set of single nucleotide polymorphisms (SNPs)~\cite{footnote6} 
taken from $1,181$ samples~\cite{SARS}.
\item \EFM: The full chromosome of \emph{Enterococcus faecium} Aus0004 strain (CP003351)~\cite{footnote3} 
combined with a set of SNPs~\cite{footnote4} 
taken from $1,432$ samples~\cite{EFM}.
\item \Z: The full chromosome 22 of the \emph{Homo sapiens} genome (v.~GRCh37)~\cite{footnote7} 
combined with a set of SNPs~\cite{footnote8} 
taken from the final phase of the 1000 Genomes Project (phase 3) 
representing $2,504$ samples~\cite{1000genomes}.  
\end{itemize}

The percentage of positions where more than one letter has a probability of occurrence larger than $0$ is denoted by $\Delta$. 

We also used another real weighted string comprised of Received Signal Strength Indicator (RSSI) (i.e., signal strength) values of sensors; $p_i(\alpha)$, $i\in[1,n]$, corresponds to the ratio of IEEE 802.15.4 channels that received an RSSI value $\alpha$ at time $i$, and $\sigma=91$~\cite{rssi_data}. Its characteristics are in Table~\ref{table:data}. Furthermore, we used a family of weighted strings, generated from \RS. Each such string is denoted by $\RS_{n,\sigma}$, where $n\in\{2,4,6,8\}$ is the number of times the length of this string is larger than that of $\RS$ and $\sigma\in \{16,32,64\}$ is its alphabet size. To increase $n$, we appended $\RS$ to itself. To reduce $\sigma$, we replaced each RSSI value (integer) $v$ with $v\mod y$, where $y\in \{16,32,64\}$ is the desired alphabet size. For all these datasets, $\Delta=100\%$.

\begin{figure*}
    \centering    
\subfloat[][\SARS]
{
    \includegraphics[width=.205\textwidth]{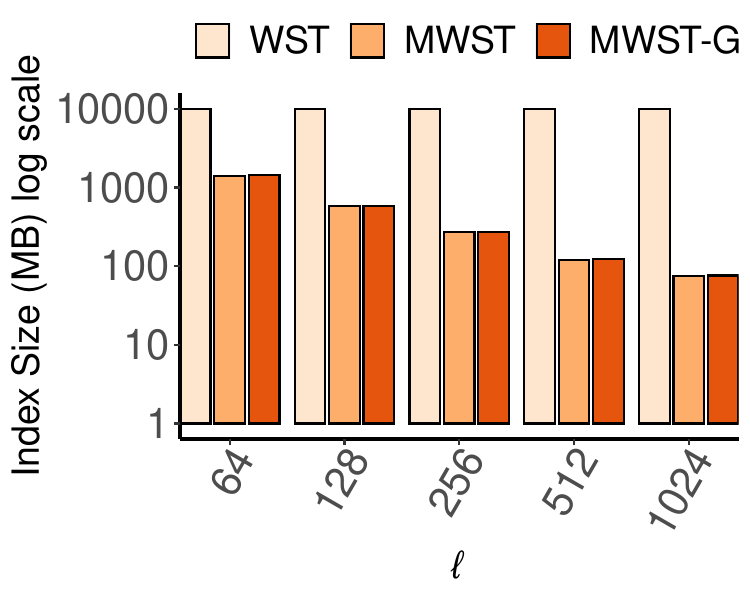}\label{fig:index_size_vs_ell_tree2}
}
\subfloat[][\SARS] 
{
    \includegraphics[width=.205\textwidth]{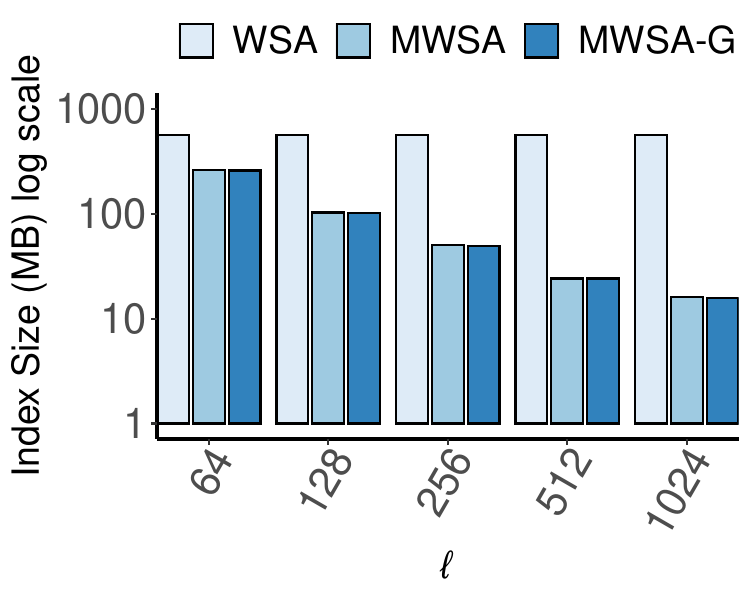}
    \label{fig:index_size_vs_ell_array2}
}
\subfloat[][\EFM]
{
    \includegraphics[width=.205\textwidth]{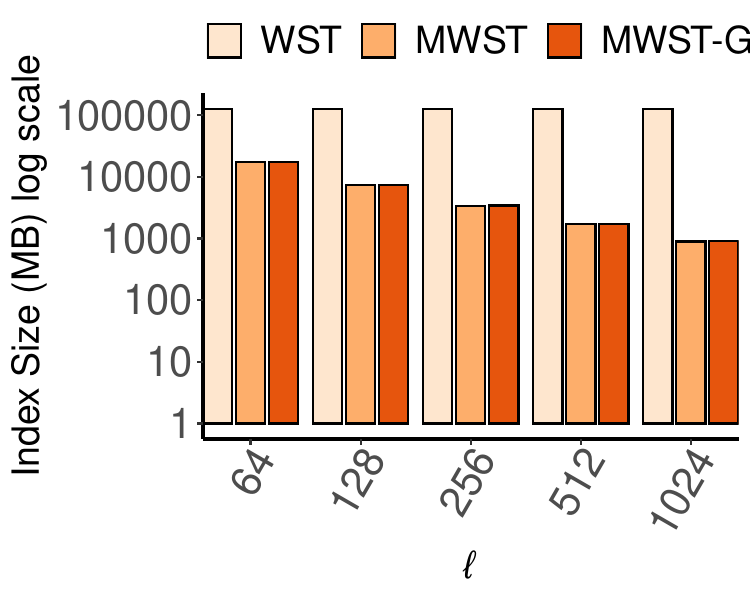}\label{fig:index_size_vs_ell_tree1}
}
\subfloat[][\EFM] 
{
    \includegraphics[width=.205\textwidth]{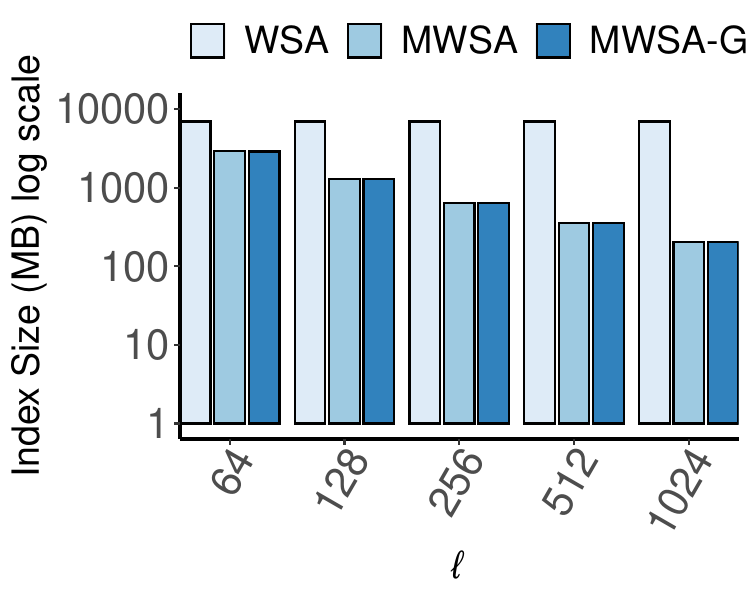}
    \label{fig:index_size_vs_ell_array1}
}
\\
\subfloat[][\Z]
{
    \includegraphics[width=.205\textwidth]{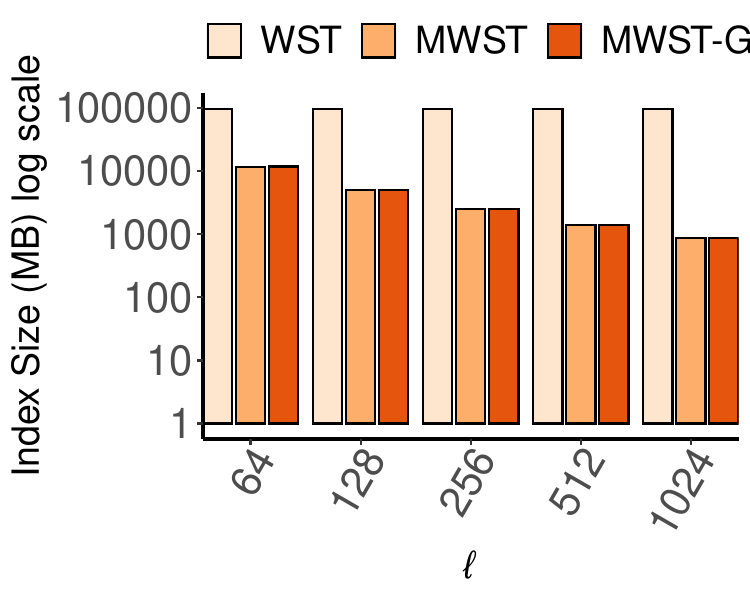}\label{fig:index_size_vs_ell_tree3}
}
\subfloat[][\Z] 
{
    \includegraphics[width=.205\textwidth]{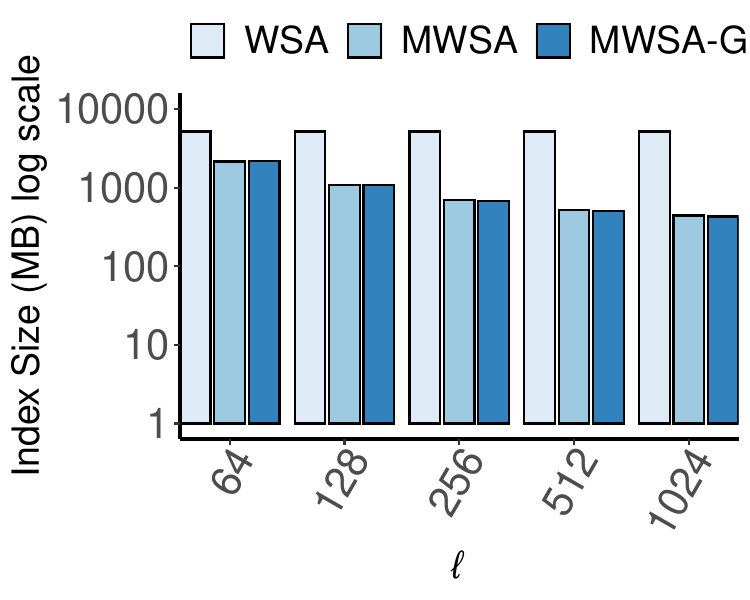}
    \label{fig:index_size_vs_ell_array3}
}
\caption{Index size (log scale, MB) vs. $\ell$.}\label{fig:index_size1}
\vspace{+2mm}
\subfloat[][\SARS]
{
    \includegraphics[width=.205\textwidth]{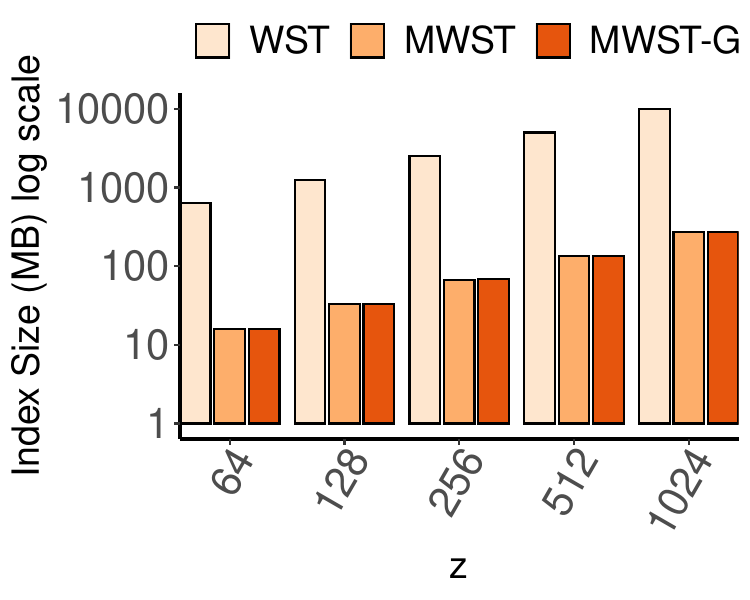}
    \label{fig:index_size_vs_z_tree2}
}
  \subfloat[][\SARS]
{
    \includegraphics[width=.205\textwidth]{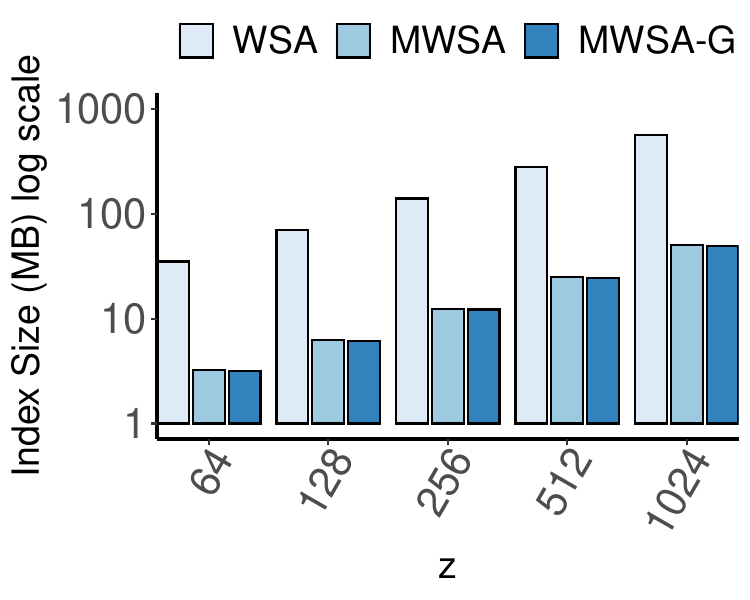}
    \label{fig:index_size_vs_z_array2}
}
\subfloat[][\EFM] 
{
    \includegraphics[width=.205\textwidth]{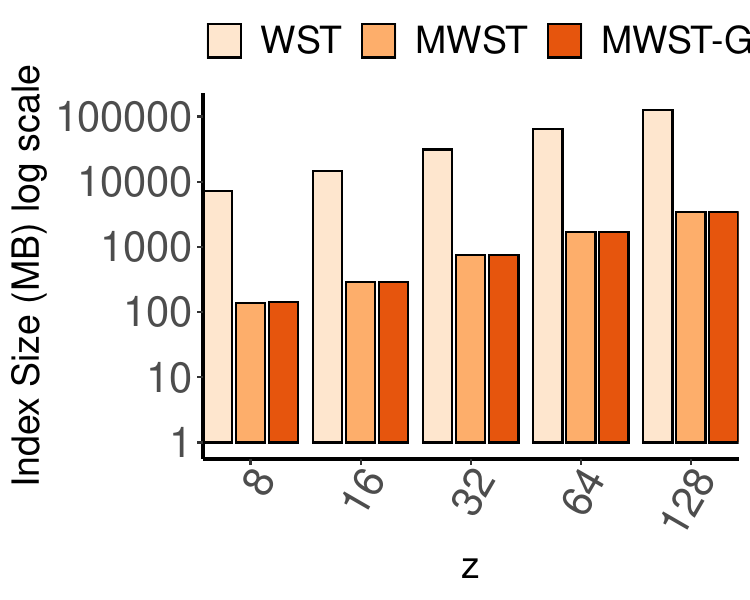}
    \label{fig:index_size_vs_z_tree1}
}
 \subfloat[][\EFM]
{
\includegraphics[width=.205\textwidth]
{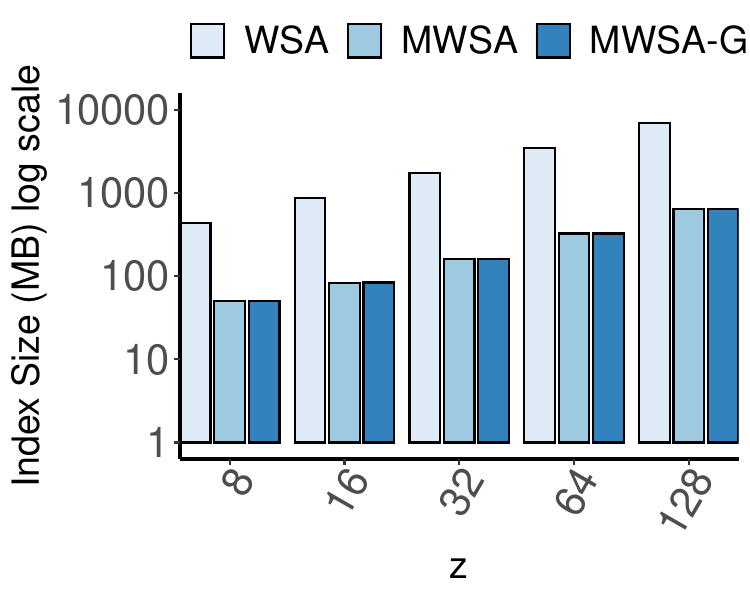}
\label{fig:index_size_vs_z_array1}
}\\
\subfloat[][\Z]
{
   \includegraphics[width=.205\textwidth]{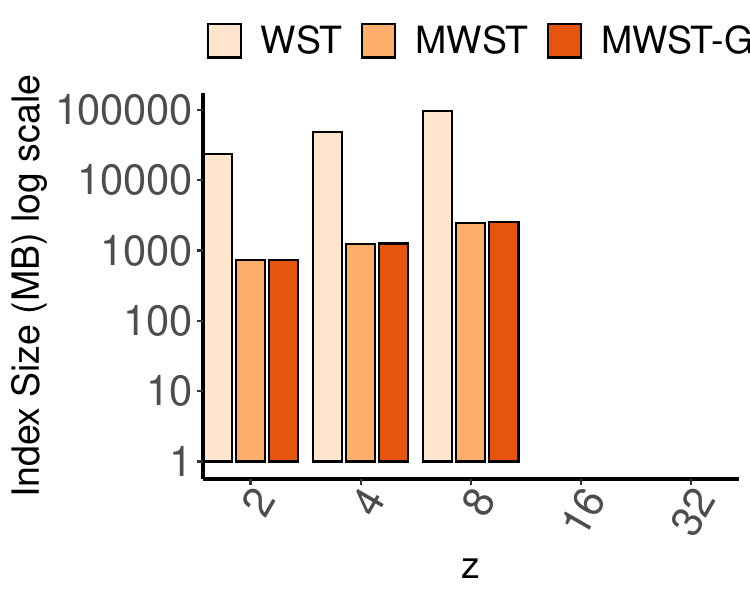}
   \label{fig:index_size_vs_z_tree3}
}
 \subfloat[][\Z]
{
    \includegraphics[width=.205\textwidth]{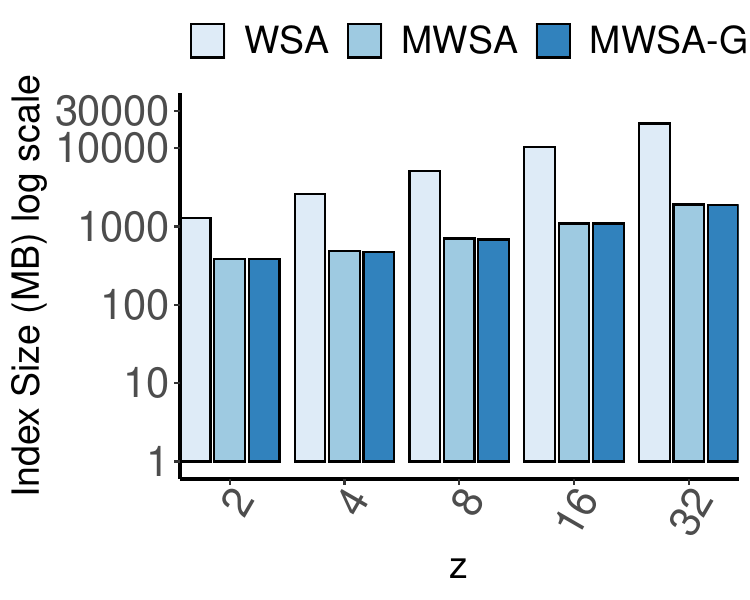}
    \label{fig:index_size_vs_z_array3}
}
    \caption{Index size (log scale, MB) vs. $z$. The tree-based indexes for \Z (Fig.~\ref{fig:index_size_vs_z_tree3}) needed $>$ 252GB of space when $z\geq 16$ and hence could not be constructed.}
    \label{fig:index_size2}
\end{figure*}

\begin{figure*}
    \centering
 \subfloat[][\EFM]
{
    \includegraphics[width=.205\textwidth]
    {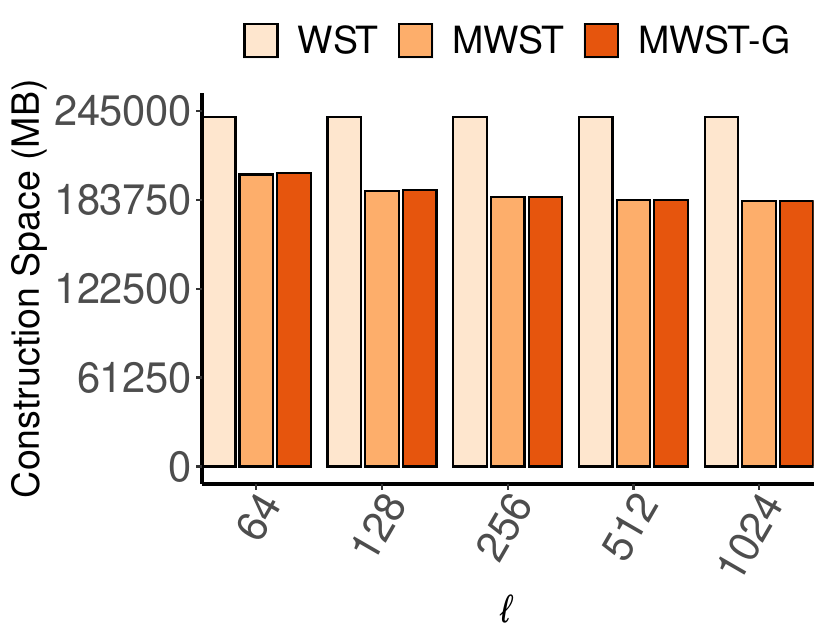}
    \label{fig:construction_space_vs_ell_tree1}
}
\subfloat[][\EFM]
{
    \includegraphics[width=.205\textwidth]{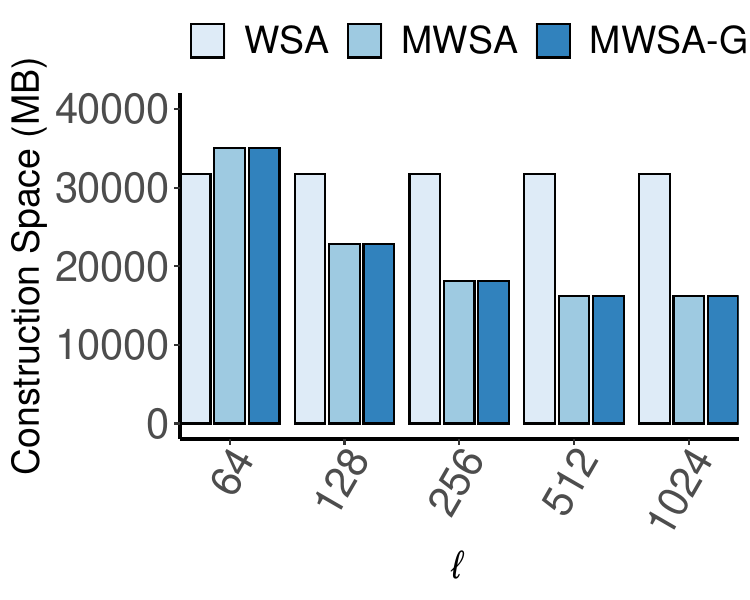}
    \label{fig:construction_space_vs_ell_array1}
}
   \subfloat[][\Z]
{
    \includegraphics[width=.205\textwidth]{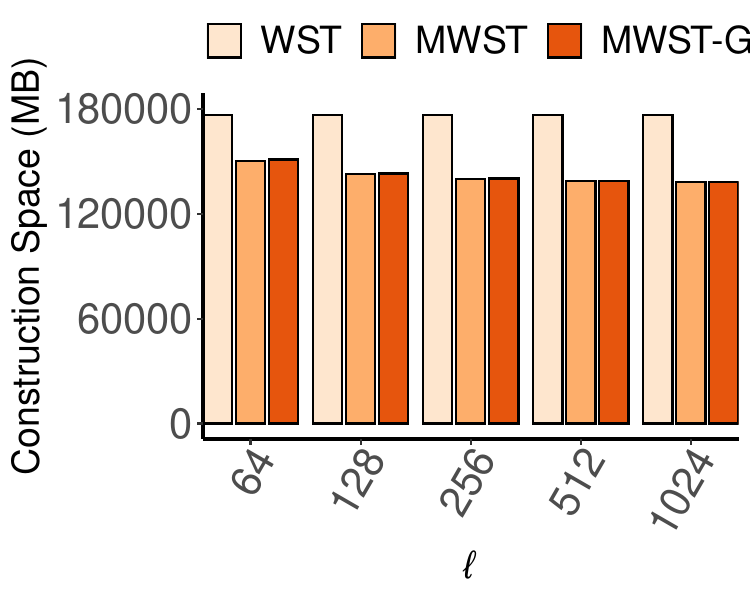}
    \label{fig:construction_space_vs_ell_tree3}
}
 \subfloat[][\Z]
{
    \includegraphics[width=.205\textwidth]{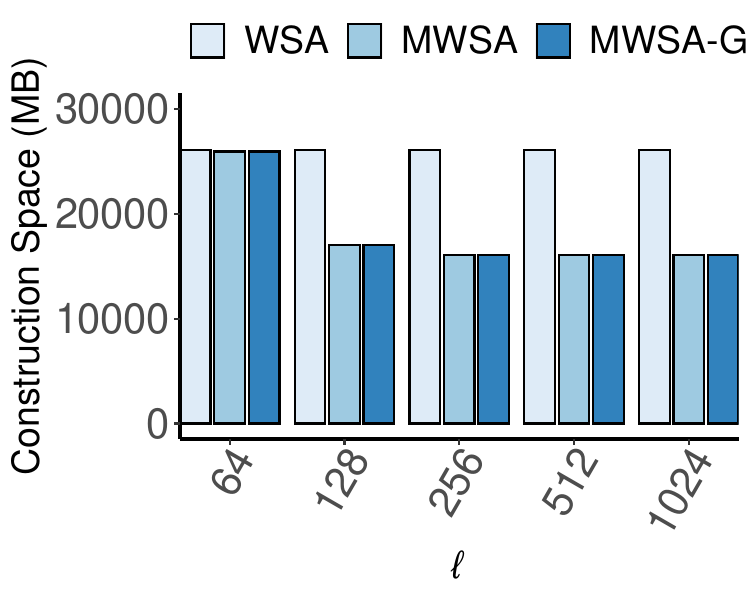}
    \label{fig:construction_space_vs_ell_array3}
}
\caption{Construction space (MB) vs. $\ell$.}\label{fig:construction_space1}
\subfloat[][\EFM]
{
    \includegraphics[width=.205\textwidth]{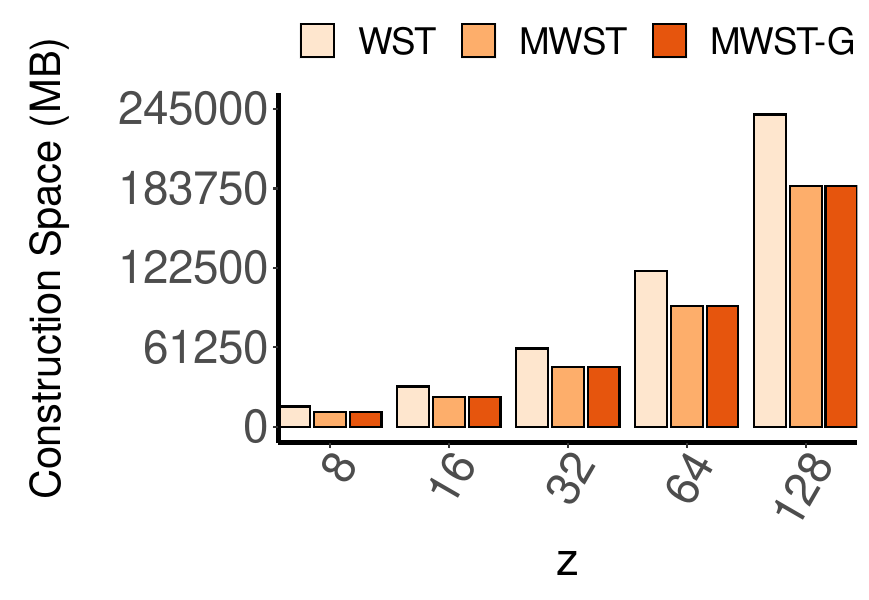}
    \label{fig:construction_space_vs_z_tree1}
}
\subfloat[][\EFM]
{
    \includegraphics[width=.205\textwidth]{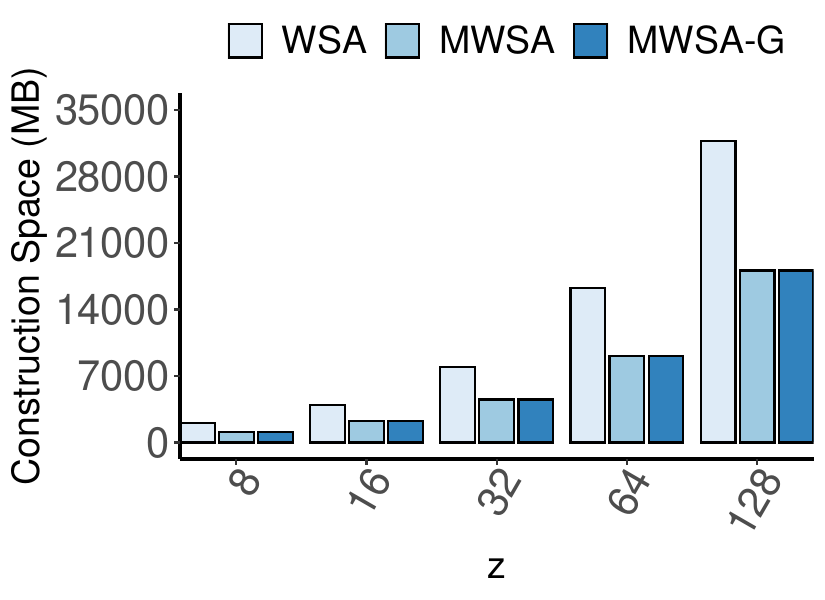}
    \label{fig:construction_space_vs_z_array1}
}
\subfloat[][\Z]
{
    \includegraphics[width=.205\textwidth]{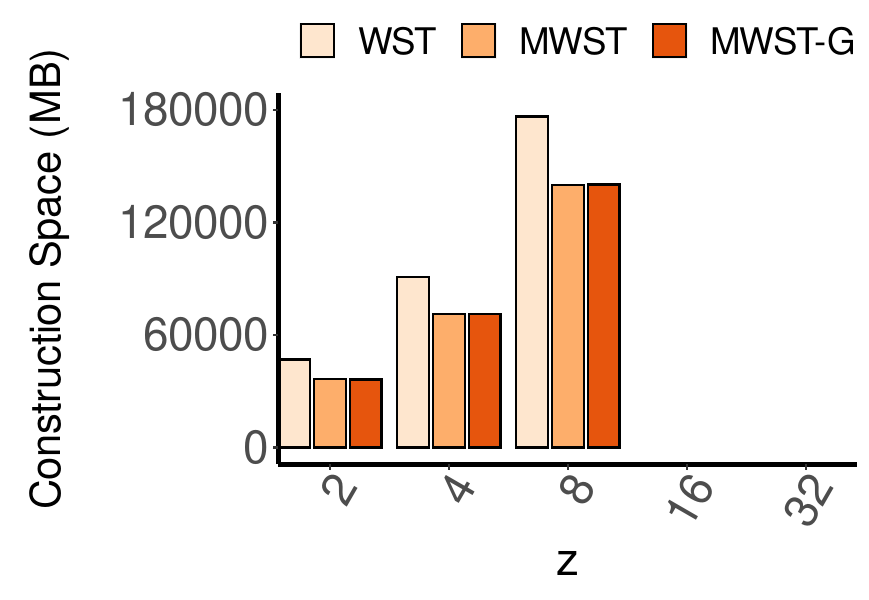}
    \label{fig:construction_space_vs_z_tree3}
} 
 \subfloat[][\Z]
{
     \includegraphics[width=.205\textwidth]{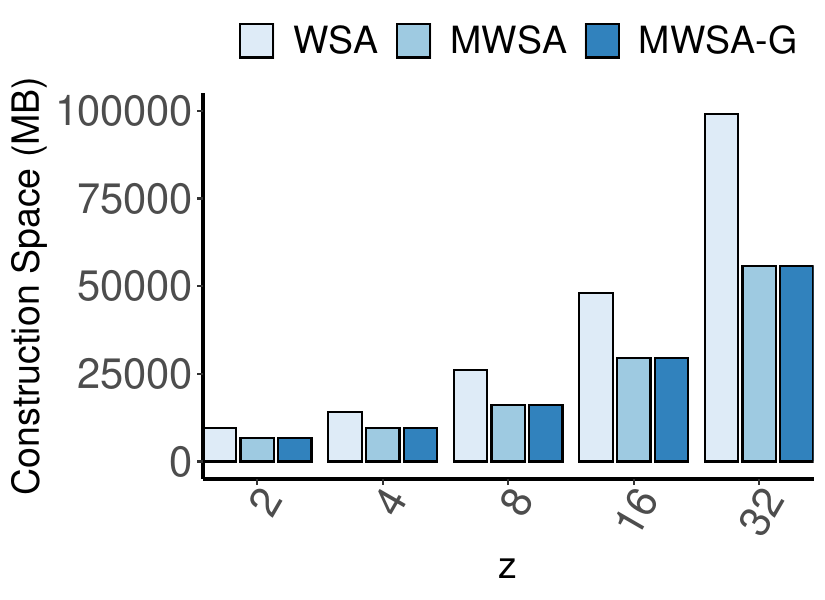}
     \label{fig:construction_space_vs_z_array3}
}
  \caption{Construction space (MB) vs. $z$. The tree-based indexes for \Z (Fig.~\ref{fig:construction_space_vs_z_tree3}) needed $>$ 252GB when $z\geq 16$ and hence could not be constructed.}
    \label{fig:construction_space2}    
\end{figure*}

\begin{figure*}
\centering   
    \subfloat[][\SARS]{
    \hspace{-2mm}
    \includegraphics[width=.24\textwidth]{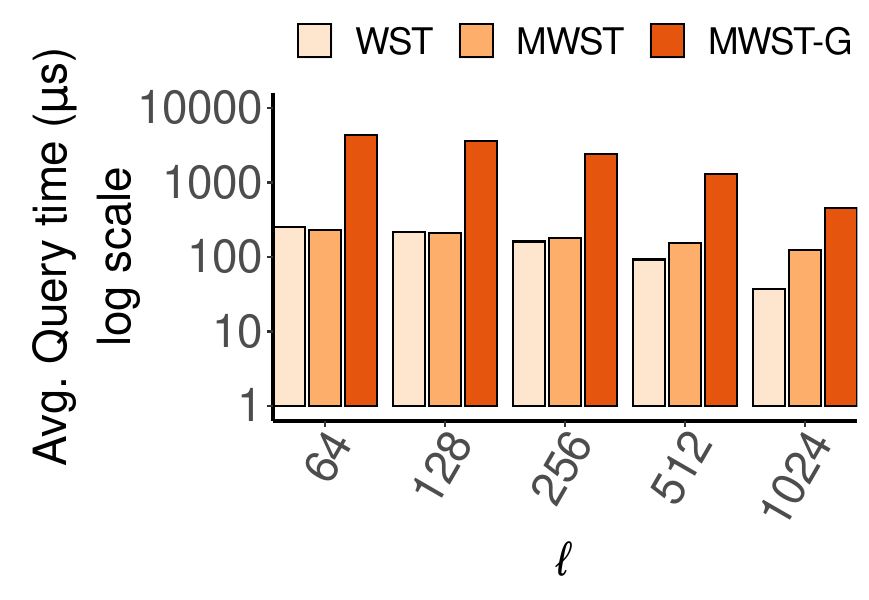}
    }
    \subfloat[][\SARS]{
    \hspace{-2mm}
    \includegraphics[width=.24\textwidth]{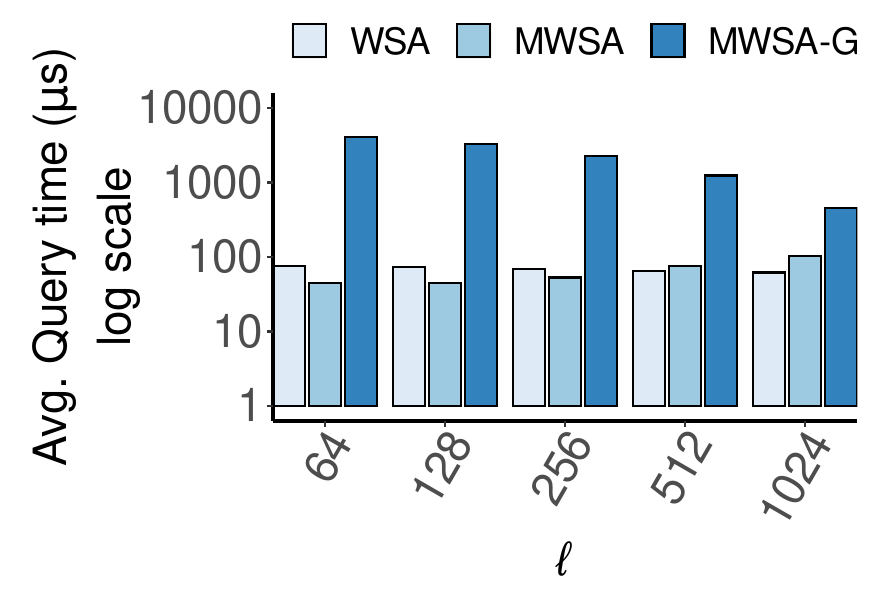}
    } 
    \subfloat[][\EFM]{
    \hspace{-2mm}
    \includegraphics[width=.24\textwidth]{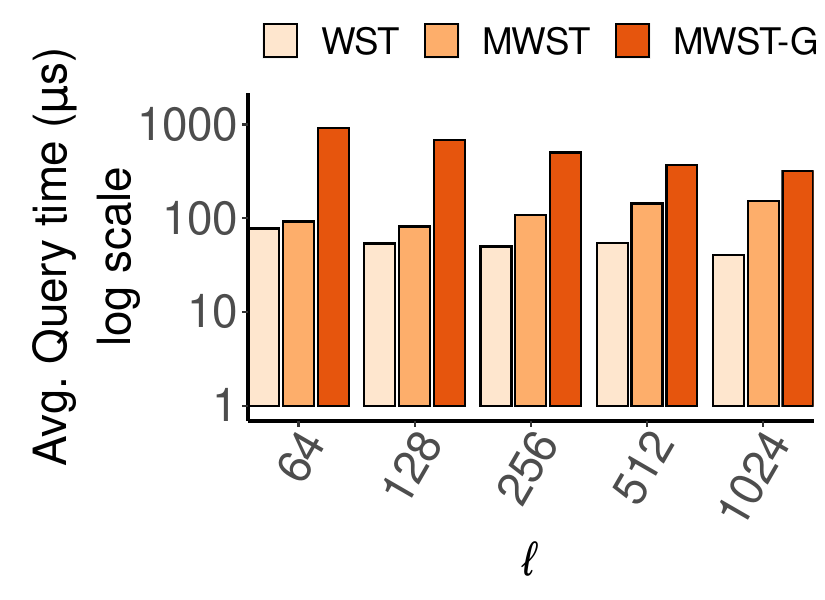}
    }
    \subfloat[][\EFM]{
    \hspace{-2mm}
    \includegraphics[width=.24\textwidth]{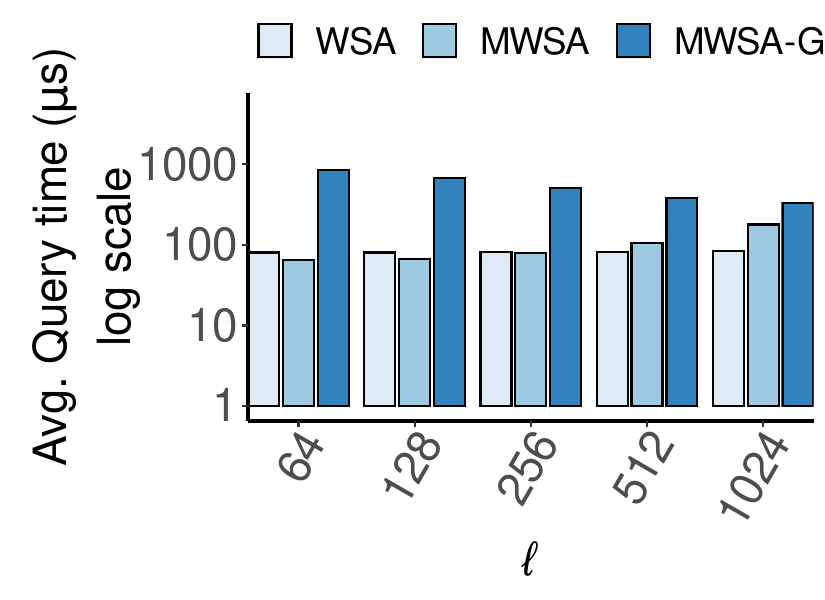}
    }
    \\    
    \subfloat[][\Z]{
    \hspace{-2mm}
    \includegraphics[width=.24\textwidth]{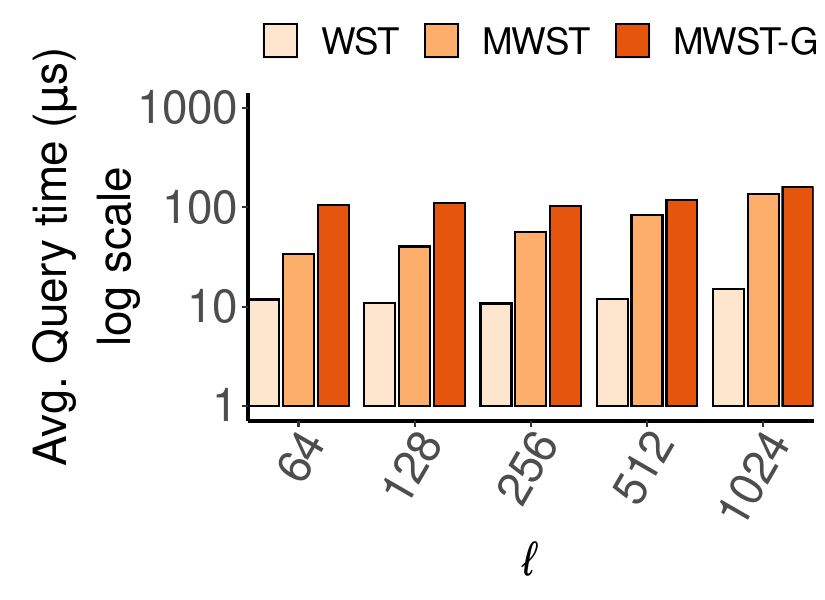}
    }
    \subfloat[][\Z]{
    \hspace{-2mm}
    \includegraphics[width=.24\textwidth]{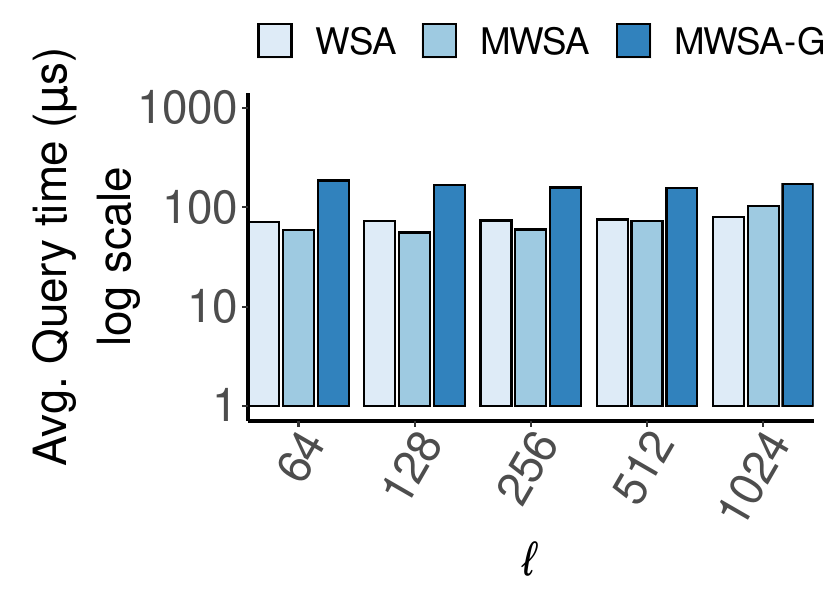}
    }
    \caption{Average query time (log scale, $\mu s$) vs. $
    \ell$.}\label{fig:query_time1}
    
    \vspace{+2mm}       
    \subfloat[][\SARS]{
    \hspace{-2mm}
    \includegraphics[width=.24\textwidth]{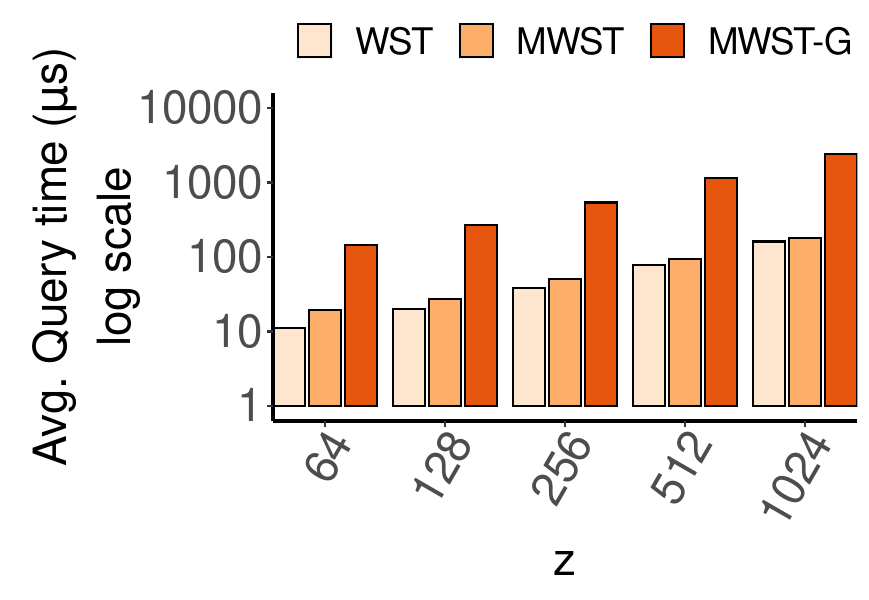}
    }
      \subfloat[][\SARS]
{
    \includegraphics[width=.24\textwidth]{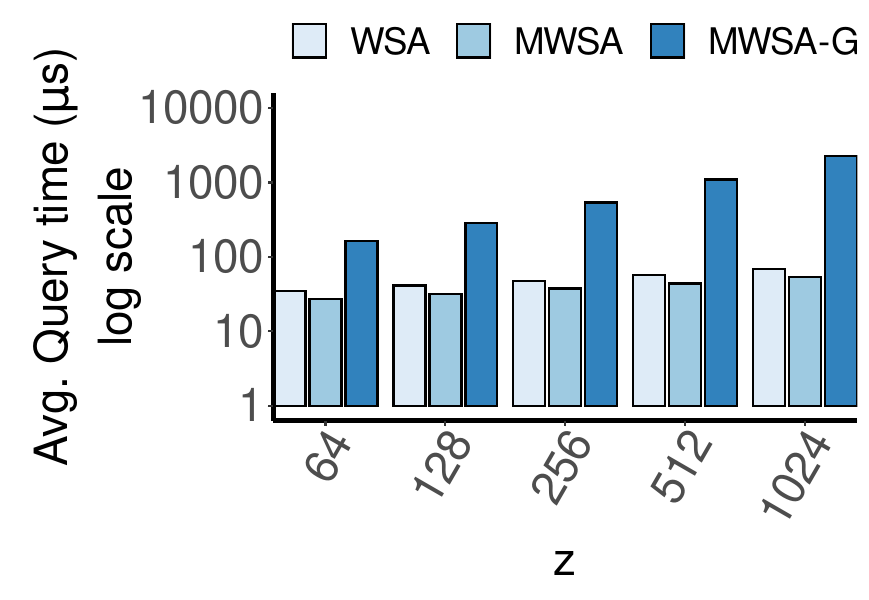}
}
     \subfloat[][\EFM]{
    \hspace{-2mm}
    \includegraphics[width=.24\textwidth]{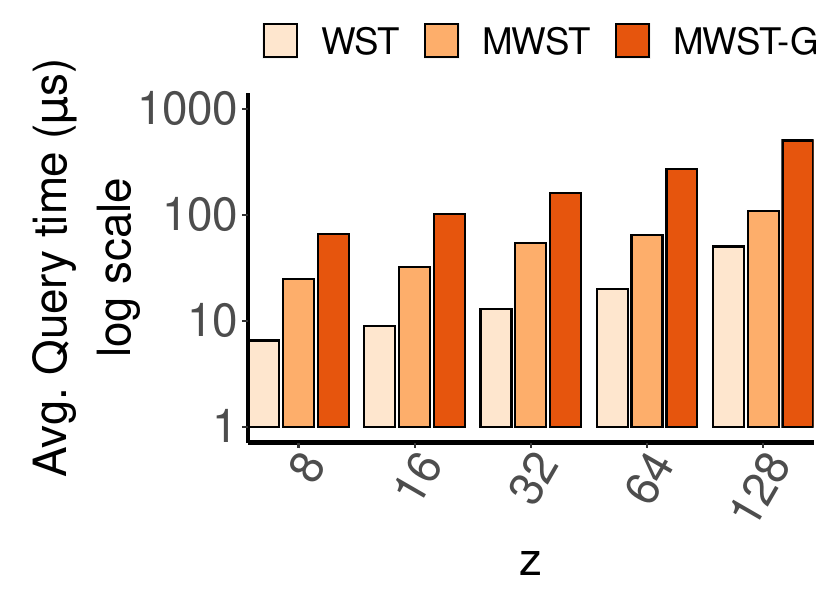}
    }
    \subfloat[][\EFM]{
\hspace{-2mm}
    \includegraphics[width=.24\textwidth]{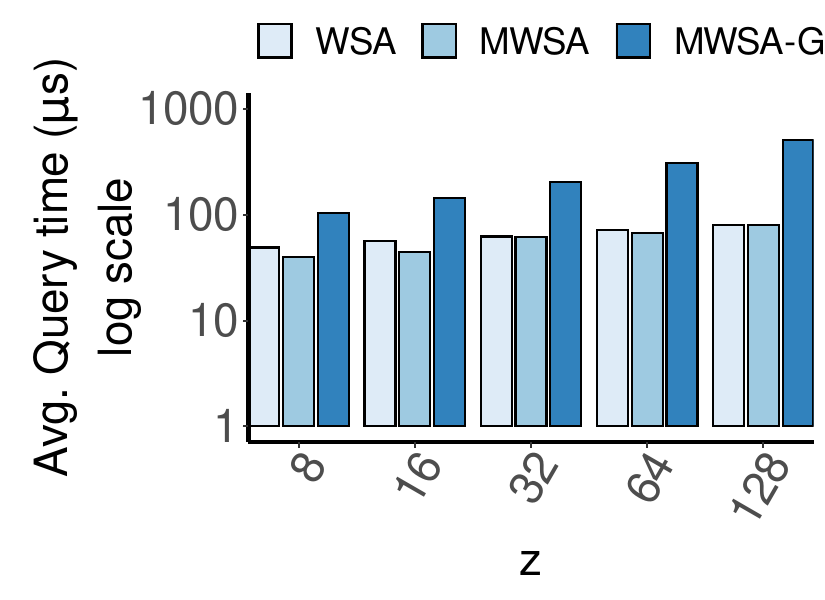}
    }
    \\
    \subfloat[][\Z]{
    \hspace{-2mm}
    \includegraphics[width=.24\textwidth]{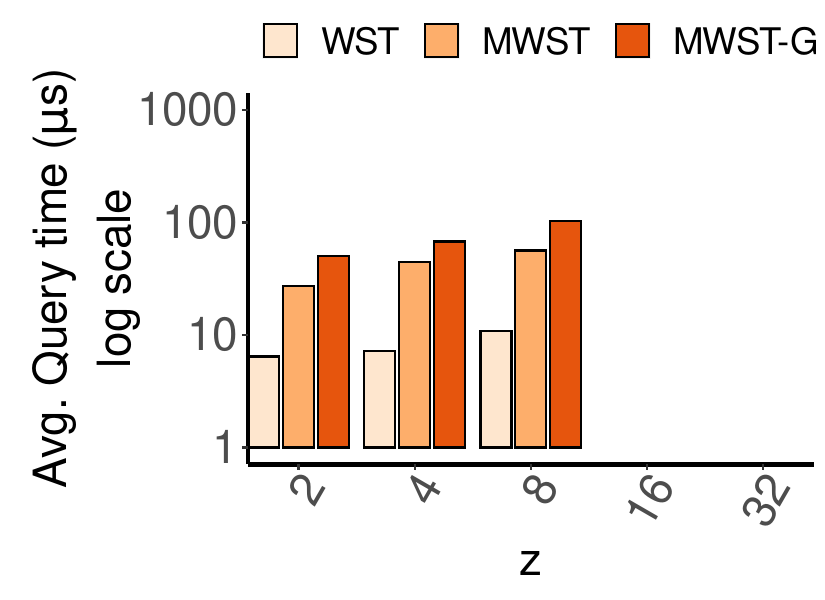}\label{fig:querytime2_tree3}
    } 
 \subfloat[][\Z]
{
    \includegraphics[width=.24\textwidth]{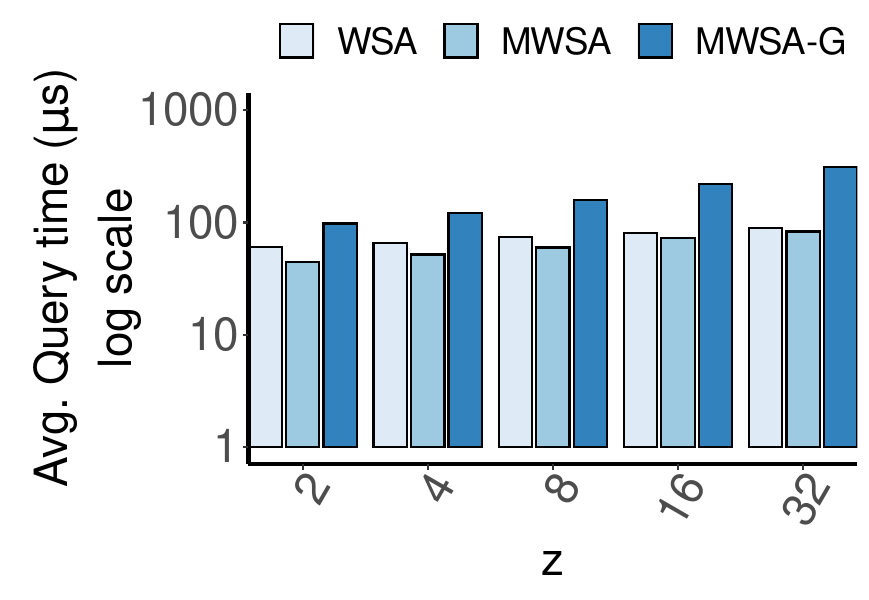}
}
    \caption{Average query time (log scale, $\mu s$) vs. $z$. The tree-based indexes for \Z (Fig.~\ref{fig:querytime2_tree3})  needed $>$ 252GB when $z\geq 16$ and hence could not be constructed.}\label{fig:query_time2}
\end{figure*}
\begin{figure*}
\centering
    \subfloat[][\EFM]{
    \hspace{-2mm}
    \includegraphics[width=.22\textwidth]
    {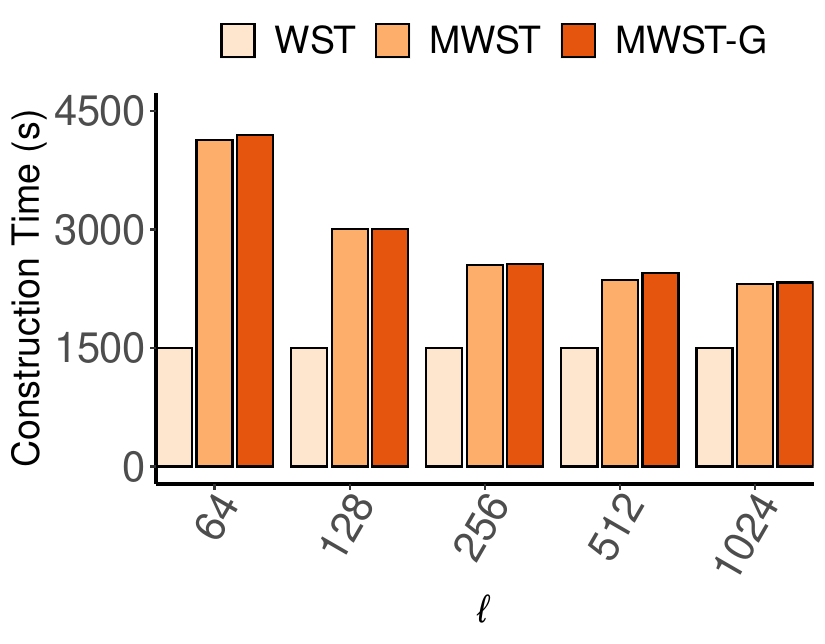}
    }
    \subfloat[][\EFM]{
    \hspace{-2mm}
    \includegraphics[width=.22\textwidth]
    {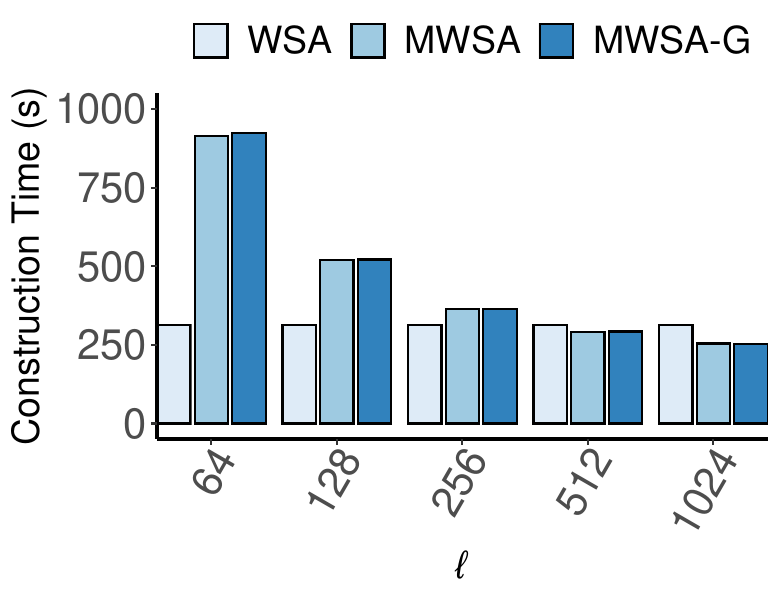}
    }
    \subfloat[][\EFM]{
    \hspace{-2mm}
    \includegraphics[width=.22\textwidth]{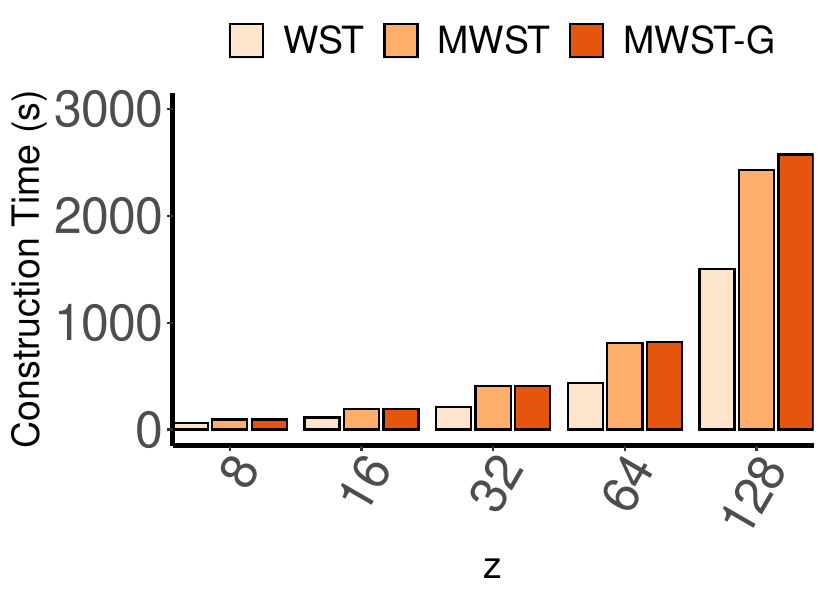}
    } 
     \subfloat[][\EFM]{
    \hspace{-2mm}
    \includegraphics[width=.22\textwidth]
    {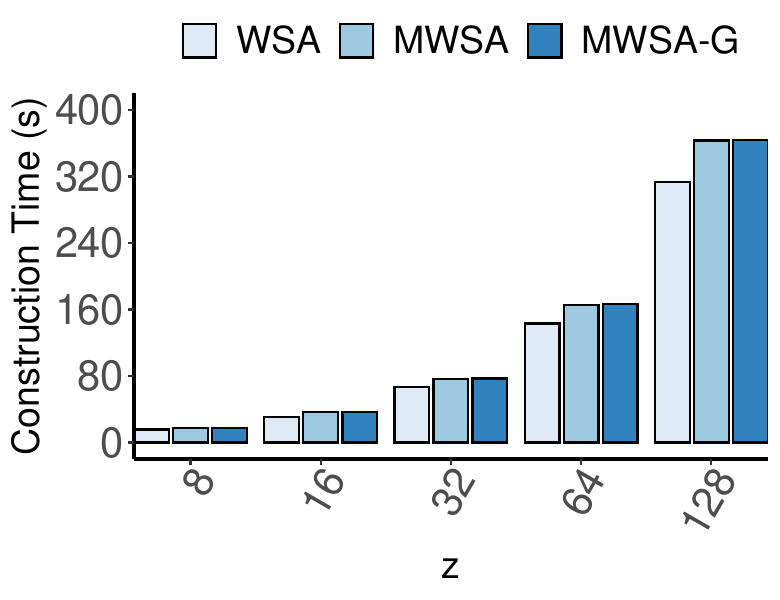}
    }
    \caption{(a, b) Construction time ($s$) vs. $\ell$ for \EFM. (c, d) Construction time ($s$) vs. $z$  for \EFM. The results for \SARS and \Z were analogous.}\label{fig:construction_time1}
\end{figure*}

\noindent{\bf Parameters.}~For every weighted string of length $n$, every pattern length $m\in\{64,128,256,512,1024\}$, and every $z$ we used, we selected $\lfloor nz/200 \rfloor$ patterns from the $z$-estimation of the weighted string, uniformly at random, to account for the different $n$ and $z$ values. For example, for \Z whose length is $n=35,194,566$, and for $z = 32$, we have selected $5,631,130$ patterns uniformly at random from its $32$-estimation. The default $z$ for \SARS, \EFM, \Z, \RS, and $\RS_{n,\sigma}$ was 
$1024$, $128$, $8$, $16$, and $16$,  and led to $z$-estimations with sizes of several MBs; see Table~\ref{table:data}. The parameter  
$\ell$ was set to $m$, and the default $m$ value was $256$. 

\noindent{\bf Implementations.}~We used the implementations of the state-of-the-art indexes \WST and \WSA  from~\cite{DBLP:journals/iandc/BartonK0PR20} and~\cite{DBLP:journals/jea/Charalampopoulos20}, respectively. We implemented: (1) \MWSTG, the algorithm underlying our Theorem~\ref{the:2d-structure}. (2) \MWST, a simplified version of \MWSTG that drops the 2D grid and performs pattern matching as described in Section~\ref{sec:fast_pm}. (3) \MWSA and \MWSAG, the \emph{array-based} versions of \MWST and \MWSTG, respectively. Note that an in-order DFS traversal of the tree gives the array.
(4) \MWSTSE, the space-efficient construction of \MWST underlying Theorem~\ref{the:space_efficient}. 
In all implementations, we used Karp-Rabin fingerprints~\cite{DBLP:journals/ibmrd/KarpR87} 
to compute the minimizers. 

\noindent{\bf Measures.}~We used all four relevant measures of efficiency (see Introduction): index size; query time; construction space; and construction time. To measure the query and construction time, we used the \texttt{chrono C++} library. To measure the index size, we used the \texttt{malloc2 C++} function. To measure the construction space, we recorded the maximum resident set size  
using the \texttt{/usr/bin/time -v} command. 

\noindent {\bf Environment.}~All experiments ran using a single AMD EPYC 7282 CPU at 2.8GHz with 252GB RAM under GNU/Linux. All methods were implemented in \texttt{C++} and compiled with \texttt{g++} (v.~12.2.1) at optimization level \texttt{-O3}. 

\noindent{\bf Code and Datasets.}~The code and all datasets are available at \url{https://github.com/solonas13/ius} under GNU GPL v3.0.

\subsection{Evaluating our Minimizer-based Indexes}


\noindent{\bf Index Size.}~Figs.~\ref{fig:index_size1} and~\ref{fig:index_size2} show that our tree-based (resp.~array-based) indexes occupy \emph{up to two orders of magnitude less space} than \WST (resp.~\WSA). The size of our indexes decreases with $\ell$ and increases with $z$ (see Theorem~\ref{the:2d-structure}).  
Furthermore, the array-based indexes occupy several 
times less space than the tree-based ones, as it is widely known~\cite{DBLP:journals/jacm/KarkkainenSB06}. For example, note from Figs.~\ref{fig:index_size_vs_ell_tree1} and~\ref{fig:index_size_vs_ell_array1} that for $\ell=1024$, \WST occupied  126GB of space, whereas our \MWST 900MB and \MWSA only 204MB! As expected, our grid-based indexes \MWSTG and \MWSAG occupy a slight amount of extra space compared to \MWST and \MWSA, respectively.  

\begin{figure*}
    \hspace{-2mm}
    \subfloat[][\EFM]{\label{fig:construction_space_se_efm_ell}    
    \includegraphics[width=.24\textwidth]
    {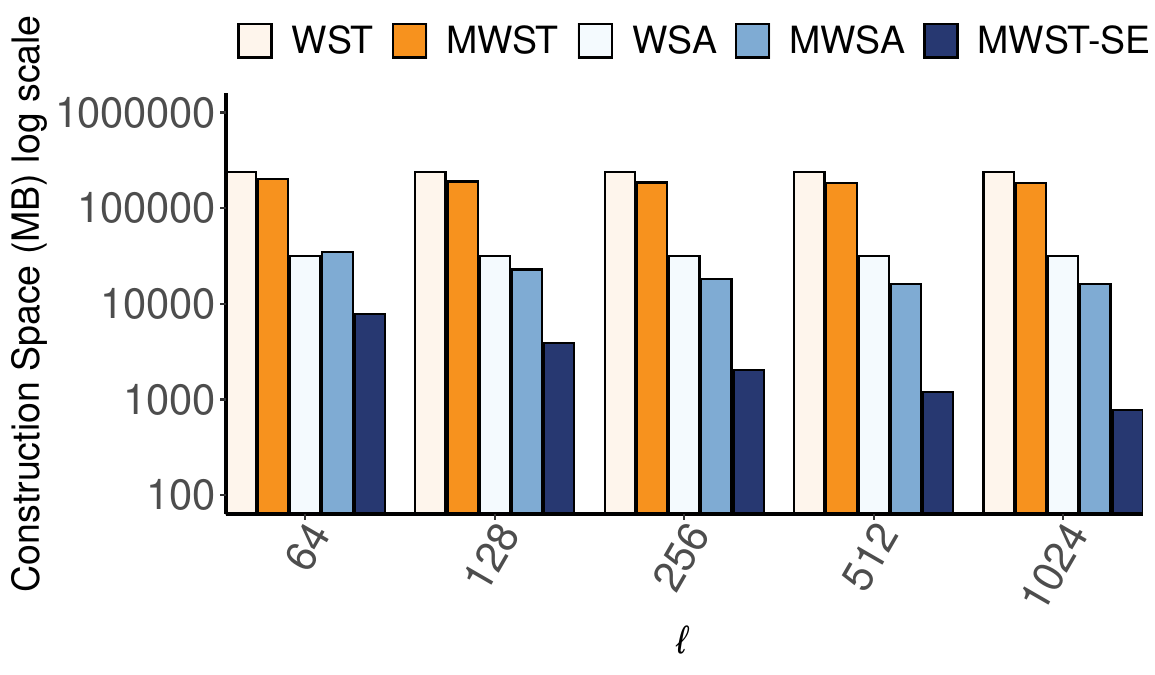}
}
    \subfloat[][\Z]{\label{fig:construction_space_se_z_ell}
    \hspace{-2mm}
    \includegraphics[width=.235\textwidth]{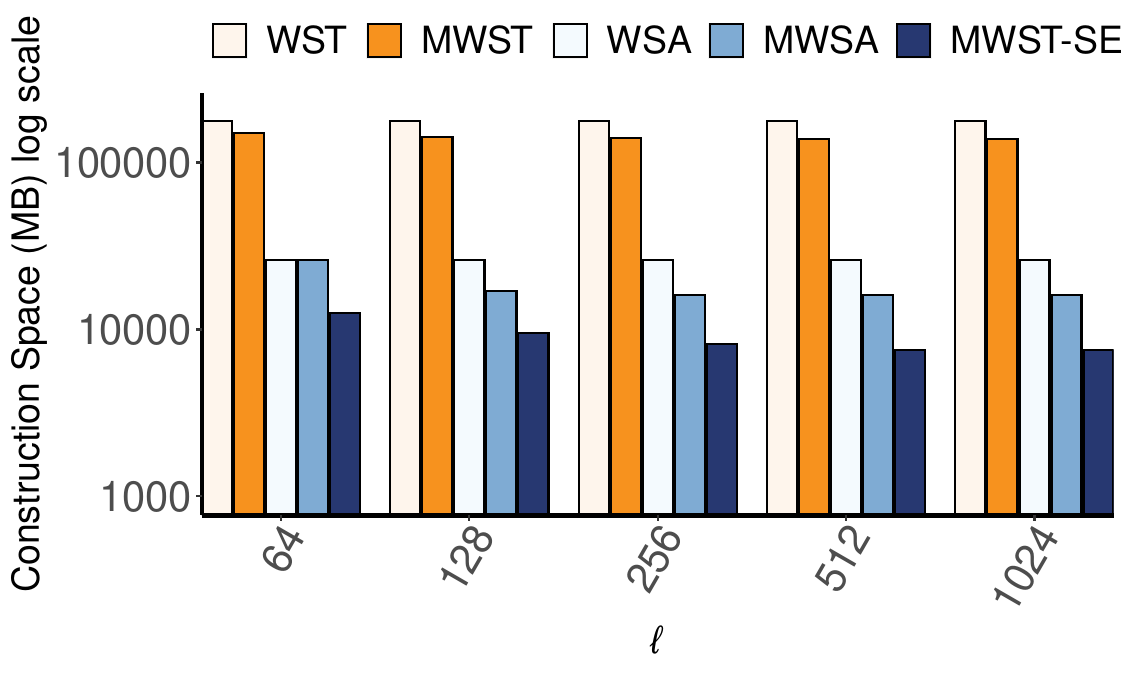}
}
\subfloat[][\EFM]{
    \hspace{-2mm}\label{fig:construction_space_se_efm_z} 
    \includegraphics[width=.24\textwidth]{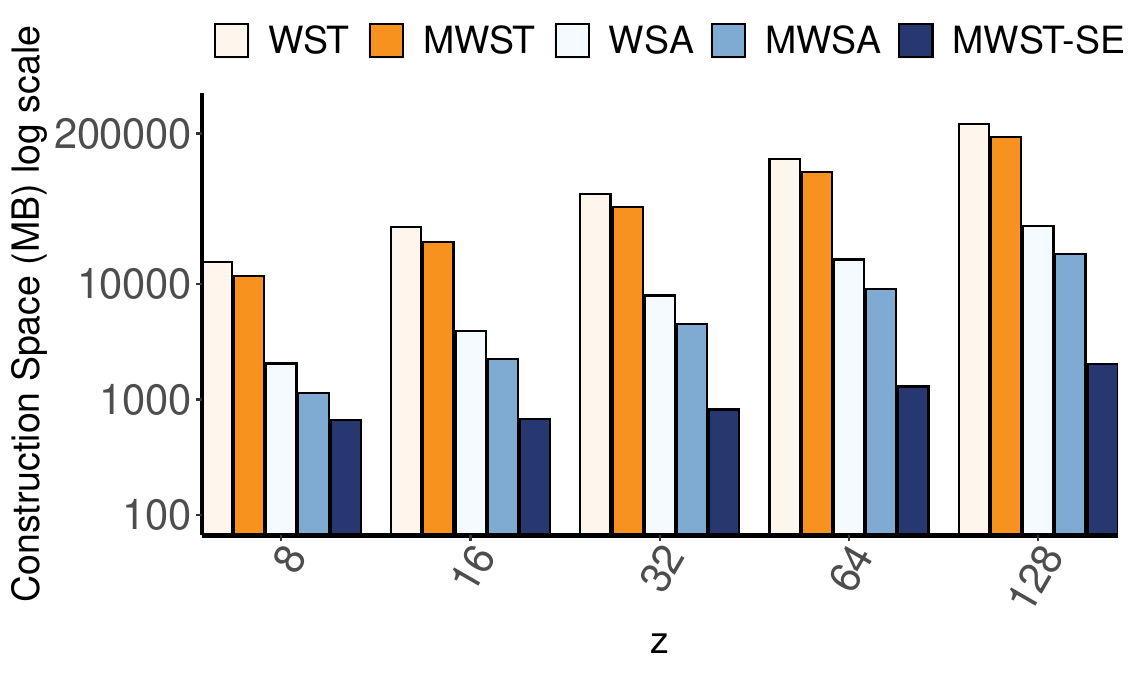}
    }
    \subfloat[][\Z]{\label{fig:construction_space_se_d}
    \hspace{-2mm}
    \includegraphics[width=.24\textwidth]{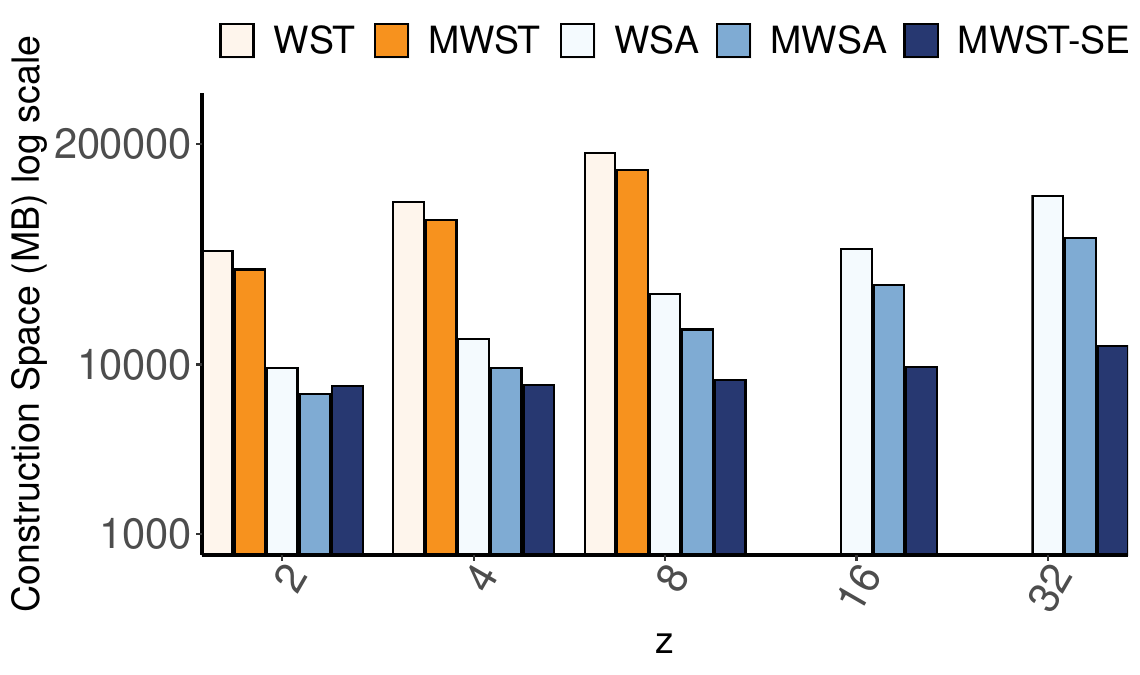}
    }
    \caption{Construction space (log scale, MB) vs: (a, b) $
    \ell$. (c, d) $z$. \WST and \MWST for \Z (Fig.~\ref{fig:construction_space_se_d})  needed $>$ 252GB when $z\geq 16$ and hence could not be constructed. The results for \SARS were analogous.}
   \label{fig:construction_space_se}
\centering
      \subfloat[][$\RS$]{
    \hspace{-2mm}\label{fig:sen_vs_ell_construction_space}
    \includegraphics[width=.24\textwidth]
    {}
    }
    \subfloat[][$\RS$]{
    \hspace{-2mm}\label{fig:sen_vs_z_construction_space}
    \includegraphics[width=.24\textwidth]
       {}
    }      
       \subfloat[][$\RS_{1,16}, \ldots,\RS$]{
    \hspace{-2mm}\label{fig:sen_vs_sigma_construction_space}
    \includegraphics[width=.24\textwidth]{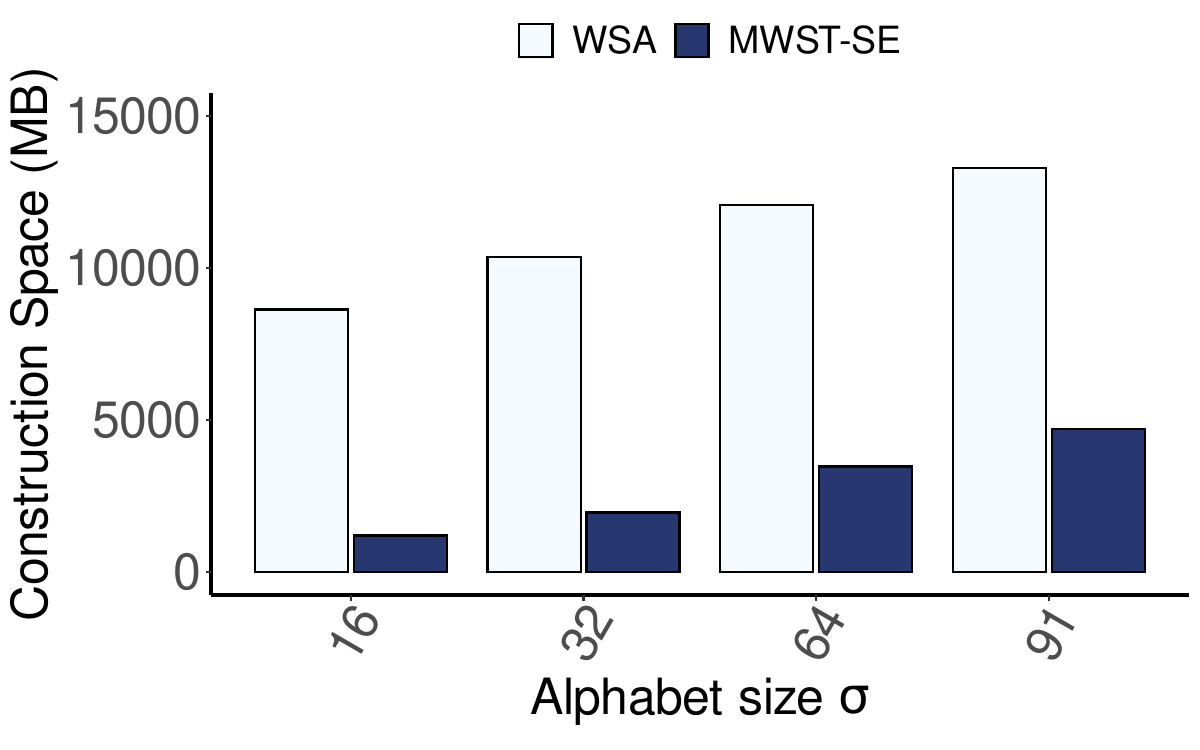}
    }
    \subfloat[][$\RS_{1,32}, \ldots,\RS_{8,32}$]{
    \hspace{-2mm}\label{fig:sen_vs_n_construction_space}
\includegraphics[width=.24\textwidth]{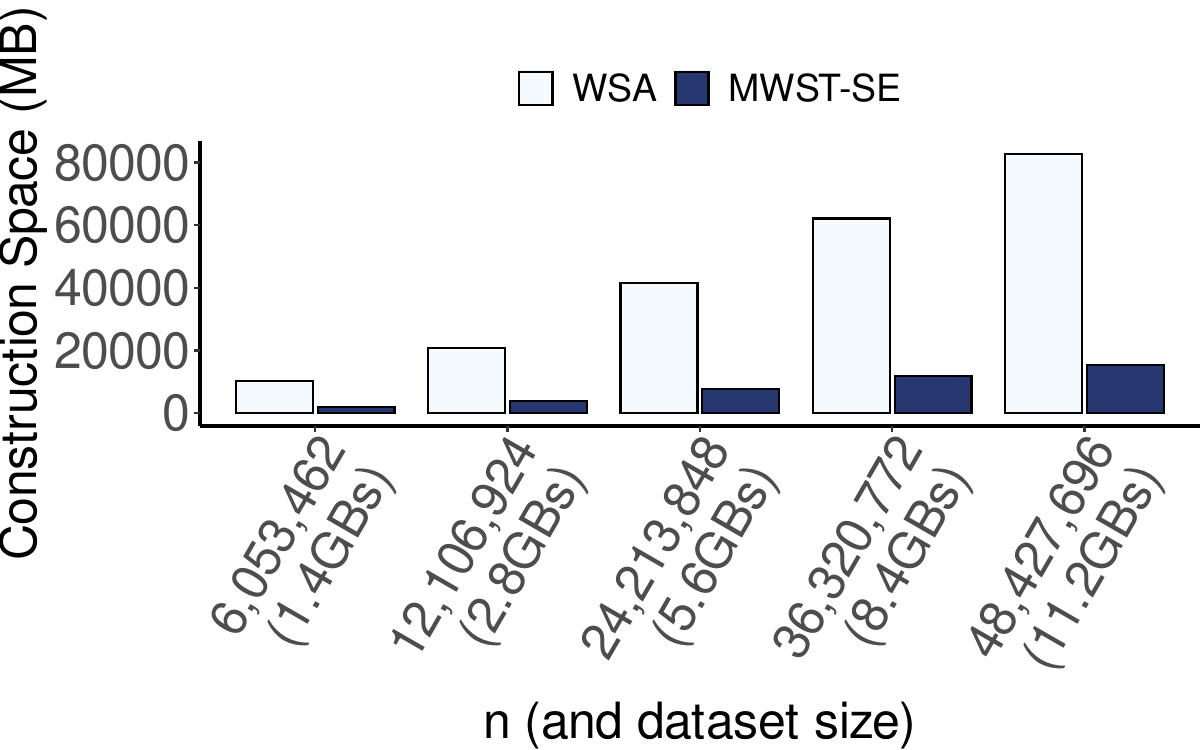}
    }
    \caption{Construction space (MBs) vs. 
    (a) $\ell$, (b) $z$, (c) $\sigma$, and (d) $n$ (and dataset size).}\label{fig:construction_space_rssi}    
    \subfloat[][\EFM]{
    \hspace{-4mm}\label{fig:construction_time_se_efm_ell} 
    \includegraphics[width=.24\textwidth]
    {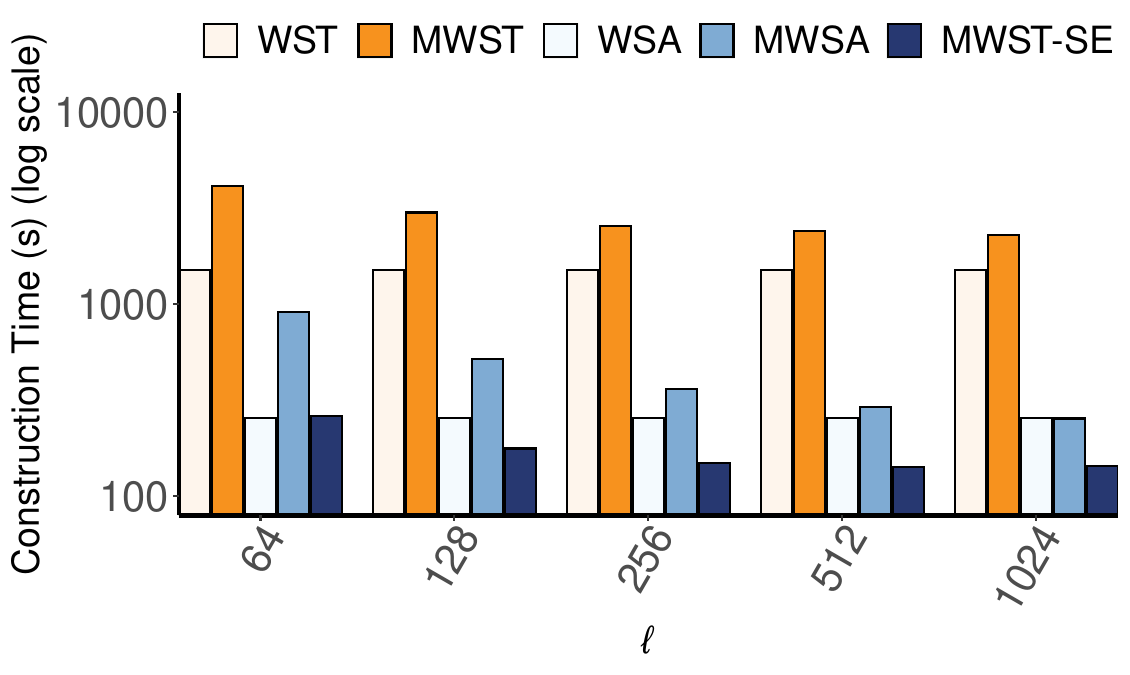}
    }
    \subfloat[][\Z]{\label{fig:construction_time_se_z_ell} 
    \hspace{-3mm}
    \includegraphics[width=.25\textwidth]
    {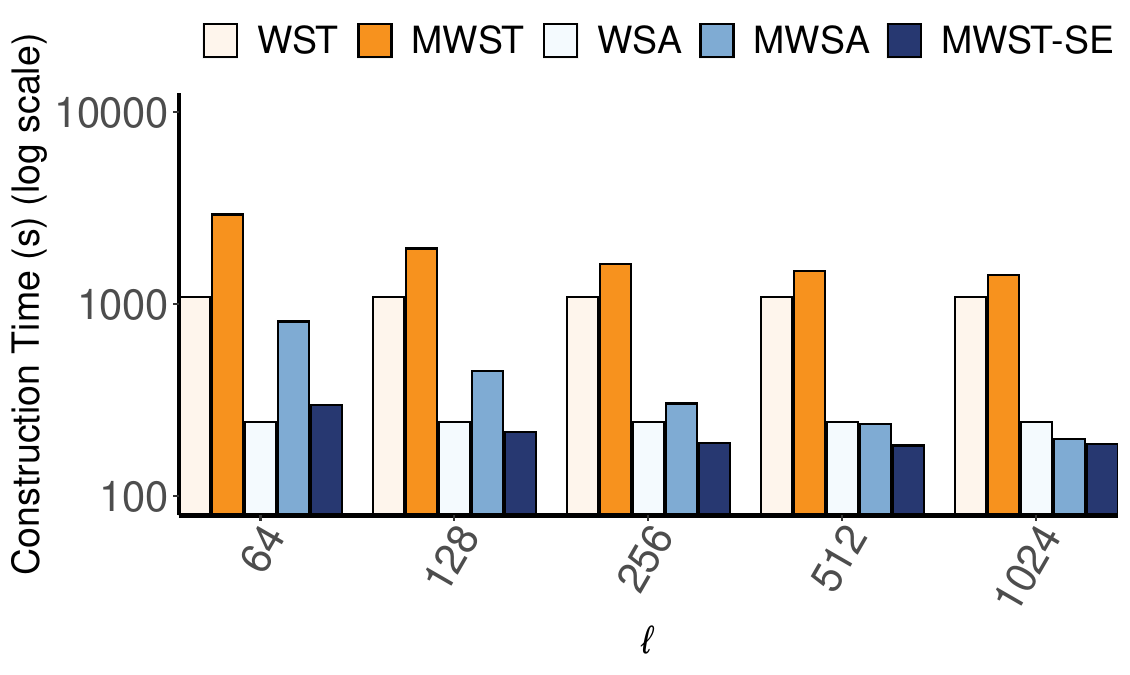}
    }
    \subfloat[][\EFM]{\label{fig:construction_time_se_efm_z} 
    \hspace{-4mm}
    \includegraphics[width=.24\textwidth]
    {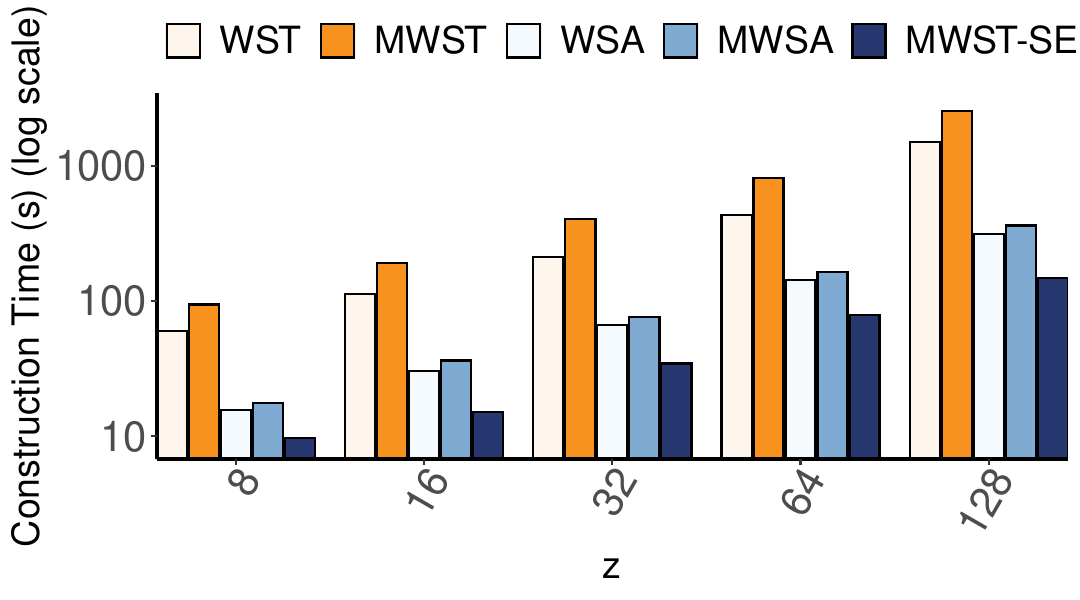}
    }
    \subfloat[][\Z]{\label{fig:construction_time_se_d}
    \hspace{-3mm}
    \includegraphics[width=.24\textwidth]
    {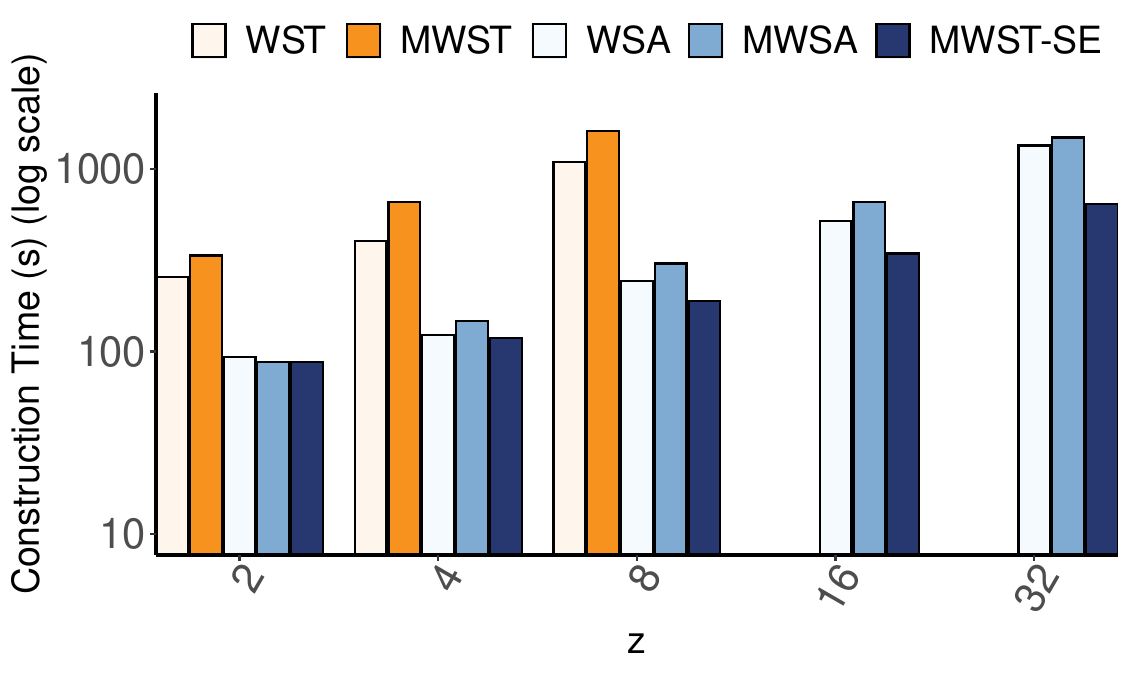}
    }
    \caption{Construction time (log scale, $s$) vs: (a, b) $
    \ell$. (c, d) $z$. \WST and \MWST for \Z (Fig.~\ref{fig:construction_time_se_d}) needed $>$ 252GB when $z\geq 16$ and hence could not be constructed. The results for \SARS were analogous.}
   \label{fig:construction_time_se}
   \end{figure*}
   \begin{figure*}[t]
\centering
\subfloat[][$\RS$]{\label{fig:sen_vs_ell_construction_time}
    \hspace{-2mm}
\includegraphics[width=.24\textwidth]
{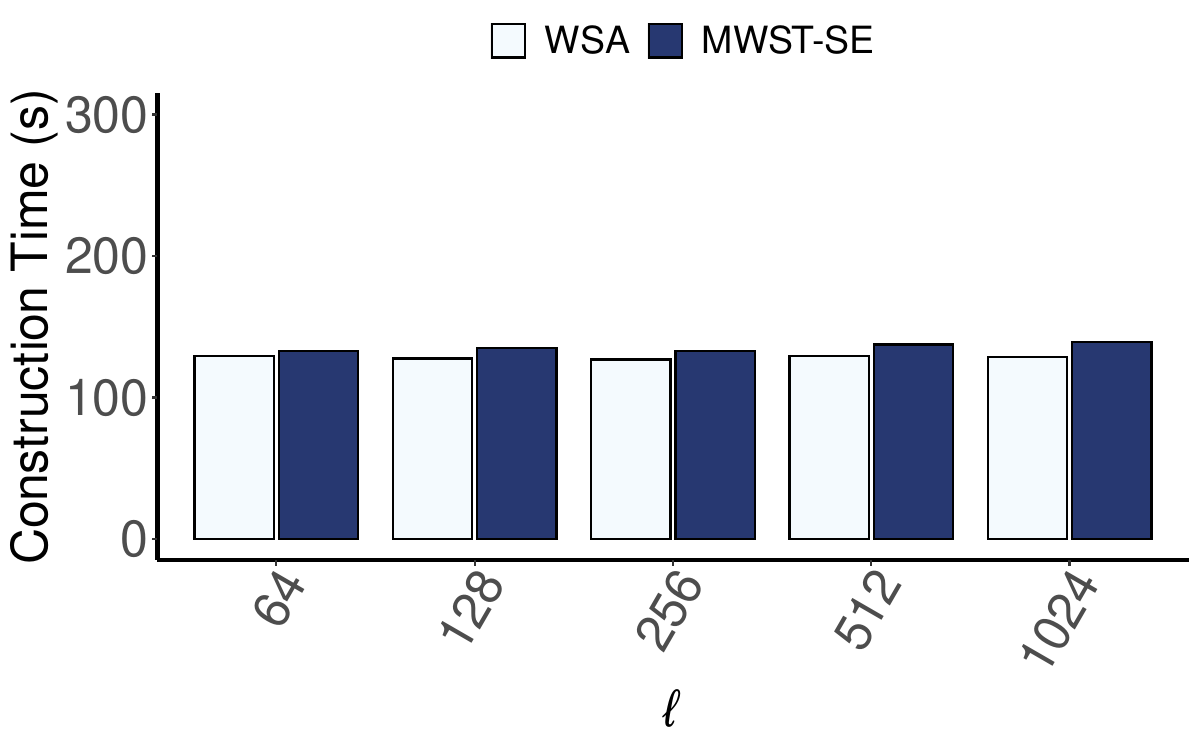}     
    }
 \subfloat[][$\RS$]{\label{fig:sen_vs_z_construction_time}
    \hspace{-2mm}
    \includegraphics[width=.24\textwidth]
    {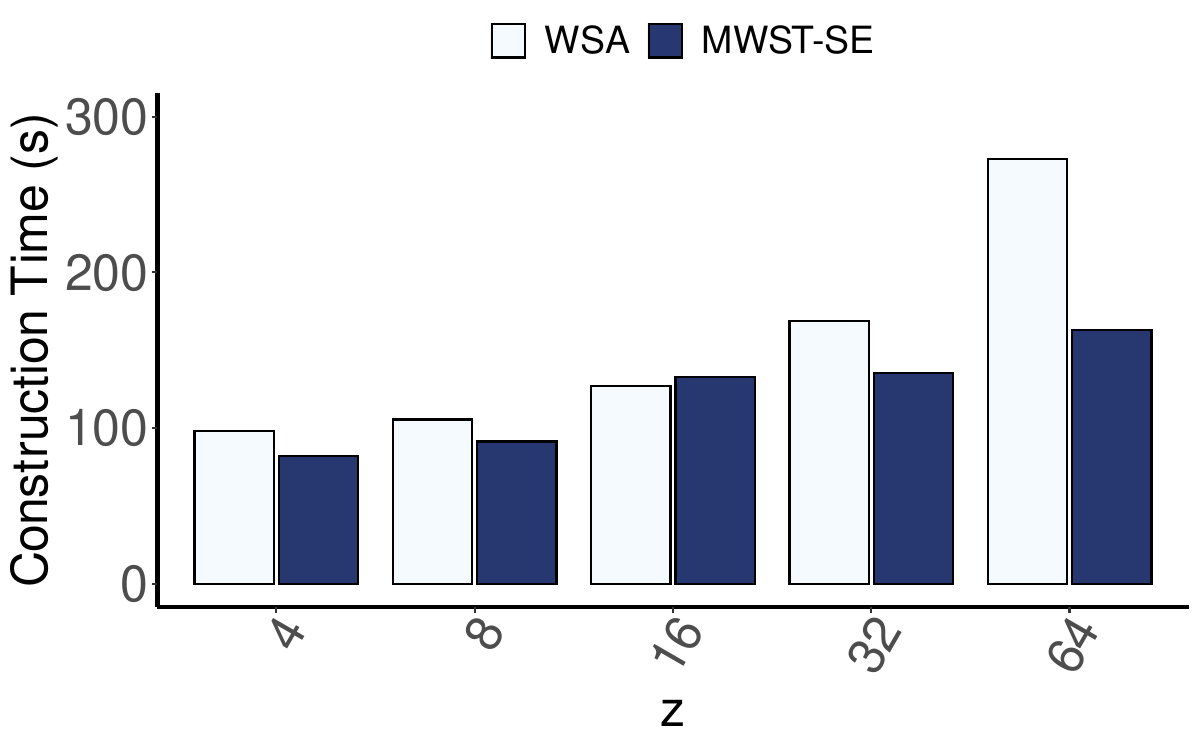}
    }   
    \subfloat[][$\RS_{1,16}, \ldots,\RS$]{
    \hspace{-2mm}\label{fig:sen_vs_sigma_construction_time}
    \includegraphics[width=.24\textwidth]{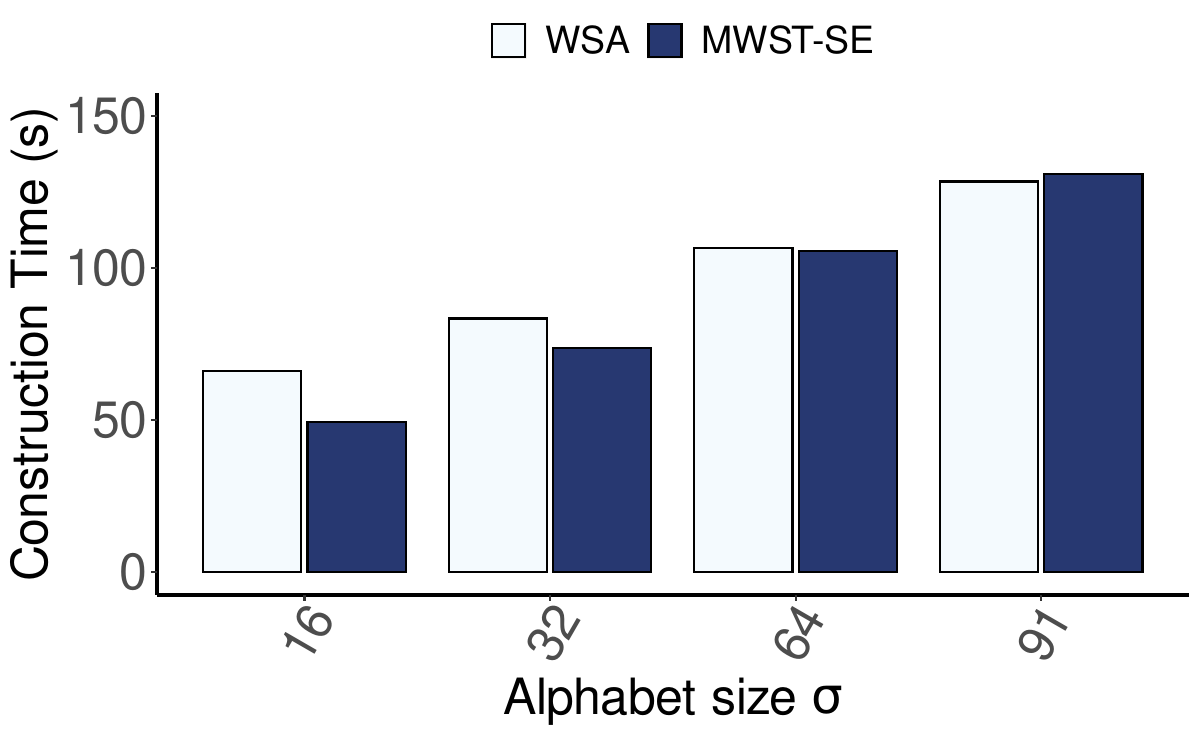}
    }
     \subfloat[][$\RS_{1,32}, \ldots,\RS_{8,32}$]{
    \hspace{-2mm}\label{fig:sen_vs_n_construction_time}
    \includegraphics[width=.24\textwidth]{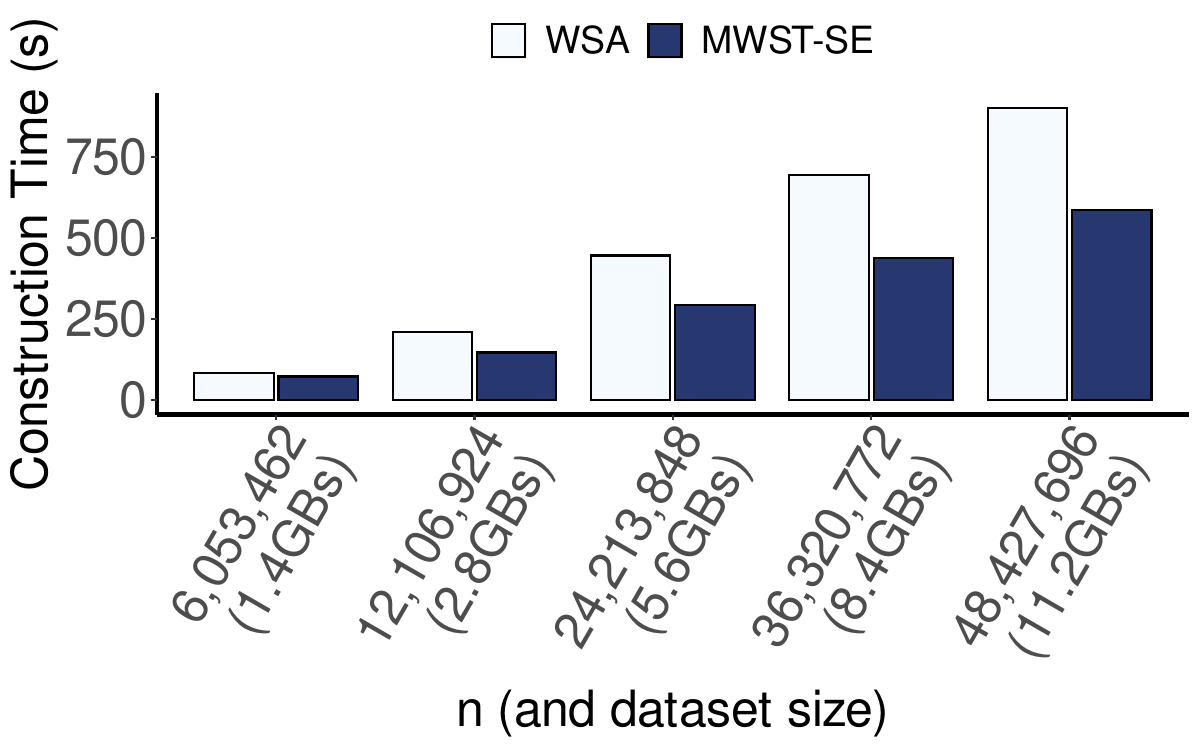}
    }
    \caption{Construction time ($s$) vs. (a) $\ell$, (b) $z$, (c) $\sigma$, and (d) $n$ (and dataset size). 
    }\label{fig:construction_time_rssi}
\end{figure*}

\noindent{\bf Construction Space.}~Figs.~\ref{fig:construction_space1} and~\ref{fig:construction_space2} show that our tree-based (resp.~array-based) indexes outperform \WST (resp.~\WSA) \emph{by $27\%$ (resp.~$61\%$)} on average; the results for \SARS are analogous to those for \EFM. Although our construction algorithm (see Theorem~\ref{the:2d-structure}) takes $\Theta(nz)$ space in any case, it carries
lower constant factors than that of \WST. That is, in practice, the index construction space for our tree-based indexes decreases as $\ell$ increases and increases with $z$ -- see Lemmas~\ref{lem:minimizer tree size} and~\ref{lem:2demptiness}, which show a clear dependency on the number $\cO(\frac{nz}{\ell})$ of sampled minimizers. 
The same explanation holds for \WSA and our array-based indexes.  
Again, as it is widely known~\cite{DBLP:journals/jacm/KarkkainenSB06}, the array-based indexes outperform the tree-based ones in terms of space; and, as expected, \MWSTG and \MWSAG need a very slightly larger construction space than \MWST and \MWSA, respectively.  

\noindent{\bf Query Time.}~Figs.~\ref{fig:query_time1} and~\ref{fig:query_time2} show that 
\MWST is generally slower than \WST because its search operation is more costly than that  of \WST (see Theorem~\ref{the:2d-structure}). 
However, \MWSA is competitive to \WSA since, due to the  smaller size of the former, the binary search operation used in query answering (pattern matching)~\cite{DBLP:journals/siamcomp/ManberM93} becomes faster. This is \emph{very encouraging} given its substantially smaller  index size and index construction space across all $z$ and $\ell$ values.  
Furthermore, \MWST and \MWSA outperform \MWSTG and \MWSAG, respectively. This is in line with the findings of~\cite{DBLP:journals/pvldb/AyadLP23,DBLP:conf/esa/LoukidesP21}, which show that simple verification schemes like the one developed by us in Section~\ref{sec:fast_pm}, are faster than  grid approaches, even if the theoretical guarantees provided by the former are weaker. Note in Fig.~\ref{fig:query_time1} that the query time of the grid-based indexes is not negatively affected by increasing $\ell$, unlike \MWST and \MWSA. In particular, Fig.~\ref{fig:query_time1} shows that, although the grid-based indexes are slower, the difference in performance \emph{decreases} as $\ell$ grows. This is because, as $\ell$ grows, the grid becomes smaller and the simple verification schemes become more expensive, which highlights the benefit of Theorem~\ref{the:2d-structure}.
The query time of all indexes increases with $z$, as expected by their time complexities. The query time of \WST and \WSA does not depend on $\ell$, as expected by their time complexities. 

\noindent{\bf Construction Time.}~Fig.~\ref{fig:construction_time1} shows that \WST and \WSA can be constructed in less time than our tree-based and array-based indexes, respectively. This is expected as our construction is much more complex than that of \WST and \WSA~\cite{DBLP:journals/iandc/BartonK0PR20,DBLP:journals/jea/Charalampopoulos20}. In particular, although our construction algorithm (see Theorem~\ref{the:2d-structure}) takes $\Theta(nz)$ time in any case, it carries higher constant factors than those of \WST and \WSA. This is expected as, in some sense, our construction largely follows the one of \WST and \WSA but it does additional work implied by the sampling mechanism.
In practice, the construction time decreases as $\ell$ increases and increases with $z$ -- see Lemmas~\ref{lem:minimizer tree size} and~\ref{lem:2demptiness}, which show a clear dependency on the number $\cO(\frac{nz}{\ell})$ of sampled minimizers. On average, \MWST requires $70\%$ (resp.~\MWSA requires $32\%$) 
more time to be constructed than \WST (resp.~\WSA). \MWSTG and \MWSAG have similar construction time to \MWST and \MWSA, respectively. 

\subsection{Evaluating our Space-efficient Index Construction} 

\vspace{+1mm}
\noindent{\bf Construction Space.} Fig.~\ref{fig:construction_space_se} shows that the construction space of \MWSTSE is up to \emph{one order of magnitude smaller} than that of \WSA and \emph{$52$ times smaller} than that of \MWST on average. The construction space of \MWSTSE decreases with $\ell$ and increases with $z$, as expected by Theorem~\ref{the:space_efficient}. For example, in Fig.~\ref{fig:construction_space_se_efm_ell} \MWSTSE needs only 772MB  of memory to be constructed when $\ell=1024$, while \WSA and \MWST need over 32GBs and 183GBs, respectively. Even for $\ell=64$, the construction space of \MWSTSE is $4$ times smaller than that of \WSA and more than $25$ times smaller than that of \MWST. On the \RS datasets, \MWSTSE substantially outperformed all other indexes. Fig.~\ref{fig:construction_space_rssi} shows the results against the best competitor, \WSA. \MWSTSE needed $4$ times less space than \WSA on average. Although $z$ and $\ell$ appear in the space bound of Theorem~\ref{the:space_efficient}, and $\sigma$ does not, the impact of the latter is larger on the construction space. The reason is that the maximum resident set size we measure includes the memory for reading the dataset file, which increases with $\sigma$ for all indexes. Note that the construction space for both indexes scaled linearly with $n$, as can be seen in Fig.~\ref{fig:sen_vs_n_construction_space}. 

\vspace{+1mm}
\noindent{\bf Construction Time.} Fig.~\ref{fig:construction_time_se} shows that the construction time of \MWSTSE is on average \emph{$53\%$} smaller than that of \WSA, the next fastest index. This is \emph{very encouraging}, as \MWSTSE is quite complex. The construction time of \MWSTSE decreases with $\ell$ and increases with $z$, as expected by Theorem~\ref{the:space_efficient}.
For example, for $\ell=1024$ and $z=128$ in  Fig.~\ref{fig:construction_time_se_efm_ell}, the construction time of \MWSTSE is smaller by $31\%$ (resp.~$16$ times smaller) compared to that of \WSA (resp.~\MWST). This faster construction is a consequence of \WSA and \WST being always of $\Theta(nz)$ size (producing copies of solid factors), while in the extended solid factor trees each solid factor is considered only once. The extra $\cO(\log \ell)$ cost for heap operations is very optimized and in practice comparable with the large constants of the other constructions for reasonable $\ell$ values. 

On the \RS datasets, \MWSTSE substantially outperformed all other indexes. Fig.~\ref{fig:construction_time_rssi} shows the results against the best competitor, \WSA. Now the term $n\sigma=\cO(n)$ prevails in the construction time bound of Theorem~\ref{the:space_efficient}. Thus, the construction time increased linearly with $\sigma$ (Fig.~\ref{fig:sen_vs_sigma_construction_time}), while the impact of $\ell$ and $z$ was smaller. \MWSTSE scaled linearly with $n$ (Fig.~\ref{fig:sen_vs_n_construction_time}). The input datasets we used are in the order of GBs, thus their reading takes much of the construction time.

\subsection{Conclusion of our Experimental Evaluation} 

The most practical solution to $\ell$-\textsc{Weighted Indexing} is to use the \MWSTSE algorithm, which requires the smallest construction space and time (see Figs.~\ref{fig:construction_space_se}, \ref{fig:construction_space_rssi},~\ref{fig:construction_time_se}, and~\ref{fig:construction_time_rssi}) to construct \MWST and then infer \MWSA, the array-based version of \MWST via a standard 
in-order DFS traversal on \MWST\cite{DBLP:journals/siamcomp/ManberM93}, as \MWSA has the smallest index size and a competitive query time to \WSA (see Figs.~\ref{fig:index_size1},~\ref{fig:index_size2},~\ref{fig:query_time1}, and~\ref{fig:query_time2}).

\section{Discussion: Limitations and Future Work}\label{sec:discussion}

The size of our indexes is $\cO(\frac{nz}{\ell}\log z)$ \emph{in expectation} because minimizers have no worst-case guarantees; e.g., in string $\texttt{abcdefg}\ldots$, every position is a minimizer. This could be avoided by using \emph{difference covers}~\cite{DBLP:journals/jacm/KarkkainenSB06} instead. Difference covers are practical to construct, but slash the set of positions by $\sqrt{\ell}$ instead of $\ell$ (the factor that minimizers slash by in real-world datasets). As our focus is on practical schemes for reducing the space, we have resorted to minimizers. There are other sampling schemes slashing the set of positions by $\ell$ in the worst case~\cite{DBLP:conf/stoc/KempaK19}, such as \emph{string synchronizing sets}; these are however impractical to construct (see~\cite{DBLP:conf/esa/Dinklage0HKK20}, for example) and not directly applicable to the uncertain setting. 

Our methods do not explicitly take advantage of the $n$ probability distributions. The following may be explored to improve our indexes: (i) use the percentage $\Delta$ of positions where more than one letter has a probability of occurrence larger than $0$ as a complexity parameter; (ii) use lower or upper bounds on the letter probabilities depending on the underlying application. For instance, in sequencing data, a quality score is assigned to a single nucleotide in every position $i$, which is then translated to a probability $p$. The other 3 nucleotides can be assigned to probability $(1-p)/3$. However, based on the vicinity of the $i$th position or on domain knowledge, we could rather use upper or lower bounds for the remaining letters.

Our indexes rely explicitly or implicitly on the $z$-estimation of the input weighted string. This is a family $\mathcal{S}$ of $\lfloor z\rfloor$ strings of length $n$. Although the construction of $\mathcal{S}$ is combinatorially correct~\cite{DBLP:journals/iandc/BartonK0PR20} it is oblivious to domain knowledge. It may thus generate solid factors that are completely impossible to occur in a real-world dataset. For instance, in biology, such implausible factors are termed \emph{absent} or \emph{avoided}~\cite{DBLP:journals/almob/AlmirantisCGIMP17}. If one had this knowledge, they could easily trim these factors from the index by first querying them during a post-processing step. It seems much more challenging though to amend our data model to include this knowledge while constructing the index.

\section*{Acknowledgments}
This work was partially supported by the PANGAIA and ALPACA projects that have received
funding from the European Union’s Horizon 2020 research and innovation programme under the
Marie Skłodowska-Curie grant agreements No 872539 and 956229, respectively.
Wiktor Zuba received funding from the European Union's Horizon 2020 research and innovation
programme under the Marie Skłodowska-Curie grant agreement Grant Agreement No 101034253.

\bibliographystyle{plain}
\bibliography{references.bib}

\end{document}